\newcounter{shmanman}
\newtheorem{theorem}{Theorem}
\newtheorem{claim}[shmanman]{Claim}
\newtheorem{corollary}[theorem]{Corollary}
\newtheorem{definition}{Definition}
\newtheorem{lemma}[theorem]{Lemma}
\newtheorem{proposition}[theorem]{Proposition}
\setlist[itemize]{label=---}
\title[]{Active Product Development\\
\vspace{0.5cm}
\small{July 2025}
\vspace{-2cm}
}
\author{Santiago Oliveros$^{\dagger}$}
\thanks{$^{\dagger}$ Department of Economics, University of Bristol, United Kingdom, e-mail: \href{mailto: s.oliveros@brisatol.ac.uk}{ s.oliveros@bristol.ac.uk}.\\
 I thank Christian Ewerhart, Hans Gersbach, Alastair Langtry, Barton Lee, Marek Pycia, Mauricio Ribeiro, and audiences at University of Bristol, and ETH Zurich.}
\begin{document}

\maketitle

\begin{abstract}

We introduce a dynamic model in which a developer incrementally improves a product of uncertain quality over time, with the quality evolving as a controlled Brownian motion. At each moment in time, the developer can continue exploring by paying a flow cost, restart from a previously attained quality level by paying a fixed cost, or terminate the process by either freely abandoning the project or by incurring a cost to launch the highest quality observed so far. The optimal strategy is characterized by a free boundary of an impulse-controlled Brownian motion, reflecting how the developer’s tolerance for setbacks evolves over time in three distinct stages. In the early stage, setbacks lead to project termination; in the intermediate stage, unsuccessful paths serve as learning opportunities, leading to strategic restarts; and in the final stage, the product is inevitably launched, with further improvements enhancing its final quality in a final push. Our model captures the role of active decision-making in managing uncertainty through a true process of trial and error that can be reversed when necessary. The results highlight the essential interplay of persistence and chance, demonstrating that success hinges not only on avoiding prolonged failure but also on the precise timing of interventions.

\end{abstract}

\medskip

\hangindent=2cm{
\textit{Keywords: } Impulse Control, Search with Recall, Exploration vs. Experimentation,\\
 Innovation, R\&D.}

\vspace{0.5cm}

\hangindent=2cm{ \textit{JEL:} \hspace{0.85cm} C72, C73; D81, D82; O31, O32. }

\thispagestyle{empty}

\pagebreak


\pagenumbering{arabic}

\bigskip

\section{Introduction}

\medskip

\medskip

Innovation is frequently portrayed as the outcome of collaboration within firms, research institutions, or organized R\&D teams. Yet a substantial share originates from individuals acting independently—such as inventors, garage tinkerers, or user-developers—who face uncertain market rewards and are motivated by curiosity or a desire to solve practical problems. These individuals, who operate outside conventional institutional frameworks, have a significant and measurable impact: \cite{vonhippel2005,vonhippel2007book,frazierhippel2018} document widespread user innovation, while W.I.P.O [\citeyear{wipo2019}] shows that a consistent share of U.S. patents are filed by individuals. Iconic examples include the Apple I, developed by Jobs and Wozniak in a garage, which helped launch an entire industry.

Even within firms, innovation is not always a top-down process. Employees may pursue personal or side projects—a phenomenon referred to as “bootlegging” (\cite{criscuoloetal2014})—which operate independently of formal incentives or structured R\&D systems. \cite{amabile2018} emphasizes that creative output often stems from intrinsic rather than extrinsic motivation, with employees frequently solving problems that go beyond their formal job descriptions. A classic case is 3M’s Post-it Note, which emerged from a failed adhesive experiment.\footnote{Dr. Spencer Silver discovered a low-tack adhesive in 1968 and, after informally promoting the idea, his colleague Art Fry recognized its potential as a bookmark in 1974, leading to the commercial launch of the \textit{Post-it Note} in 1980. \cite{postitwiki}} Such bottom-up innovation underscores the need to integrate non-traditional pathways into economic models of technological change.

Innovative efforts—whether by firms or individuals—typically evolve through iterative trial and error. As \cite{thomke1998,lochetal2023} argue, developers must experiment, assess outcomes, and sometimes revert back to prior versions if there are no improvements. In this process, failures provide feedback that guide future refinements.\footnote{Thomas Edison’s testing of filament materials exemplifies this: “I have not failed. I've just found 10,000 ways that won't work.”} In short, innovation involves learning from setbacks, discarding failed attempts, and building upon partial successes to converge on viable solutions. By embracing uncertainty and learning from both success and failure, innovation becomes a strategic process of adaptation and persistence.\footnote{\cite{zuckerberg2017speech}: “Facebook wasn’t the first thing I built. I also built games, chat systems, study tools and music players… I’m not alone. J.K. Rowling got rejected 12 times before publishing Harry Potter. Even Beyoncé had to make hundreds of songs to get ‘Halo.’”} 

The selective persistence of the innovation process reveals a psychological-economic trade-off between \textit{grit} and the capacity to \textit{quit}. developers must choose whether to persevere through uncertainty or abandon unworkable ideas. While persistence may yield breakthroughs, it also risks sunk-cost traps and emotional overinvestment (\cite{duckworth2016}). Navigating this tension is essential for understanding the behavior of both solo and organizational developers, and for designing institutions that support exploration and timely exit.

In this paper, we examine the decision-making process of an intrinsically motivated developer who develops and has the opportunity to launch a product herself.\footnote{Although the developer develops and markets the product, we abstract from standard marketing considerations. See \cite{Dahanhauser2002} on \textit{end-to-end} innovation in the managerial literature.} The development process requires active decision-making to assess whether to continue along the current trajectory, revert to a previously successful version, or abandon the project. Continued development incurs a flow cost, regardless of progress, while launching or restarting the project entails fixed costs, prompting the developer to intervene only intermittently.

We model product development as a Brownian motion with known drift and variance, where the available versions of the product are given by the managed path. Each version's quality maps to a known profit function. Launching involves selecting the best realized version; development creates new versions. At any point, the developer faces a dynamic decision: (i) allowing for the stochastic evolution of the development process, (ii) stopping the process and either launch the current best version or abandon the project for free, or (iii) restarting the development process from a prior best version at a fixed adjustment cost.

Given the positive relationship between quality and profitability, the developer will always launch of restart the process from the highest quality attained. Thus, intervention is governed by a free boundary, a threshold that depends on both the current and maximum attained quality. When the quality drifts too far below the maximum, intervention becomes optimal. If the developer decides to abandon the process, she gets zero payoff and sunk costs are not recovered. If she decides to launch, the process ends and she collects the profits of the launched product net of sunk costs. 

The developer's problem is then an impulse control problem (\cite{harrisonetal1983})—not a passive observation—where the developer intervenes strategically to maximize expected payoff. Since the achieved maximum determines the outside option's value, the free boundary is continuous and only piecewise differentiable—reflecting intervention points. The model captures the developer’s active, intermittent control over product evolution, emphasizing the balance between perseverance and disengagement, and the tension between grit and the desire to quit.

We identify three distinct stages in the development path, each characterized by evolving tolerance for setbacks.\footnote{\cite{book:grubermarquis1969}, \cite{utterback1971}, and \cite{kavadiasH-K2020} identify three phases in innovation: idea generation, problem solving, and implementation.} In the early stage, the free boundary is decreasing in the maximum value; once the process hits it, the project is aborted. Here, major setbacks lead to termination—\textit{quitting}. In the intermediate stage, the developer does not launch or quit but iterates. Restarting from the highest quality prior version is optimal. This stage is characterized by iterative experimentation where setbacks serve as learning opportunities with a developer that shows optimal grit. Over time, tolerance for setbacks decreases and restarts become more frequent. In this stage, the free boundary is an increasing function of the highest attained quality. In the final stage, the product is viable: continued progress enhances its quality, while persistent setbacks prompt a launch rather than a restart of the development process. The free boundary is still increasing in the highest attained quality leading to a higher likelihood of launch over time. This reflects Steve Jobs’ mantra of narrowing focus.\footnote{Steve Jobs: ``Be a yardstick of quality. Some people aren't used to an environment where excellence is expected.''}

Our model highlights the dual role of tenacity and luck in innovation, and the importance of active participation in determining when to persist, restart, or quit. The developer is lucky in avoiding early setbacks and in restarting optimally when quality is high. Since the process is forward-looking, optimal intervention timing is only clear ex post. In hindsight, the developer can assess decisions, appreciating the value of persistence or recognizing that quitting earlier would have been better due to high costs incurred.

\medskip

The rest of the paper is organized as follows. In the next section, we position our work within the broader academic literature. Section~\ref{sec: model} introduces the model and the underlying stochastic processes, establishing the theoretical framework that supports our analysis. We then proceed to solve the model in a stepwise fashion. First, we derive the necessary and sufficient conditions for a candidate value function to represent the optimal strategy. We then characterize this function in terms of a free boundary, which gives rise to a differential equation governing its evolution. Although a closed-form solution is not available, we provide a qualitative analysis of the developer's behavior implied by the structure of the free boundary. Section~\ref{sec: compa} examines how changes in the fixed costs of restarting development and launching the product influence the developer’s optimal decisions. All technical proofs are provided in the Appendix.

\section{Literature Review}

The unpredictable nature of research and development (R\&D) has been central to understanding irreversible investments in innovation. The real options literature (\cite{dixitpindyck1994}) highlights the importance of managerial flexibility in decisions to delay, expand, or abandon innovation projects in response to changing market conditions. Complementing this perspective, the experimentation literature (\cite{kelleretal2005, boltonharris1999}) emphasizes the uncertain and iterative aspects of innovation, particularly within cooperative organizational settings. This literature underscores the role of decision-making and team dynamics in refining projects through continuous information gathering and feedback. Our model aligns with these two strands by incorporating a stochastic process with observable outcomes to explicitly capture the effects of uncertainty on innovation.

\medskip

Unlike the experimentation literature, which primarily focuses on the mean and variance of a stochastic process, the type of uncertainty relevant for payoffs in our model is given by particular outcomes along the realized path. In this sense, our model connects with an expanding body of work initiated by \cite{callander2011a}, which conceptualizes innovation as a trial-and-error process. In \cite{callander2011a}, short-lived agents arrive sequentially, each sampling one product and providing information to subsequent agents.\footnote{\cite{garfagninistrulovivi2016} also study learning with path dependency, but instead of single-period agents, they assume agents live for two periods, increasing their incentive for experimentation relative to exploitation. \cite{eratkavadias2008} also examine sequential testing with correlated tests, but under a distinct framework for experimentation.} Through this sequential sampling, agents incrementally improve their understanding of how product features relate to payoffs. This analytical framework has been applied in diverse contexts, including policy learning (\cite{callanderhummel2014, callander2011b}), expert decision-making under complex communication structures (\cite{callanderetal2021}), and strategic organizational learning (\cite{callandermatouschek2019}). 

\medskip

Other papers in this tradition examine learning about stochastic processes, but rather than focusing on particular realizations, agents' payoffs depend on the entire realized sample path. For instance, \cite{callanderclark2017} extend \cite{callander2011a} to study the evolution of legal doctrine through sequences of resolved cases; \cite{bardhibobkova2023} explore how policymakers interpret correlated evidence from citizens about local conditions; \cite{bardhi2024} analyzes multi-attribute product design; \cite{ilutvanchev2023} consider a related design problem from a collective decision-making perspective; and \cite{carnehlschneider2021} analyze knowledge creation through pairs of questions and answers.\footnote{This research also relates to literature investigating how market conditions influence the direction of innovation (\cite{hopenhaynsquintani2021, hopenhaynetal2006, bryanlemus2017}). Our paper shifts the focus from external market conditions to internal agent-driven processes.}

\medskip

In our paper, the decision-maker similarly engages in trial-and-error learning. However, rather than sampling from a predetermined set of unknown product types, the developer can restart the development process upon encountering setbacks. Specifically, if the current path experiences a sharp decline, the decision-maker identifies can restart from a previously successful point. This introduces a dynamic selection mechanism where past successes serve as anchor points for future iterations, reflecting a path actively shaped rather than inherited. Moreover, it is the developer who sets the actual correlation between past and present successes.

\medskip

In both the real options and experimentation literature, payoffs remain stochastic even after adopting an innovation. In real options, the final value of the stochastic process matters, whereas in experimentation, only the learned drift influences long-term payoffs. In both cases, the specific realized path is irrelevant for future outcomes. By contrast, in the trial-and-error literature, the realized path itself is crucial since only previously observed outcomes inform future choices. Our model similarly makes future payoffs dependent on the realized path, but notably, the decisive factor is the highest realized value.

\medskip

A recent set of papers examine similar environments where the highest achieved value primarily determines future payoffs. In \cite{urcunyariv2023} and \cite{wong2024}, search occurs with recall, allowing termination at any time while preserving the highest observed value. They focus on the ``speed'' of innovation by a single agent which proxies the spread of the potential draws by affecting the variance of a drift-less brownian motion. \cite{cetemenetal2023} and \cite{li2021} study related problems but focus on multiple agents collectively \textit{riding} the sample path.

Similar to these studies, our decision-maker prioritizes the highest achieved value, with behavior influenced by the gap between current and peak values (drawdown process), correlated values and perfect recall.\footnote{\cite{fengetal2024} analyze incentives for agents required to achieve an exogenous milestone, with setbacks affecting progress. Unlike this approach, our termination decision is endogenous and driven by natural developments rather than incentive structures.} However, unlike these studies, our decision-maker actively intervenes, restarting the process when encountering setbacks. Thus, rather than passively recalling values, the developer strategically shapes their path. Technically, the underlying drawdown process in our model differs from prior work\footnote{See for example, \cite{urcunyariv2023} and \cite{cetemenetal2023}.} by employing impulse-controlled Brownian motion rather than following a fixed path. This distinction highlights the active and adaptive nature of decision-making within our framework, distinguishing it from traditional recall-based models.\footnote{In models following \cite{callander2011a}, the developer observes past trials and solves a filtering problem instead of a controlled problem. On a technical note, our paper employs a construction similar to \cite{book:harrison2013}, but with a dynamically evolving free boundary.}

Substantively, our paper contributes to the literature on multistage product development by endogenizing the stages of development. \cite{koussisetal2013} study improvements in already marketed products modeled as a pure jump process with four exogenous phases, including two (\textit{exploration} and \textit{value enhancing}) analogous to the final two stages of our model.\footnote{See also \cite{herathpark2002} for a broader taxonomy of product development phases.} Consequently, although our paper does not deal with optimal contracts, it also informs research on incentives within multistage projects:\footnote{See, for example, \cite{greentaylor2016, curellosinander2020, moroni2022, madsen2022}.} when quality choices are endogenous, product development may naturally require multiple \textit{endogenous} distinct stages with crucial intermediate milestones.

\section{The model} \label{sec: model}

An agent (which we will refer to as developer or innovator) is responsible for developing and introducing a product to the market. We define $\mathcal{Q}_{t}$ as the set of available product qualities or versions at time $t$, whose dynamics are governed by a controlled Brownian motion, formally defined below. The development process occurs in continuous time and incurs a continuous flow cost $c>0$ as long as the developer does not actively intervene in the process.

An active intervention entails halting the ongoing development process and making a strategic decision. The developer has the flexibility to restart development from any previously attained quality level within the set $\mathcal{Q}_{t}$. Alternatively, the developer may terminate the development process and introduce the product to the market at a quality level within $\mathcal{Q}_{t}$ or abandon the project entirely. Restarting the development process is an instantaneous action that incurs a fixed cost $R > c$. Similarly, launching the product is an instantaneous decision that entails a one-time cost $L > R$. Lastly, the developer always retains the costless option to discontinue the development process without bringing the product to market.

A product of quality $q_t$ delivers a flow payoffs of $\pi(q_t)$ for eternity. We assume that $\pi(\cdot)$ is $C^2$ and strictly concave and that it achieves a maximum at some finite $\overline{q}>0$. Hence, $\pi(\cdot)$ is bounded above. The developer discounts payoffs at the rate $r>0$ and we normalize flow payoffs by the factor $r$. For the problem to be meaningful, we assume that $\pi(\overline{q}) >> L$.

Given the properties of the profit function and that any meaningful intervention (either to launch the product or to restart the development process) entails paying a fixed cost, the only relevant quality for decision making is the highest available quality: $q^*_t = \sup \left\{  \mathcal{Q}_t \right\}$.

\subsection{Development process.} The set of available versions (qualities) of the product at time $t \geq 0$ is given by
\[
\mathcal{Q}_{t\geq 0} \equiv \left\{ y : \exists s \leq t, y = Y_s \right\}
\]
where $Y_t$ is a L\'evy process of the form
\[
Y_t = X_t + U_t
\]
where $X_t$ is a $(\mu,\sigma)$ brownian motion and $U_t$ is a pure jump process, both to be defined below.  Let $W_t$ be the standard Wiener process adapted to a filtered space $(\Omega,\mathcal{F},\mathbb{F},\mathbb{P})$ for the filtration $\mathbb{F} = \left\{ \mathcal{F}_t \right\}_{t \geq 0}$ and let $X_t$ be described by the stochastic differential equation
\[
dX_t = \mu dt + \sigma dWt
\]
To simplify the analysis, we assume throughout the paper that $\mu>0$. The only results that require reinterpretation are the comparative statics analysis while the rest of the paper does not depend on this assumption. We model the interventions $U_t$ as an adapted (to $\mathbb{F} $) pure jump process that changes discontinuously at the moment of intervention:\footnote{The pure jump nature of the process $U_t$ follows by the assumption that each intervention incurs a fixed cost.} $U_t$ is a collection of stopping times $\mathcal{T}$ and a jump at each stopping time $\tau \in \mathcal{T}$ with size determined below.

For any arbitrary collection of stopping times $\mathcal{T}$, it is useful to define two auxiliary processes, $Z_t$ and $M_t$:
\begin{enumerate}
\item $Z_t \leq 0 \leq M_t$ for any $t \geq 0$
\item $Z_\tau \equiv 0$ for any $\tau \in \mathcal{T}$, and
\begin{align*}
dZ_t \equiv &
\begin{cases}
dX_t & \text{ if } Z_t <0\\
\min \left\{ 0, dX_t \right\} & \text{ if } Z_t =0 
\end{cases} 
\end{align*}
for any  $t \notin \mathcal{T}$
\item $M_\tau = \lim_{t \rightarrow \tau} M_t$ for any $\tau \in \mathcal{T}$, and
\[
dM_t \equiv 
\begin{cases}
0 & \text{ if } Z_t <0 \\
\max \left\{ 0, dX_t \right\} & \text{ if } Z_t =0
\end{cases}
\]
for any  $t \notin \mathcal{T}$
\end{enumerate}
It is easy to see that $dM_t \times dZ_t=0$, $M_t$ is increasing, and $Z_t$ verifies at all $\tau \in \mathcal{T}$
\[
 Z_{\tau-}+U_{\tau-}  = U_\tau 
\]
so we have
\begin{equation} \label{jump process}
 Z_t  \equiv X_t-M_t -U_t = Y_t - M_t
\end{equation}

A development policy is now completely defined by the collection of stopping times $\mathcal{T}$ and the processes $Z_t$ and $M_t$. First consider some $t \notin \mathcal{T}$, and assume that $Z_t<0$. In this case, $dZ_t=dX_t$ and $dM_t=0$. If $Z_t=0$ and $dX_t>0$, then  $ dM_t=dX_t>dZ_t=0$, but if $dX_t<0$, then  $ 0 = dM_t>dZ_t=dX_t$. Now consider $\tau \in \mathcal{T}$ which are the points at which $Z_t$ jumps back to $0$ and $dM_\tau=0$.

The process $Z_t$ is a reflected brownian motion at $0$ controlled at each set of stopping times and it is usually referred (its absolute value) as the drawdown process of a (impulse controlled) brownian motion.\footnote{See Proposition \ref{Prop: sup issues} in the Appendix \ref{app} for analysis of this stochastic process and Chapter 2 in \cite{book:harrison2013} for a discussion of this process in the context of a reflected brownian motion.} The process $M_t$ is increasing and traces the supremum of $X_t-Z_t$. Moreover, at every point of continuity, $\lim_{dt \rightarrow 0} \frac{\mathbb{E}(dM_t)}{dt} \rightarrow \infty$ if $Z_t=0$ (see Proposition \eqref{Prop: sup issues}).

More formally
\begin{definition}
A development policy $\Gamma$ is a collection of stopping times $\mathcal{T}$, and a pair of stochastic process, $Z_t$ and $M_t$, such that 
\begin{enumerate}
\item $Z_t$ is non-positive (reflected at 0) and $M_t$ is non-negative and increasing.
\item For all $t \notin \mathcal{T}$, 
\begin{enumerate}
\item if $Z_t<0$, $dZ_t = \mu dt + \sigma dWt$ and $dM_t =0$,
\item if  $Z_t=0$, $dZ_t=\min \{0, \mu dt + \sigma dWt \} $, and $dM_t = dX_t+dZ_t$.
\end{enumerate}
\item For all $\tau \in \mathcal{T}$, $Z_\tau=0$ and $dM_t =0$.
\end{enumerate}
A feasible development policy verifies $\Pr (\sup \mathcal{T}  < \infty | \mathcal{F}_t)=1$ so there is a final intervention in the development process
\end{definition}
With these processes, we define
\begin{definition}
At every $t \geq 0$ the version of product that delivers the highest profits if marketed
\[
 q^*_t = \sup \left\{ Y_s : s \leq t \right\} = M_t 
 \]
\end{definition}

Having defined all interventions while developing the product, we let $T$ be decision to stop the development process for the last time and either launch the product or abort the process altogether. Since a developer will not restart the development process by paying the cost RR while simultaneously stopping development, it must be that $T > \sup \left\{ \mathcal{T} \right\}$. If the developer decides to stop the process without launching the product, she obtains $0$, while if she decides to launch the product she obtains $\pi(M_T)-L$. It follows that the expected utility of a developer with policy $\Gamma= \left\{ \mathcal{T}, \left\{ Z_s \right\}_{s \geq 0} ,  \left\{ M_s \right\}_{s \geq 0} \right\}$ and a launch $T > \sup \left\{ \mathcal{T} \right\}$ is given by
\small
\begin{align} \label{Value General} 
U(\Gamma,T \; | \;  X_0) & = \mathbb{E} \left[  e^{-r T} \left( \max\{  \pi(M_T )- L , 0 \}  \right)   - \int_0^T r e^{-r s} c ds- \sum_{\tau \in  \mathcal{T}  }  e^{-r \tau} R   \; | \; X_0 \right]
\end{align}
\normalsize 

The developer seeks to maximize \eqref{Value General}  by choosing $\Gamma$ and the final decision $T$. In other words, the developer maximizes her expected utility by choosing ``when'' to restart the development of the product from a previous better version and ``when'' to market the product. The state of the system can then be described by $(Z_t,M_t)$ which are functions of the underlying process $X_t$ as described above once the policies are in place. As it is well known $(X_t,M_t)$ is a Markov process (see \cite{peskir1998}), and hence $(Z_{t-\tau},M_{t-\tau})$ is also a Markov process conditional on any stopping time $\tau \in \mathcal{T}$ by the optional stopping time theorem. It follows then, that the expected utility \eqref{Value General} can be written as a function of the relevant states
\small
\begin{align} \label{Value Final} 
 \mathbb{E} \left[U(\Gamma,T) \; | \;  Z_t,M_t \right]  & = \mathbb{E}  \left[  e^{-r (T-t)} \left( \max\{  \pi(M_T )- L , 0 \}  \right) - \int_0^{T-t} r e^{-r s} c ds - \sum_{\tau \in  \mathcal{T}, \tau>t  }  e^{-r (\tau-t)} R \; | \; Z_t,M_t \right] 
\end{align}
\normalsize

\bigskip

\section{Solving the model}

Since $dZ_t \: dM_t = 0$, the problem can be effectively decomposed into two independent dimensions. These dimensions jointly determine the free boundary $z^*(m)$, which represents the threshold at which the developer optimally intervenes. Specifically, for a fixed value of $m$, the developer faces a stopping problem where she must decide when to stop the development process and collect the continuation value that depends on the decision at the stopping time. However, when $m$ changes, the developer’s value for an intervention evolves, leading to a reformulation of the stopping problem. This second step determines the evolution (changes) of $z^*(m)$.

We first prove necessary and sufficient conditions for a function
\[
W: (-\infty,0] \times [0, \infty) \rightarrow \mathbb{R} 
\]
that is twice continuously differentiable in the first dimension and once continuously differentiable in the second dimension to represent the value of the maximal policy, $W(z,m)$, at state $(z,m)$. These conditions are well known: the Bellman-Jacobi-Hamilton equation when $z$ moves, the value-matching condition when the developer intervenes, and the reflection condition when $z = 0$. This is presented in Proposition~\ref{Prop: Ito's lemma}.

The problem is now to find the proper candidate function for $W(\cdot,\cdot)$ and verify that the necessary and sufficient conditions hold. We proceed in steps. First, we use properties of the Wald Martingale to find an explicit solution for the value function $W(z, m)$ within an arbitrary boundary. The problem is now to determine the optimal boundary. Since this is a standard stopping problem, the smooth pasting condition and the value matching condition provide the properties of the value function at the boundary. Using the explicit solution of the value function, we derive a new expression for it in terms of $(z,m)$, the boundary $z^*(m)$, and the continuation value at the boundary.

The magnitude of $m$ determines the nature of the continuation value. For relatively small values of $m$, the boundary $z^*(m)$ dictates whether the developer should restart the development process or abandon it entirely. Conversely, for sufficiently large values of $m$, the boundary determines whether the developer should restart development or proceed with launching the product. Applying the reflection condition, we find that the continuation values are ordered in a simple way along the $m$ dimension: for low values of $m$ the developer aborts the project, for intermediate values of $m$ the developer restarts the development process, and for high values of $m$ the developer launches the product. This allows us to further refine the value function and express it solely in terms of the cutoffs and $z^*(m)$, and to determine the cutoffs as implicit functions of $z^*(m)$. This is formalized in Proposition~\ref{proposition monotonicity} and Corollary~\ref{Cutoffs: smooth pasting}.

To show that the candidate value function exists, now it suffices to prove that there is a boundary $z^*(m)$ that is well defined, together with the cutoffs that determine the changes in the continuation value. Using the reflection condition, this is equivalent to showing that there exists a solution to a differential equation that is piecewise continuous. Since $dM_t > 0$ implies that $z = 0$, $m$ evolves along the reflected boundary described above and determines the value of the problem at that boundary. Consequently, the reflection condition becomes essential in characterizing the optimal policy: it provides how the free boundary $z^*(m)$ adjusts in response to changes in $m$, $\frac{\partial z^*(m)}{\partial m}$, capturing the sensitivity of the intervention threshold to changes in $m$ and, hence, the continuation value.

Both $z^*(m)$ and its derivative, $\frac{\partial z^*(m)}{\partial m}$, are implicitly defined by the continuation values, which are inherently forward-looking. Although the continuation value during the development process is endogenously determined, it is exogenous once the decision-maker opts to launch the product. This structure allows us to solve the differential equation backwards to find the free boundary, and hence the value function, as a function of expected future performance.

Unfortunately, the differential equation does not have a known explicit solution for two of the three segments. Indeed, when the state is in the intermediate stage, the differential equation resembles a generalized Bernoulli equation, and in the final stage, it resembles an Abel differential equation of the second kind. We study qualitative properties of the implicit solution and discuss its shape to provide a characterization of the free boundary, which we illustrate in Figure~\ref{picture: strategy}.

\bigskip

\subsection{Simplifying the problem} 

In this section, we present the decomposition of the problem into two distinct components that characterize the free boundary $z^*(m)$, as outlined above. Specifically, we first analyze the evolution of the value function of the optimal policy, denoted by $V(z,m)$, within the free boundary for $z \in (0, z^*(m))$. We then examine its behavior on the reflection boundary, $\frac{\partial V(0,m)}{\partial m}$, given an arbitrary continuation function $V(0,m)$. The former condition follows from the Bellman-Hamilton-Jacobi (BHJ) equation which emerges from the application of a generalized version of Itô-Watanabe's lemma (\cite{book:KS1991}), while the latter is derived from the reflection condition (\cite{book:harrison2013})

Consider some $\tau  \in \mathcal{T}$ and note that for any $dt$, sufficiently small, we can write
\[
Z_{\tau+dt} = \int_\tau^{\tau+dt}  \sigma dW_s + \int_\tau^{\tau+dt} \mu ds - \int_{\tau}^{\tau+dt} dM_s   
\]
Since $Z_\tau=0$ for all $\tau \in \mathcal{T}$, the process $Z_t$ is a semimartingale with finite variation process that can be decomposed on its continuous part $\int_0^t \mu ds - \int_{0}^{t} dM_s$ and its left limits-right continuous jumps $-Z_{\tau-}$:
\[
Z_t + M_t = \sigma W_t + \mu t  + \sum_{\tau \in \mathcal{T}}  \left(  -Z_{\tau-} \right)
\]
The optimal choice of these jumps and the final decision to stop the development process determine the optimal policy.

The following proposition, applying a generalized version of Ito's lemma and the derivation of optimal conditions, characterizes necessary and sufficient conditions for the value function (among a certain class) of an optimal policy $(\Gamma,T)$.

\begin{proposition} \label{Prop: Ito's lemma}
Let $V: [-\infty,0] \times [0, \infty) \rightarrow \mathbb{R} $ be a twice continuously differentiable function in its first argument and once continuously differentiable in the second argument, then for all feasible policy $(\mathcal{T},T)$
\[
V(z,m) \geq \mathbb{E} \left[U(\mathcal{T},T) \; | \;  z,m \right] , \: \forall z \leq 0 \leq m,
\]
if and only if
\begin{align*}
 V \left( z,m \right) & = \: \max \left\{  0 ,  \pi(m)-L ,  V \left(  0,m \right)  - R ,   \frac{\mu}{r}  V'_z \left( z,m \right)    +  \frac{\sigma^2}{2r} V''_{zz} \left( z,m \right)   - c \; \right\} \nonumber
\end{align*}
and (reflection condition)
\begin{equation} \label{reflection c}
V'_m \left( 0,m \right)  = V'_z \left( 0,m \right)
\end{equation}

\end{proposition}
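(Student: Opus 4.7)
The plan is to prove both directions of this verification result by applying a generalized It\^o--Watanabe formula to $e^{-rt}V(Z_t,M_t)$, exploiting the decomposition of $(Z_t,M_t)$ into a continuous diffusion, the singular reflection carried by $dM_t$, and the countable impulses at $\tau\in\mathcal{T}$. The four branches of the max correspond one-to-one with the four building blocks of any admissible policy (abandon, launch, restart, wait), so each branch will be tied to one term of the It\^o decomposition.

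For sufficiency ($\Leftarrow$), I would fix a feasible $(\mathcal{T},T)$ and write
\begin{align*}
e^{-rT}V(Z_T,M_T) - V(z,m) &= \int_0^T e^{-rs}\bigl[-rV + \mu V'_z + \tfrac{\sigma^2}{2}V''_{zz}\bigr](Z_s,M_s)\,ds \\
&\quad + \int_0^T e^{-rs}\sigma V'_z(Z_s,M_s)\,dW_s \\
&\quad + \int_0^T e^{-rs}\bigl[V'_m - V'_z\bigr](0,M_s)\,dM_s \\
&\quad + \sum_{\tau\in\mathcal{T},\,\tau\leq T} e^{-r\tau}\bigl[V(0,M_\tau)-V(Z_{\tau-},M_\tau)\bigr].
\end{align*}
Next, I would turn each piece into a favorable inequality: the HJB bound $V\geq\tfrac{\mu}{r}V'_z+\tfrac{\sigma^2}{2r}V''_{zz}-c$ controls the first integral from above by $\int_0^T rc\,e^{-rs}\,ds$; the reflection identity $V'_m(0,m)=V'_z(0,m)$ annihilates the third integral because $dM_s$ is supported on $\{Z_s=0\}$; and $V\geq V(0,\cdot)-R$ bounds each jump by $R$. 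After taking expectations (the It\^o integral becomes a true martingale under a standard localization argument exploiting the boundedness of $\pi$) and using $V(Z_T,M_T)\geq\max\{0,\pi(M_T)-L\}$ on the terminal event, the resulting inequality reassembles exactly to $V(z,m)\geq \mathbb{E}[U(\mathcal{T},T)\mid z,m]$.

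For necessity ($\Rightarrow$), I would extract each branch of the max by stress-testing $V$ against simple deviations. Immediate abandonment, immediate launch, and a single instantaneous restart at $\tau=0$ deliver $V\geq 0$, $V\geq\pi(m)-L$, and $V\geq V(0,m)-R$ respectively. For the differential piece, I would consider policies that leave the process uncontrolled on $[0,h]$ and then proceed optimally; applying It\^o to $e^{-rh}V(Z_h,M_h)$, dividing by $h$, and sending $h\downarrow 0$ converts dominance into $V \geq \tfrac{\mu}{r}V'_z+\tfrac{\sigma^2}{2r}V''_{zz}-c$. The reflection condition is recovered by specializing to initial states $(0,m)$: any mismatch between $V'_z(0,m)$ and $V'_m(0,m)$ would allow the developer to profit from the guaranteed excursion in the $m$-direction that the reflection generates, contradicting the hypothesis that $V$ dominates all feasible policies.

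The main obstacle is the rigorous use of It\^o's formula on the composite semimartingale $(Z_t,M_t)$. The process $M_t$ is continuous and of bounded variation but exhibits singular behavior at $\{Z=0\}$ (as recorded in Proposition~\ref{Prop: sup issues}), and $Z_t$ simultaneously carries reflection and impulsive jumps of state-dependent size $-Z_{\tau-}$. The identity $dZ_t\,dM_t=0$ built into the construction will kill the cross term, and the $C^{2,1}$ regularity of $V$ is precisely what is needed for the continuous part; the jump contributions add separately via a telescoping sum. A secondary, routine difficulty is justifying the martingale property of $\int e^{-rs}\sigma V'_z\,dW_s$, which I would dispatch via localization along $\tau_n\uparrow T\wedge n$ together with the linear-growth control on $V'_z$ implied by the HJB bound and the fact that $\pi$ is bounded above.
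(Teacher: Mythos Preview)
Your proposal is correct and follows essentially the same route as the paper: both directions rest on applying the generalized It\^o--Watanabe formula to $e^{-rt}V(Z_t,M_t)$, decomposing into the drift--diffusion integral, the singular $dM_t$ term supported on $\{Z_s=0\}$, and the jump sum over $\tau\in\mathcal{T}$, and then matching each piece to one branch of the variational inequality. The only difference is cosmetic: for necessity the paper frames the argument through the recursive representation of the optimal policy's value (and makes the reflection step precise via Proposition~\ref{Prop: sup issues}, i.e., $\mathbb{E}[dM_s/ds]\to\infty$ forces the coefficient of $dM_s$ to vanish), whereas you phrase it as stress-testing $V$ against elementary deviations---but the underlying computation is the same.
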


The value function obeys standard rules. When 
\[
V \left( z,m \right) > \: \max \left\{  0 ,  \pi(m)-L ,  V \left(  0,m \right)  - R  \right\}
\]
the value function follows the BHJ equation as $z$ moves freely. The reflection condition gives us how the value function changes when we move from one dimension to the other one, when $z$ reflects. To understand it note that when $z=0$ the value function moves in the direction of $m$ when $dx>0$ while it moves in the direction of $z$ when $dx<0$. Abusing some notation we can think of the reflection condition as a differentiability condition:
\[
\frac{dV(\cdot)}{dx}_{dx>0} = \frac{dV(\cdot)}{dx}_{dx<0}
\]
Finally, the value matching condition when the developer intervenes
\[
V \left( z,m \right) = \: \max \left\{  0 ,  \pi(m)-L ,  V \left(  0,m \right)  - R  \right\}
\]
gives the continuation value when $z$ stops to move smoothly.\footnote{The smooth pasting condition is the consequence of differentiability of the value function.}

\subsection{Candidate Value Function.}

Using properties of the Wald martingale (\cite{book:stokey2008}), we characterize a candidate for the value function of an optimal policy for a fixed value of $m$, a fixed and arbitrary free boundary $\underline{z}$, and arbitrary decisions at the boundary. With some abuse of notation we still denote it by $W(z,m)$. The reflection condition also implies that $W(0,m)$ is increasing in $m$, ensuring that continuation values are ordered in an intuitive way. This gives directly the changes in the value matching condition, and, as a corollary of differentiability of the value function, the smooth pasting condition. These two conditions together allow us to determine implicitly the free (optimal) boundary $z^*(m)$. The conditions established on the reflected boundary given by equation \eqref{reflection c} yield a characterization of the first-order differential equation governing $z^*(m)$. Using the expression for $W(z,m)$ as a function of the free boundary, we obtain a well-defined first-order, nonlinear, nonhomogeneous differential equation that is piecewise differentiable for $z^*(m)$. To finish we prove that the candidate value function verifies the necessary and sufficient differentiable properties.

 \medskip
 
Consider a state $(z,m)$ with $z<0$ and let $\mathbb{T}({z})$ be the stopping time
\[
\mathbb{T}_{\underline{z}}({z}) \equiv \inf \left\{ t \geq 0 : Z_t \notin (\underline{z},0) \right\}
\]
for some $\underline{z}<z$. The value function verifies
\[
W(z,m) = \mathbb{E} \left[ - \int_0^{\mathbb{T}_{\underline{z}}({z})} r e^{-rs}c ds + e^{-r \mathbb{T}_{\underline{z}}({z})} W(Z_{\mathbb{T}_{\underline{z}}({z})},m) \: | \: z,m \right]
\]
Standard arguments for Brownian motion and (Wald) martingales (see \cite{book:stokey2008}, Chapter 5) give an expression for this function
\begin{align} \label{Solution Wald}
W(z,m) + c& =   (W(0,m)+c)  \frac{e^{\gamma_2 \underline{z} } e^{\gamma_1 z }-e^{\gamma_1 \underline{z} } e^{\gamma_2 z } }{e^{\gamma_2 \underline{z} } -e^{\gamma_1 \underline{z} }}  +  \left( W(\underline{z},m) +c  \right) \frac{e^{\gamma_2 z }-e^{\gamma_1 z }  }{e^{\gamma_2 \underline{z} } -e^{\gamma_1 \underline{z} } } 
\end{align}
for\footnote{For future reference we have that the roots verify
\begin{align*}
\frac{\gamma_1 - \gamma_2}{2} = \sqrt{\left( \frac{\mu}{\sigma^2} \right)^2 + \frac{2r}{\sigma^2}}
\quad \quad \quad \frac{\gamma_1 + \gamma_2}{2} = - \frac{\mu}{\sigma^2} 
\quad \quad \quad \gamma_1 \gamma_2 
=- \frac{2r}{\sigma^2}
\end{align*}} 
\[
\gamma_1 \equiv - \frac{\mu}{\sigma^2} + \sqrt{\left( \frac{\mu}{\sigma^2} \right)^2 + \frac{2r}{\sigma^2}} >0 >  - \frac{\mu}{\sigma^2} - \sqrt{\left( \frac{\mu}{\sigma^2} \right)^2 + \frac{2r}{\sigma^2}} \equiv \gamma_2 
\]

It is easy to see that \eqref{Solution Wald} verifies the BJH equation
\begin{align} \label{diff1}
 r (W \left( z,m \right)+c)   = \mu W'_z \left( z,m \right)    + \frac{\sigma^2}{2} W''_{zz} \left( z,m \right)  
 \end{align}
 for any $z \in (0,\underline{z})$. Morevoer, we can further define arbitrarily the value function outside the bounds in the following way
 \[
 W(z,m) = W(\underline{z},m), \; \forall z \leq \underline{z}
 \]
This extension allows us to calculate first order conditions at the boundary.

The stopping time $\mathbb{T}_{\underline{z}}({z})$ determines the state at which the developer makes a decision: restarting the development process, with a continuation value $W(0,m)  - R$; launching the product, with a continuation value $\pi(m)-L$; or stopping the process altogether without launching the product with continuation value $0$. More formally, the optimal boundary must verify the value matching condition
\begin{equation} \label{Boundary Value}
W(z^*(m),m)  = \max\{ W(0,m)  - R, \pi(m)-L,0\}
\end{equation}
and $W$ must be differentiable at $z^*(m)$, so the smooth pasting condition must also hold
\[
\frac{\partial W(z,m)}{\partial z}_{\mid z=z^*(m)}  = 0
\]
which follows from the fact that $W(z^*(m),m)$ is, in reality, independent of $z$. Using \eqref{Solution Wald}, the value matching and the smooth pasting conditions give
\begin{align*} 
W(0,m)+c  & =   \left( W(z^*(m),m) +c  \right)  \frac{\gamma_1 e^{-\gamma_2 z^*(m) }-\gamma_2 e^{-\gamma_1 z^*(m) }  }{\gamma_1 -\gamma_2  } 
\end{align*}
and replacing back in \eqref{Solution Wald} we get the value function
\begin{align} \label{Solution Waldb}
W(z,m) + c& =    \left( W(z^*(m),m) +c  \right)  g (z-z^*(m))  
\end{align}
where we define
\[
g(x) \equiv \frac{\gamma_1 e^{\gamma_2 x } -\gamma_2 e^{\gamma_1 x}}{\gamma_1  -\gamma_2 } = e^{\frac{\gamma_1+\gamma_2}{2} x } 
 \frac{\gamma_1   e^{-\frac{\gamma_1-\gamma_2}{2} x } -\gamma_2   e^{\frac{\gamma_1-\gamma_2}{2} x}}{\gamma_1  -\gamma_2 } 
\]
which is convex and achieves a minimum at $0$ because $\gamma_1>0>\gamma_2$.

\medskip

The continuation value, $W(z^*(m),m)$, together with the free boundary, $z^*(m)$, fully determine the candidate value function $W(z,m)$ for an optimal policy. At this stage, the reflection condition \eqref{reflection c} becomes fundamental, as it provides a differential equation for $z^*(m)$ given a continuation value $W(z^*(m),m)$. Since the continuation value $W(z^*(m),m)$ depends on the optimal decision—whether to launch, abort, or restart the project—identifying the values of $m$ at which each decision is optimal will allow us to characterize the free boundary.

Moreover, the reflection condition plays a crucial role in determining $W(z^*(m),m)$ even when $z^*(m)$ is not fully characterized. This is because it establishes a direct link between the changes in $W(0,m)$ given by $m$ and the changes in $W(z,m)$ given by $z$ when reflected, ensuring that $W(z,m)$ is increasing in $m$. This property enables us to define the shape of $W(z^*(m),m)$ as a function of $m$ in a tractable manner. The next proposition exploits this structure to delineate three distinct regions of activity.

\begin{proposition} \label{proposition monotonicity}
The optimal policy is described by 1) the continuous and piecewise differentiable function $z^*(m)$, and 2) the cutoffs $0 \leq m_0 < m_1 < m^* = \overline{q}$, that jointly verify
\begin{equation*} 
0 = m_0 \left( g (-z^*(m_0)) - \frac{R+c}{c} \right)    \quad \quad g (-z^*(m_1)) = 1+ \frac{R}{\pi(m_1)-L+c} \quad \quad z^*(m^*) =0
\end{equation*}
and
\begin{align} \label{reflection in prop}
\frac{\partial z^*(m)}{\partial m}  & =  
\begin{cases}
-1 & \text{ if } m < m_0 \\
g (-z^*(m))-1  & \text{ if } m \in [m_0, m_1) \\
\frac{\pi'(m)}{\pi(m)-L+ c}  \frac{g' (-z^*(m))}{g (-z^*(m))}  & \text{ if } m \in [m_1, m^*) \\
0 & \text{ if } m \geq m^* 
\end{cases}
\end{align}
In all states $(z,m)$ that verify $z \in [0, z^*(m))$ the developer does not intervene and at $(z,m)=(z^*(m),m)$ the developer
\begin{itemize}
\item aborts the project if $m <m_0$;
\item restarts the development process if $m \in [m_0,m_1)$;
\item launches the product if $m > m_1$.
\end{itemize}

\end{proposition}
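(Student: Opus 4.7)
The plan is to combine the closed-form expression for $W(z,m)$ given in \eqref{Solution Waldb} with the reflection and value-matching conditions of Proposition~\ref{Prop: Ito's lemma}, and to read off the three regions by comparing the three candidate continuation values at the boundary.

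First, starting from $W(z,m)+c=(\phi(m)+c)\,g(z-z^*(m))$, with $\phi(m):=W(z^*(m),m)$, I would evaluate at $z=0$ to obtain $W(0,m)+c=(\phi(m)+c)\,g(-z^*(m))$. The value-matching condition \eqref{Boundary Value} forces $\phi(m)$ to equal one of $0$, $W(0,m)-R$, or $\pi(m)-L$. Substituting each alternative and solving for $W(0,m)$ yields three explicit formulas, one per candidate regime. Equating adjacent formulas delivers the internal cutoffs: \emph{abort--restart} gives $g(-z^*(m_0))=(R+c)/c$ (equivalently $m_0(g(-z^*(m_0))-(R+c)/c)=0$ once the case $m_0=0$ is allowed), while \emph{restart--launch} gives $g(-z^*(m_1))=1+R/(\pi(m_1)-L+c)$. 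The terminal point $m^*=\bar q$ is pinned down by $\pi'(\bar q)=0$: beyond $\bar q$ no further development can enhance the launched product, which forces $z^*(m^*)=0$. Strict convexity of $g$ with $g(0)=1$ makes each implicit equation uniquely solvable for $z^*$ once the region is fixed.

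Second, I would derive the ODE \eqref{reflection in prop} from the reflection condition \eqref{reflection c}. Using $W'_z(0,m)=(\phi(m)+c)\,g'(-z^*(m))$ and
\[
W'_m(0,m)=\phi'(m)\,g(-z^*(m))-(\phi(m)+c)\,g'(-z^*(m))\,z^{*\prime}(m),
\]
the identity $W'_m(0,m)=W'_z(0,m)$ rearranges to the master relation $\phi'(m)\,g(-z^*(m))=(\phi(m)+c)\,g'(-z^*(m))\,(1+z^{*\prime}(m))$. Substituting the region-specific $\phi$ then gives each piece of \eqref{reflection in prop}: in the abort region $\phi\equiv 0$ and $\phi'\equiv 0$, forcing $z^{*\prime}=-1$; in the restart region $\phi(m)=W(0,m)-R$ implies $\phi'(m)=W'_m(0,m)=(\phi(m)+c)g'(-z^*(m))$ by reflection, which collapses the master relation to $z^{*\prime}(m)=g(-z^*(m))-1$; in the launch region $\phi'(m)=\pi'(m)$ plugs in directly to yield the stated ODE. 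For $m\geq m^*=\bar q$, $\pi'(m)=0$ together with $z^*(m^*)=0$ makes $z^{*\prime}\equiv 0$ consistent with optimality.

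Third, I would verify the ordering of regions globally. For $m$ near $0$, $\pi(m)-L<0$ and the restart-based expression for $W(0,m)-R$ lies below $0$ because $g(-z^*(m))$ is not yet large enough to generate slack over the fixed cost $R$, so \emph{abort} is optimal; as $m$ grows, the restart value first reaches $0$ at $m_0$, and the strict concavity of $\pi$ together with $\pi(\bar q)\gg L$ ensures that the launch value subsequently overtakes the restart value at a unique $m_1<\bar q$. Continuity of $z^*$ at the two cutoffs follows from the coincidence of the two competing value-matching formulas there; piecewise differentiability follows from the smoothness of each regional ODE. The main obstacle I expect is the simultaneous consistency of the piecewise ODE with its implicitly defined cutoffs: $m_0$ and $m_1$ themselves depend on $z^*$ through the transcendental conditions derived above, so the whole system has to be solved jointly, integrating backwards from the terminal condition $z^*(\bar q)=0$. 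The launch-region equation is an Abel-type nonlinear ODE with no closed form, so existence of a continuous, piecewise-$C^1$ solution with the correct sign of $z^{*\prime}$ throughout has to be argued by qualitative phase-plane analysis rather than by explicit integration---this is where most of the technical work is likely to sit.
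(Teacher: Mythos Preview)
Your approach is essentially the paper's: derive the regional forms of $W$ from \eqref{Solution Waldb}, equate adjacent continuation values at the boundary to locate $m_0$ and $m_1$, and apply the reflection condition to obtain the piecewise ODE---the paper does exactly this, then supplies the qualitative analysis you anticipate (Picard--Lindel\"of existence for the Abel-type launch-region equation, a case split forcing $z^{*\prime}>0$ there, and a short contradiction argument that $m^*=\bar q$ rather than your one-line appeal to $\pi'(\bar q)=0$). One caveat: your master relation in the launch region yields
\[
1+z^{*\prime}(m)=\frac{\pi'(m)}{\pi(m)-L+c}\cdot\frac{g(-z^*(m))}{g'(-z^*(m))},
\]
which is precisely what the paper's proof derives (equation \eqref{Reflection 2}) and what the integrated system \eqref{system to solve} uses throughout; the third line of \eqref{reflection in prop} as printed has $g'/g$ without the leading $1+$, an apparent misprint you will hit when you write the formula out, so do not try to reconcile your derivation with that display.
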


In Figure \ref{picture: strategy} we illustrate the optimal development policy, which corresponds to the behavior of the free boundary. For low values of $ m $, specifically when $ m < m_0 $, the pair $ (z^*(m), m) $ are  the states in which the developer abandons the process and exits. Although the value of reaching the free boundary—representing a true failed attempt at development—remains constant, $W^*(z^*(m),m) = 0$, each subsequent new development moves the process closer to the viability stage starting at $m_0$. As a result, $W(0,m)$ increases, and the developer becomes progressively more tolerant of unsuccessful attempts seeking to materialize the accumulated progress. That is, $z^*(m)$ is decreasing. In this stage, luck plays a crucial role in ensuring survival.

A regime shift occurs when $ m_0 $ is reached: the developer begins to actively engage in the development process. In this intermediate stage, where $ m \in [m_0, m_1) $, the pair $ (z^*(m), m) $ are the states in which the developer restarts the process from $ (0, m) $ with associated payoff given by $W(z^*(m), m) = W(0, m) - R$. At this point, the developer \textit{knows} that she will eventually launch the product, but additional development is required. Two effects coexist in this stage: further developments increase both the value attained at the boundary, $W^*(z^*(m),m)$, and the value of reaching each development milestone, $W(0,m)$. The result of these two opposing effects is that the developer becomes more impatient and progressively less tolerant of setbacks, leading to faster project restarts, though at a diminishing rate: $z^*(m)$ is increasing and concave. In this stage, the launch is inevitable, yet postponed, and luck improves market prospects.

Once the product is ready for market entry, $m > m_1$, the developer makes a final push to enhance its quality.  In this stage, $ (z^*(m), m) $ are the states in which the developer terminates the process and launches the product with associated payoff given by $W(z^*(m), m) = \pi(m) - L$. Further developments continue to increase both the value attained at the boundary, $W^*(z^*(m),m)$, and the value of reaching each development milestone, $W(0,m)$. The developer grows increasingly impatient and less tolerant of failed paths, leading to an increasing function $z^*(m)$. In this stage, luck enhances market prospects while delaying product launch. However, if the development trajectory is sufficiently unfavorable, the developer halts development and launches the product. Here, the developer exhibits a degree of strategic greed, optimizing for the best possible market outcome.

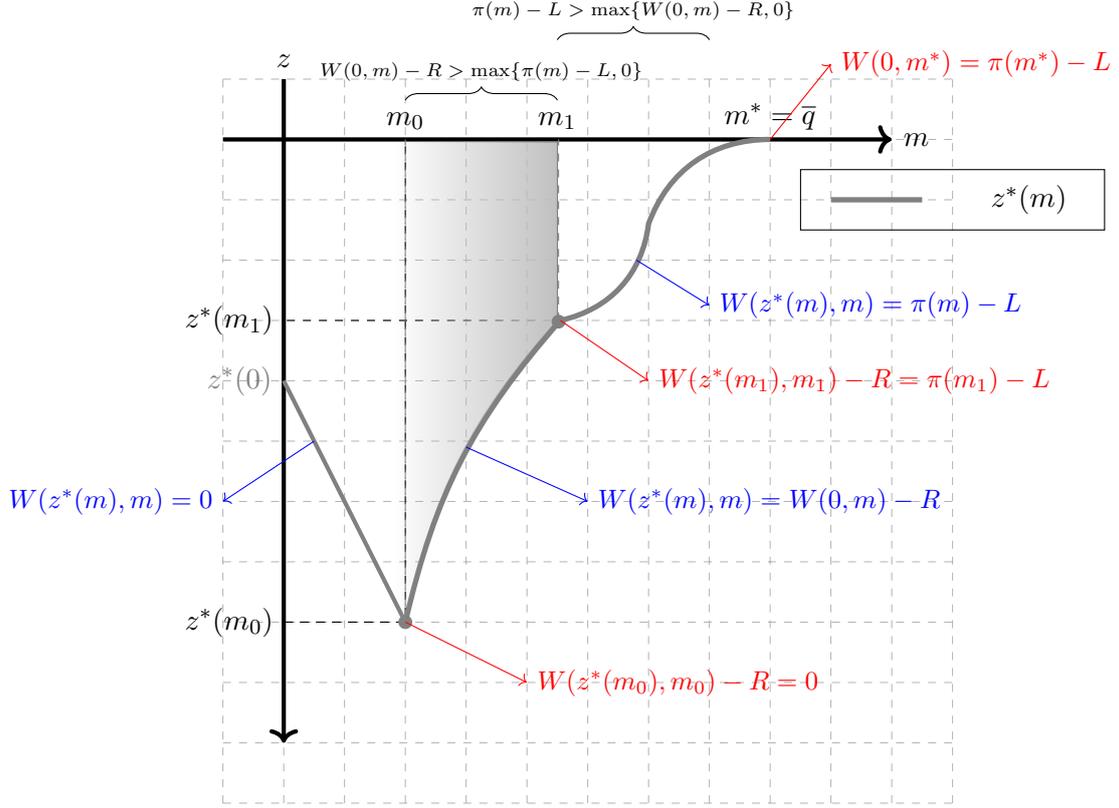
\begin{figure}
\begin{center}
\begin{tikzpicture}[scale=0.8]
\draw[help lines, color=gray!60, dashed] (-1,1) grid (11,-11);
\draw[->,ultra thick] (-1,0)--(10,0) node[right]{$m$};
\draw[<-,ultra thick] (0,-10)--(0,1) node[above]{$z$};
\draw[dashed] (0,-8) -- (2,-8)--(2,0) node[above]{$m_0$} ;
\draw(4.5,0) node[above]{$m_1$};
 \draw(8,0) node[above]{$m^*=\overline{q}$};
\draw [decorate, decoration = {brace,amplitude=5pt}]  (2,0.65) -- (4.5,0.65) node[midway,yshift=1em]{\tiny{$W(0,m)-R >\max\{ \pi(m)-L,0 \}$}};
\draw [decorate, decoration = {brace,amplitude=5pt}]  (4.5,1.65) -- (7,1.65) node[midway,yshift=1em]{\tiny{$\pi(m)-L  >\max\{ W(0,m)-R,0 \}$}};
\draw[ultra thick,color=gray] (2,-8)--(0,-4)  node[left]{$z^*(0)$};
\draw[dashed] (4.4,-3)--(0,-3) node[left]{$z^*(m_1)$};
\draw[dashed] (2,-8)--(0,-8) node[left]{$z^*(m_0)$};
\draw[gray, dashed]  (2,-8)--(2,0); \draw[gray, dashed]  (4.52,-3)--(4.52,0); 
\filldraw[gray] (2,-8) circle (3pt); 
\draw [gray,line width=2] (2,-8) to [bend left=14] (4.55,-3) ; 
\draw[gray,fill=lightgray,path fading=west] (1.8,-8)--(1.8,-0.04) -- (4.5,-0.04) -- (4.5,-3) to [bend right=15] (2,-8);
\filldraw[gray] (4.52,-3.02) circle (3pt); 
\draw [gray,line width=2] (4.55,-3) to [bend right=35] (6,-1.4) to [bend left=35] (8,0) ;  
\begin{scope}
\draw [black] (8.5,-1.5) rectangle ++(5,1);
\draw [gray,line width=2] (9,-1) -- (10.5,-1);
\draw (10.5,-1) node[right]{$\quad \quad z^*(m) \quad \quad$};
\end{scope}
\draw [->, blue] (0.5,-5) --(-1,-6)  node[left]{\small{$W(z^*(m),m)=0$}};
\draw [->, blue] (3,-5.1) --(5,-6)  node[right]{\small{$W(z^*(m),m)=W(0,m)-R$}};
\draw [->, blue] (5.8,-2) --(7,-2.75)  node[right]{\small{$W(z^*(m),m)=\pi(m)-L$}};
\draw [->, red] (2,-8) --(4,-9)  node[right]{\small{$W(z^*(m_0),m_0)-R= 0$}};
\draw [->, red] (4.55,-3) --(6,-4)  node[right]{\small{$W(z^*(m_1),m_1)-R= \pi(m_1) - L$}};
\draw [->, red] (8,0) --(9,1.25)  node[right]{\small{$W(0,m^*)= \pi(m^*)-L $}};
\end{tikzpicture}
\end{center}
\caption{Development strategy when $m_0>0$: the free boundary $z^*(m)$ in thick gray. In states $(z^*(m),m)$ for $m<m_0$ the project is terminated without launching the product.  In states $(z^*(m),m)$ for $m \in [m_0,m_1)$, the development project is restarted at $(0,m)$. In states $(z^*(m),m)$ for $m \geq m_1$, the product is launched obtaining a payoff of $\pi(m) -L $. 
\label{picture: strategy}}
\end{figure}

The following corollary to Proposition \ref{proposition monotonicity} establishes additional properties of the solution by expressing the value function as implicit function of the optimal free boundary (policy):
\begin{corollary} \label{Cutoffs: smooth pasting}
The value function $W(z,m) $ is increasing in $z$ and increasing in $m$ along the boundary $(0,m)$ and verifies
\begin{align} \label{Solution Waldc} 
W(z,m) + c & =  
\begin{cases}
c   g (z-z^*(m)) & \text{ if } m < m_0 \\
\frac{R}{g (-z^*(m))-1} g (z-z^*(m))   & \text{ if } m \in [m_0, m_1) \\
(\pi(m)-L+c)  g (z-z^*(m))   & \text{ if } m \in [m_1, m^*) \\
\pi(m)-L +c & \text{ if } m \geq m^* 
\end{cases}
\end{align}
for the cutoffs $0 \leq m_0 < m_1 < m^* = \overline{q}$ and the boundary function $z^*(m)$ described in Proposition \ref{proposition monotonicity}
\end{corollary}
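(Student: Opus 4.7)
The plan is to derive the piecewise expression \eqref{Solution Waldc} by substituting the continuation value $W(z^*(m), m)$ identified in Proposition \ref{proposition monotonicity} into the candidate formula \eqref{Solution Waldb}, $W(z,m) + c = (W(z^*(m),m) + c)\, g(z - z^*(m))$, region by region, and then to read off the monotonicity claims from the resulting expression together with the reflection condition.

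For $m < m_0$ the developer aborts at the boundary, so $W(z^*(m), m) = 0$ and substitution yields the first line of \eqref{Solution Waldc} directly. For $m \in [m_1, m^*)$ the developer launches, so $W(z^*(m), m) = \pi(m) - L$ and substitution yields the third line. For $m \geq m^* = \overline{q}$, Proposition \ref{proposition monotonicity} gives $z^*(m^*) = 0$, so the launch is immediate and $W(z, m) = \pi(m) - L$, producing the fourth line.

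The only case requiring a genuine calculation is the intermediate regime $m \in [m_0, m_1)$. Here $W(z^*(m), m) = W(0, m) - R$, so evaluating \eqref{Solution Waldb} at $z = 0$ and inserting this identity gives
\begin{equation*}
W(0,m) + c = \bigl(W(0,m) + c - R\bigr)\, g(-z^*(m)).
\end{equation*}
Solving this linear equation for $W(0,m) + c$ and then subtracting $R$ delivers $W(z^*(m), m) + c = R/(g(-z^*(m)) - 1)$, which produces the second line of \eqref{Solution Waldc}. The denominator is strictly positive since $g$ is convex with minimum value $g(0) = 1$ and $z^*(m) < 0$, so the expression is well defined.

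For the monotonicity claims, since $z \in (z^*(m), 0]$ implies $z - z^*(m) > 0$ and $g$ is strictly increasing on $(0, \infty)$, $W(z,m)$ is strictly increasing in $z$ whenever the coefficient $W(z^*(m), m) + c$ is positive. This holds in every region: the coefficient equals $c$, $R/(g(-z^*(m)) - 1)$, or $\pi(m) - L + c$, all of which are strictly positive (the last one because $m \geq m_1$ is the region where launching dominates aborting, which requires $\pi(m) - L > 0$). Monotonicity of $W(0, m)$ in $m$ is then immediate from the reflection condition \eqref{reflection c}: $W'_m(0, m) = W'_z(0, m) \geq 0$. The remaining subtlety is continuity at the cutoffs $m_0$ and $m_1$, which I expect to be the main bookkeeping obstacle; this is enforced precisely by the defining relations in Proposition \ref{proposition monotonicity}—for instance $g(-z^*(m_1)) = 1 + R/(\pi(m_1) - L + c)$ equates the second and third expressions at $m = m_1$, and the analogous identity at $m_0$ equates the first and second—so the patched function is continuous across regions.
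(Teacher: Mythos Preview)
Your proposal is correct and follows essentially the same approach as the paper: both substitute the boundary value $W(z^*(m),m)$ from Proposition~\ref{proposition monotonicity} into \eqref{Solution Waldb} region by region, with the intermediate region requiring the extra step of evaluating at $z=0$ and solving for $W(0,m)+c$, and both obtain monotonicity in $z$ from the sign of $g'$ on $(0,\infty)$ together with positivity of the prefactor, then transfer this to monotonicity in $m$ via the reflection condition. The only cosmetic difference is that the paper (in Lemma~\ref{Form of value F}) establishes monotonicity first and uses it to \emph{identify} the regions, whereas you legitimately take the regions as given by Proposition~\ref{proposition monotonicity} since this is a corollary to it.
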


\medskip

\subsection{The impact of market size}

We have thus far remained agnostic regarding $m_0 > 0$ or not. In fact, $ m_0 $ is determined by ``walking'' backwards along the free boundary $ z^*(m) $. A direct inspection of Figure \ref{picture: strategy} reveals that, starting from $ m^* $, we have
\[
\int_{m_1}^{\overline{q}}  \frac{\partial z^*(x)}{\partial x} \, dx = -z^*(m_1),
\]
and from $ m_1 $ to $ m_0 $,
\[
\int_{m_0}^{m_1}  \frac{\partial z^*(x)}{\partial x} \, dx = z^*(m_1) - z^*(m_0).
\]
Combining these with Proposition \ref{proposition monotonicity}, we obtain
\[
m_0 = \overline{q} - z^*(m_0) - \int_{m_0}^{m_1} g(z^*(x)) \, dx - \int_{m_1}^{q} \frac{\pi'(x)}{\pi(x) - L + c} \cdot \frac{g(z^*(x))}{g'(z^*(x))} \, dx.
\]
Whether $ m_0 > 0 $ depends on the value of $ \overline{q} $, which aligns with intuition: if $ \overline{q} $ is relatively high compared to $ R $ and $ L $, the expected project's value is substantial, and the developer is incentivized not to abort the project. Thus, the exploration stage disappears and $ m_0 = 0 $. Conversely, if $ \overline{q} $ is low, the developer may prefer to terminate the project if failures piled up, resulting in $ m_0 > 0 $.

This intuitive link between $ \overline{q} $ and $ m_0 $ highlights the need to first understand how variations in $ \overline{q} $ affect the developer's incentives. The following lemma, illustrated in Figure \ref{picture: impact of q}, provides this intermediate step.

\begin{lemma}\label{Lemma comparative results c}

The optimal boundary $z^*(m)$ is decreasing in $\overline{q}$ at all $m < m_0$  and  $m > m_1$, increasing at all $m \in ( m_0 , m_1 )$. The cutoff  $m_0$ increases with $\overline{q}$ while $z^*(m_0)$ does not change, and the changes in $m_1$ and $z^*(m_1)$ are given by
\begin{align} \label{effects of q}
  \frac{ d m_1}{d \overline{q}}  & = \left( \frac{ g'(-z^*(m_1)) }{\frac{\pi'(m_1)}{\pi(m_1)-L+c}  - g'(-z^*(m_1))} \right)  e^{\int_{m_1}^{\overline{q}} h(x) dx }  \lim_{m \uparrow \overline{q}} \left( \frac{\partial z^*(m)}{\partial m}  \right) \\
      \frac{ d z^*(m_1)}{d \overline{q}}      & =      \frac{\pi'(m_1)}{\pi(m_1)-L+c} \frac{g(-z^*(m_1))-1}{g'(-z^*(m_1))} \frac{ d m_1}{d \overline{q}} 
        \nonumber
 \end{align}

\end{lemma}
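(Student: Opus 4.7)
The plan is to exploit the backward structure of the free boundary characterization in Proposition~\ref{proposition monotonicity}: the ODE system is integrated from $m = \overline{q}$ with terminal condition $z^*(\overline{q}) = 0$, so a perturbation of $\overline{q}$ propagates successively through the final, intermediate, and exploration stages, linked by the matching conditions at $m_1$ and $m_0$. Treating each stage as a linear variational problem around the optimal trajectory yields the claimed expressions; combining these with explicit signs of $g$, $g'$, and $\pi'$ then delivers the monotonicity statements.

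In the final stage $m \in [m_1, \overline{q}]$, set $u(m) := \partial z^*(m; \overline{q}) / \partial \overline{q}$ and differentiate the ODE $\partial_m z^* = F(m, z^*)$ with respect to $\overline{q}$ to obtain the linear variational equation $\partial_m u = \partial_z F(m, z^*(m)) \cdot u$. Solving by an integrating factor produces exactly the exponential term $\exp\!\bigl(\int_{m_1}^{\overline{q}} h(x)\,dx\bigr)$ in~\eqref{effects of q}, with $h(x)$ being $\partial_z F(x, z^*(x))$ up to a sign convention. The boundary value is supplied by differentiating the identity $z^*(\overline{q}; \overline{q}) \equiv 0$ totally in $\overline{q}$, giving $u(\overline{q}) = -\lim_{m \uparrow \overline{q}} \partial_m z^*(m)$, which is the remaining factor in the formula. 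At the transition $m_1$ I would then impose two conditions: the value-matching $g(-z^*(m_1)) = 1 + R/(\pi(m_1) - L + c)$, whose total differential links $d z^*(m_1)/d\overline{q}$ to $dm_1/d\overline{q}$, and the chain-rule identity $u(m_1) = d z^*(m_1)/d\overline{q} - \partial_m z^*(m_1)\cdot dm_1/d\overline{q}$ obtained from $z^*(m_1(\overline{q}); \overline{q})$. Substituting $\partial_m z^*(m_1)$ from the final-stage ODE and solving the resulting $2\times 2$ linear system unwinds directly to the pair of formulas in~\eqref{effects of q}.

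For $m \in [m_0, m_1]$ the ODE $\partial_m z^* = g(-z^*) - 1$ is autonomous, so its integral curves differ only by horizontal translation in $m$; the transition condition $g(-z^*(m_0)) = (R+c)/c$ pins $z^*(m_0)$ to a value independent of $\overline{q}$, yielding $d z^*(m_0)/d\overline{q} = 0$. Tracing the autonomous flow from the perturbed endpoint $(m_1, z^*(m_1))$ back to the fixed level $z^*(m_0)$ then determines $dm_0/d\overline{q}$ and, by comparing the horizontally-shifted trajectories, the sign of $\partial z^*(m)/\partial \overline{q}$ for interior $m$ in this stage. In the exploration region $m < m_0$, the explicit integration $z^*(m; \overline{q}) = z^*(m_0) + m_0(\overline{q}) - m$ reduces $\partial z^*(m)/\partial \overline{q}$ to the previously computed $dm_0/d\overline{q}$. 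The signed monotonicity claims follow by combining these derivatives with the facts that $g, g' > 0$ on $(0, \infty)$, $\pi'(m) > 0$ on $(0, \overline{q})$, and that the denominator $\tfrac{\pi'(m_1)}{\pi(m_1) - L + c} - g'(-z^*(m_1))$ has a definite sign inherited from the smooth-pasting comparison at $m_1$.

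The main obstacle is the double degeneracy at $m = \overline{q}$: because $\pi'(\overline{q}) = 0$ and $g'(0) = 0$, both factors of $\partial_m z^*$ vanish at the terminal point, so evaluating $\lim_{m \uparrow \overline{q}} \partial_m z^*(m)$ and confirming it produces a well-defined, nonzero $u(\overline{q})$ requires an asymptotic expansion of $z^*$ near $\overline{q}$ rather than direct substitution. One also has to verify that the linearization $\partial_z F$ remains integrable at the endpoint, so that the exponential integrating factor in Step~1 is finite and transports a genuinely non-trivial perturbation to $m_1$ instead of collapsing to zero. A secondary technical point is that $z^*$ is only piecewise differentiable at $m_1$ and $m_0$, so the variational quantities must be interpreted as appropriate one-sided limits when the $2 \times 2$ system is assembled; continuity of $z^*$ (but not of its slope) across these cutoffs is what glues the three stages together.
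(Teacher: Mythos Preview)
Your approach is essentially the paper's: differentiate the backward ODE system in $\overline{q}$, solve the resulting linear variational equation on $[m_1,\overline{q}]$ with terminal datum obtained from $z^*(\overline{q};\overline{q})\equiv 0$, then propagate through the matching conditions at $m_1$ and $m_0$. The paper phrases this via the integrated equations in \eqref{system to solve} and the auxiliary integrals $\mathcal{H}(m)=\int_m^{\overline{q}}h(x)\,\partial_{\overline{q}}z^*(x)\,dx$ and $\mathcal{M}(m)=\int_m^{m_1}g'(-z^*(x))\,\partial_{\overline{q}}z^*(x)\,dx$, but this is the same variational-equation computation you describe, and your handling of the degeneracy at $\overline{q}$ (via the asymptotic expansion that the paper carries out by L'H\^opital) is the right concern.

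One slip: the denominator $\tfrac{\pi'(m_1)}{\pi(m_1)-L+c}-g'(-z^*(m_1))$ does \emph{not} have a definite sign---the paper explicitly allows both cases (this is exactly the distinction between the two panels of Figure~\ref{picture: impact of q}), so $dm_1/d\overline{q}$ is genuinely ambiguous. Fortunately you do not need it for the monotonicity claims: when you form the combination $\tfrac{dz^*(m_1)}{d\overline{q}}-(g(-z^*(m_1))-1)\tfrac{dm_1}{d\overline{q}}$ that feeds into the intermediate-stage translation, substituting the two formulas in \eqref{effects of q} makes this denominator cancel, leaving $(g(-z^*(m_1))-1)\,e^{\int_{m_1}^{\overline{q}}h}\lim_{m\uparrow\overline{q}}\partial_m z^*>0$, which is what fixes the signs on $(m_0,m_1)$ and at $m_0$.
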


\begin{figure}[h]
\centering
\begin{subfloat}[$\underset{m \uparrow m_1}{ \lim} \frac{\partial z^*(m)}{\partial m}>\underset{m \downarrow m_1}{ \lim}  \frac{\partial z^*(m)}{\partial m}$ or $g'(-z^*(m_1)) >  \frac{\pi'(m_1)}{\pi(m_1)- L+ c}$.\label{picture: comparativea}] 
{
\begin{tikzpicture}[scale=0.57]
\draw[help lines, color=gray!60, dashed] (-1,1) grid (10,-10);
\draw[->,ultra thick] (-1,0)--(10,0) node[right]{$m$};
\draw[<-,ultra thick] (0,-10)--(0,1) node[above]{$z$};
\draw[dashed] (0,-8) -- (2,-8)--(2,0) node[above]{\tiny{$m_0$}} ;
\draw(4.5,0) node[above]{\tiny{$m_1$}};
 \draw(8,0) node[above]{\tiny{$m^*=\overline{q}$}};
\draw[dashed] (4.4,-2)--(0,-2) node[left]{\tiny{$z^*(m_1)$}};
\draw[dashed] (2,-8)--(0,-8) node[left]{\tiny{$z^*(m_0)$}};
\draw[ultra thick,color=gray] (2,-8)--(0,-4)  node[left]{\tiny{$z^*(0)$}}; 
\draw[gray, dashed]  (2,-8)--(2,0); \draw[gray, dashed]  (4.52,-2)--(4.52,0); \filldraw[gray] (2,-8) circle (3pt); 
\draw [gray,line width=2] (2,-8) to [bend left=14] (4.55,-2) ; 
\draw[gray,fill=lightgray,path fading=west] (1.8,-8)--(1.8,-0.04) -- (4.5,-0.04) -- (4.5,-2) to [bend right=15] (2,-8);
\filldraw[gray] (4.52,-2.02) circle (4pt); 
\draw [gray,line width=2] (4.55,-2) to [bend right=20] (6,-1.4) to [bend left=25] (8,0) ;  
\draw[dashed, ultra thick,color=gray] (1.5,-8)--(0,-5); \filldraw[gray] (1.5,-8) circle (3pt);
\draw[dashed, ultra thick,color=gray] (1.5,-8) to [bend left=20] (3.5,-2.8) ; \filldraw[gray] (3.5,-2.8) circle (3pt);
\draw[dashed, ultra thick,color=gray] (3.5,-2.8) to [bend right=20] (6,-2) to [bend left=25] (9,0) ; 
\end{tikzpicture}
}
\end{subfloat}
\hspace{10pt} 
\begin{subfloat}[$\underset{m \uparrow m_1}{ \lim} \frac{\partial z^*(m)}{\partial m}<\underset{m \downarrow m_1}{ \lim}  \frac{\partial z^*(m)}{\partial m}$ or $g'(-z^*(m_1)) <  \frac{\pi'(m_1)}{\pi(m_1)- L+ c}$.\label{picture: comparativeb}] 
{
\begin{tikzpicture}[scale=0.55]
\draw[help lines, color=gray!60, dashed] (-1,1) grid (10,-10);
\draw[->,ultra thick] (-1,0)--(10,0) node[right]{$m$};
\draw[<-,ultra thick] (0,-10)--(0,1) node[above]{$z$};
\draw[dashed] (0,-8) -- (2,-8)--(2,0) node[above]{\tiny{$m_0$}} ;
\draw(4.5,0) node[above]{\tiny{$m_1$}};
 \draw(8,0) node[above]{\tiny{$m^*=\overline{q}$}};
\draw[dashed] (4.4,-3)--(0,-3) node[left]{\tiny{$z^*(m_1)$}};
\draw[dashed] (2,-8)--(0,-8) node[left]{\tiny{$z^*(m_0)$}};
\draw[ultra thick,color=gray] (2,-8)--(0,-4)  node[left]{\tiny{$z^*(0)$}};
\draw[gray, dashed]  (2,-8)--(2,0); \draw[gray, dashed]  (5.02,-3)--(5.02,0); 
\filldraw[gray] (2,-8) circle (3pt); 
\draw [gray,line width=2] (2,-8) to [bend left=14] (5.05,-3) ; 
\draw[gray,fill=lightgray,path fading=west] (1.8,-8)--(1.8,-0.04) -- (5.05,-0.04) -- (5.05,-3) to [bend right=15] (2,-8);
\filldraw[gray] (5.02,-3.02) circle (4pt); 
\draw [gray,line width=2] (5.05,-3) to [bend left=20] (6,-1.4) to [bend right=5] (8,0) ;  
\draw[dashed, ultra thick,color=gray] (1.5,-8)--(0,-5); \filldraw[gray] (1.5,-8) circle (3pt);
\draw[dashed, ultra thick,color=gray] (1.5,-8) to [bend left=20] (6,-2) ; \filldraw[gray] (6,-2) circle (3pt);
\draw[dashed, ultra thick,color=gray] (6,-2) to [bend left=20] (7.5,-1.2) to [bend right=5] (9,0) ; 
 \end{tikzpicture}
}
\end{subfloat}
\caption{Changes in free boundary $z^*(m)$ (gray) in response to increments 
\label{picture: impact of q}}
\end{figure}
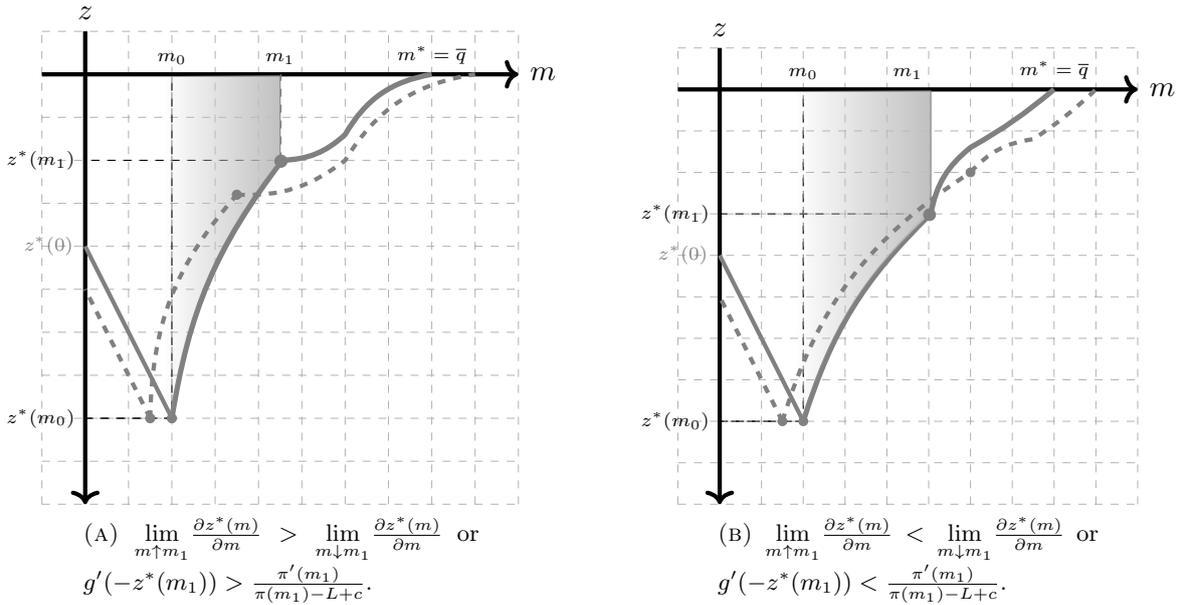

As $ \overline{q} $ increases, the option value of continuing the project in the final stage also increases. This leads to a decrease in $ z^*(m) $ for $ m > m_1 $, or a downward shift of the free boundary in Figure \ref{picture: impact of q}. While the average quality of products launched may increase, the minimum quality is not necessarily affected in a clear direction. Whether $ m_1 $ increases depends on the slope of the free boundary at the state $ (z^*(m_1), m_1) $. The left panel in Figure \ref{picture: impact of q} (Panel A) illustrates a case where $ m_1 $ decreases, whereas the right panel (Panel B) shows an increase in $ m_1 $. This leads to the seemingly paradoxical outcome that, in markets with greater potential profits, the launched product’s quality may actually be lower than in markets with smaller potential profits.

For a developer in the intermediate stage, a higher launch quality implies that delaying the restart of development becomes more costly. Consequently, the option value of waiting to restart the process decreases, which leads to an increase in $ z^*(m) $ for $ m \in (m_0, m_1) $, or an upward shift of the free boundary. In effect, the developer becomes less tolerant of setbacks. Whether this is associated with a larger intermediate stage depends not only on changes in $ m_1 $, but also on how $ m_0 $ responds to changes in $\overline{q}$.

If $ m_0 > 0 $, the initial stage contracts. This is because the increased value of development justifies greater tolerance of early-stage setbacks. As in the final stage, this is reflected in a decrease in $ z^*(m) $ for $ m < m_0 $, appearing as a downward shift in Figure \ref{picture: impact of q}. Thus, if $ m_1 $ increases, the intermediate stage expands—as illustrated in Panel B of Figure \ref{picture: impact of q}. However, if $ m_1 $ decreases, the net effect on the size of the intermediate stage is ambiguous.

We conclude then

\begin{proposition}
If $\overline{q}$ is large, then $m_0=0$ and the exploration stage is absent. If $\overline{q}$ is small, then $m_0>0$ and there is an exploration stage.
\end{proposition}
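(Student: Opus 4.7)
My plan is to leverage the first cutoff condition from Proposition~\ref{proposition monotonicity}: $0 = m_0\bigl(g(-z^*(m_0)) - (R+c)/c\bigr)$. This has two branches: the corner solution $m_0 = 0$, and the interior solution $z^*(m_0) = \underline{z}$, where $\underline{z}<0$ is the unique value solving $g(-\underline{z}) = (R+c)/c$. Crucially $\underline{z}$ does not depend on $\overline{q}$, so the question of whether an exploration stage exists reduces to whether the backward-constructed free boundary $z^*(m)$ reaches the fixed level $\underline{z}$ at some strictly positive $m$.

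I would first set up the backward construction of $z^*(m)$: starting from the terminal condition $z^*(\overline{q})=0$, integrate the final-stage ODE in \eqref{reflection in prop} backwards on $[m_1,\overline{q}]$ until the value-matching condition $g(-z^*(m_1))=1+R/(\pi(m_1)-L+c)$ is met, then continue with the intermediate-stage ODE $dz^*/dm = g(-z^*)-1$. Since $dz^*/dm>0$ on both stages, the trajectory is strictly monotone in $m$. Define $D(\overline{q}) = \int_{\underline{z}}^{z^*(m_1)} dz^*/(g(-z^*)-1)$, the $m$-distance needed to descend from $(m_1, z^*(m_1))$ backwards to the level $\underline{z}$. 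Then $m_0(\overline{q}) = \max\{m_1 - D(\overline{q}),0\}$: the exploration stage is absent iff $D(\overline{q}) \geq m_1$, and otherwise $m_0 = m_1 - D(\overline{q}) > 0$.

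For the large-$\overline{q}$ case, the value-matching condition at $m_1$ forces $g(-z^*(m_1))\to 1$ and hence $z^*(m_1)\to 0$ as $\pi(\overline{q})-L$ grows, because $\pi(m_1)-L+c$ inflates the denominator in $R/(\pi(m_1)-L+c)$. A Taylor expansion near zero yields $g(x)-1 \sim (r/\sigma^2)\,x^2$, so $D(\overline{q})$ diverges like $|z^*(m_1)|^{-1}$, which tends to infinity. Consequently, for $\overline{q}$ sufficiently large, $D(\overline{q})$ exceeds $m_1$ and the backward trajectory never reaches $\underline{z}$ on $[0,m_1]$, yielding $m_0(\overline{q})=0$. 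Conversely, for the small-$\overline{q}$ case (with $\pi(\overline{q})-L$ close to zero but positive), $R/(\pi(m_1)-L+c)$ is large, so $g(-z^*(m_1))$ is close to $(R+c)/c = g(-\underline{z})$ and $z^*(m_1)$ lies just above $\underline{z}$; then $D(\overline{q})$ is small and strictly less than $m_1$, producing $m_0(\overline{q})>0$.

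The main obstacle is controlling $m_1$ as a function of $\overline{q}$ in the large-$\overline{q}$ limit: because $m_1$ itself is bounded only by $\overline{q}$ and may grow with it, the divergence of $D(\overline{q})$ must be shown to outpace any growth of $m_1$. Making this rigorous requires combining the asymptotic $|z^*(m_1)| \sim \sigma\sqrt{R/(r(\pi(m_1)-L+c))}$ from the value-matching condition with an estimate of $m_1$ coming from integrating the final-stage ODE, which is where Lemma~\ref{Lemma comparative results c} (monotonicity of $z^*(\cdot;\overline{q})$ in $\overline{q}$) is invoked to rule out pathological behavior and ensure that $m_0(\overline{q})$ is non-increasing, so that the transition occurs at a single threshold. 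Under the standing assumption $\pi(\overline{q})\gg L$ together with strict concavity of $\pi$, the divergence of $D(\overline{q})$ dominates and secures the claim.
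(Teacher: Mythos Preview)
Your backward-construction framework and the characterization $m_0=\max\{m_1-D(\overline q),0\}$ are correct and useful. However, the asymptotic step for large $\overline q$ has a genuine gap. You assert that the value-matching condition $g(-z^*(m_1))=1+R/(\pi(m_1)-L+c)$ forces $z^*(m_1)\to 0$ because ``$\pi(m_1)-L+c$ inflates the denominator,'' but $m_1$ is endogenous and you have not shown that $\pi(m_1)\to\infty$. In fact Lemma~\ref{Lemma comparative results c} shows that $dm_1/d\overline q$ can have \emph{either} sign (depending on whether $g'(-z^*(m_1))\gtrless \pi'(m_1)/(\pi(m_1)-L+c)$), and that $z^*(m_1)$ moves in the same direction as $m_1$. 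So your divergence argument for $D(\overline q)$, and hence the comparison $D(\overline q)\geq m_1$, is not established. You also mischaracterize Lemma~\ref{Lemma comparative results c} as ``monotonicity of $z^*(\cdot;\overline q)$ in $\overline q$''; the lemma shows the sign of $\partial z^*(m)/\partial\overline q$ reverses across the regions $(0,m_0)$, $(m_0,m_1)$, $(m_1,\overline q)$.

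The paper's route is different and avoids this obstacle entirely. Rather than trying to control $m_1$ and $z^*(m_1)$ separately in a limit, it differentiates the full system \eqref{system to solve} with respect to $\overline q$ and obtains the closed-form expression $dm_0/d\overline q=-e^{\int_{m_1}^{\overline q}h(x)\,dx}\lim_{m\uparrow\overline q}(\partial z^*(m)/\partial m)<0$ (equivalently, $dG_{\overline q}(0)/d\overline q>0$, where the sign of $G_{\overline q}(0)$ decides whether $m_0>0$). The ambiguous movements of $m_1$ and $z^*(m_1)$ cancel when one tracks their combination through the intermediate-stage ODE, which is what the paper's calculation exploits. Your final sentence does invoke this monotonicity from Lemma~\ref{Lemma comparative results c}, but that is the whole argument in the paper---the $D(\overline q)$ asymptotics built beforehand do not do independent work, and as written they do not go through. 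If you want to salvage your approach, you would need an argument that bounds $m_1-D(\overline q)$ directly rather than the two pieces separately; alternatively, your $D(\overline q)$ formulation could be used to supply the endpoint behavior (small $\overline q$) that the paper leaves implicit, once the monotonicity is borrowed from Lemma~\ref{Lemma comparative results c}.
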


\section{Active product development} \label{sec: compa}

In this section, we analyze how the parameters $ R $ and $ L $ affect the developer's behavior. We demonstrate that increments in any of these parameters increase the viability threshold $m_0$, thereby raising the likelihood of project termination. The underlying mechanism is nuanced. In the case of $L$, there are two effects to consider: it directly influences the final stage and indirectly propagates through the system via the differential equation. The influence of $ R $ is analogous but less complex, as its direct effect occurs in the intermediate stage.

\medskip

Consider a developer at the beginning of the process:
\[
Z(0)=X(0)=M(0)=0
\]
The product reaches viability at time $T$ that verifies $M(T) > m_0$ but it requires that $Z(t)>z^*(M(t))$ at all $t \leq T$. Therefore, the probability of launch is given by the joint event
\[
\left\{ M(t) > m_0 \right\} \cup \left\{ \min_{s \leq t} \{ Z(s) - z^*(M(s)) \} >0 \right\} 
\]
Since $X_t$ is the underlying stochastic process for both $M_t$ and $Z_t$, and the lower boundary $z^*(m)$ is absorbent in this stage, this problem can be simplified:
\small
\begin{align*}
\Pr \left(M(t) > m_0, \min_{s \leq t} \{ Z(s) - Z^*(M(s)) \} >0 \right) & = \Pr \left(M(t) > m_0, \min_{s \leq t} \{ X(s)-M(s) - Z^*(M(s)) \} >0 \right)  \\
& = \Pr \left(M(t) > m_0, \inf_{s \leq t} \{ X(s) - Z_0 \} >0 \right)  \\
& = \Pr \left(\sup_{s \leq t} \{ X(s)  \}  > m_0, \inf_{s \leq t} \{ X(s)  \} > Z_0 \right) 
\end{align*}
\normalsize
where we used that $Z^*(M(t))=Z_0-M(t)$ since the developer does not intervene. Note that this is nothing but the Gambler's ruin with upper bound $m_0>0$ and lower bound $Z_0<0$ so this can be calculated using the first time hitting distribution (see \cite{book:stokey2008}, chapter 5):

\begin{lemma} \label{Prop: lemma viable}
The probability of viable product
\small
\begin{align*}
\Pr \left(M(t) > m_0, \min_{s \leq t} \{ Z(s) - Z^*(M(s)) \} >0 \right) & =\frac{e^{ -\frac{2 \mu}{\sigma^2}  z^*(0)}- 1}{e^{-\frac{2 \mu}{\sigma^2}  z^*(m_0)}- 1}
\end{align*}
\normalsize
is decreasing in both $R$ and $L$, and increasing in $\overline{q}$.
\end{lemma}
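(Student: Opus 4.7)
The plan is to first reduce the joint event to a Gambler's ruin calculation to obtain the closed-form expression, and then handle the three comparative statics separately, relying on Lemma \ref{Lemma comparative results c} for $\overline{q}$ and an analogous monotonicity analysis for $R$ and $L$.

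For the derivation of the formula, I would exploit the initial-stage identity $\partial z^*(m)/\partial m = -1$ from Proposition \ref{proposition monotonicity}, which gives $z^*(m) = z^*(0) - m$ on $[0,m_0]$. On any path with no prior intervention, $Z_s = X_s - M_s$, so $Z_s - z^*(M_s) = X_s - (M_s + z^*(M_s)) = X_s - z^*(0)$, which is independent of $M_s$. The joint event $\{M(t) > m_0\} \cap \{\min_{s \le t}(Z_s - z^*(M_s)) > 0\}$ thus collapses to the classical Gambler's ruin event for $X_t$ (a Brownian motion with drift $\mu>0$ and volatility $\sigma$) starting at $0$, with absorbing barriers at $z^*(0) < 0$ and $m_0 > 0$. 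Invoking the scale function $s(x) = e^{-2\mu x/\sigma^2}$ yields the standard probability of hitting the upper barrier first, and substituting the linear-boundary identity $m_0 = z^*(0) - z^*(m_0)$ casts it in the form stated.

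For the comparative static in $\overline{q}$, Lemma \ref{Lemma comparative results c} already delivers the two key facts: $z^*(0)$ strictly decreases in $\overline{q}$, while $z^*(m_0)$ is invariant in $\overline{q}$ (because the pinning equation $g(-z^*(m_0)) = (R+c)/c$ from Proposition \ref{proposition monotonicity} does not involve $\overline{q}$). Substituting into the ratio, with $\alpha \equiv 2\mu/\sigma^2 > 0$, the numerator strictly increases while the denominator is unchanged, and $P$ strictly increases. For $R$, the same pinning equation shows that $z^*(m_0)$ is strictly decreasing in $R$ and independent of $L$. To track $z^*(0)$ I would then differentiate the backward-walk identity $m_0 = \overline{q} - z^*(m_0) - \int_{m_0}^{m_1} g(z^*(x))\,dx - \int_{m_1}^{\overline{q}} \tfrac{\pi'(x)}{\pi(x)-L+c}\tfrac{g(z^*(x))}{g'(z^*(x))}\,dx$ jointly with $z^*(0) = z^*(m_0)+m_0$. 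Differentiating in $R$ and $L$ and using the piecewise ODE of Proposition \ref{proposition monotonicity} produces sign information on $dz^*(0)/dR$ and $dz^*(0)/dL$; combined with the signs of $dz^*(m_0)/dR$ and $dz^*(m_0)/dL$, the ratio strictly decreases in both.

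The main obstacle is the last step: the ODEs governing $z^*(m)$ in the intermediate and final stages are not explicitly integrable (they are a generalized Bernoulli and an Abel equation of the second kind), so propagating shifts of $z^*(m_1)$ backwards through these two stages and into $z^*(0)$ must be carried out implicitly. The cleanest route is to mimic the comparative-statics analysis of Lemma \ref{Lemma comparative results c} piecewise, applying the implicit function theorem on the backwards-walk relation to extract the sign of $dz^*(0)/dR$ and $dz^*(0)/dL$ without ever solving the ODE in closed form, and then to read off the direction of change of $P$ by inspection of its numerator and denominator.
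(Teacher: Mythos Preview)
Your proposal is correct and follows essentially the same route as the paper: the closed form is obtained by collapsing the joint event to a two-barrier Gambler's ruin for $X_t$ via the linear identity $z^*(m)=z^*(0)-m$ on $[0,m_0]$, and the comparative statics are read off from the signs of $dz^*(0)$ and $dz^*(m_0)$, using Lemma~\ref{Lemma comparative results c} for $\overline{q}$ and an implicit-function analysis of the backward-walk relations for $R$ and $L$. The paper carries out that last step in two dedicated claims (one for $R$, one for $L$) that perform exactly the piecewise propagation you describe, yielding $dz^*(m_0)/dR<0$, $dz^*(0)/dR>0$, $dz^*(m_0)/dL=0$, and $dz^*(0)/dL>0$; your outline is therefore not just the right idea but precisely the argument used.
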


Let's focus on the effects driven by changes in $ R $, as illustrated in Figure~\ref{picture: comparative}. When $ R $ increases, the developer becomes less willing to restart the development process, thereby increasing the option value of waiting. This direct effect is illustrated geometrically by a downward shift in $ z^*(m) $ in the intermediate stage ($m \in (m_0,m_1)$)). The rise in development costs must be offset by either a shorter intermediate stage (i.e., a decrease in $ m_1 $), a higher average launched quality (i.e., an increase in $ m_1 $), or a combination of both. We show that $ m_1 $ can move in either direction, depending on the curvature of $ z^*(m) $ at $ m_1 $. This constitutes the first indirect effect.\footnote{Mathematically, the effect of $R$ is given by
\begin{align*}
\lim_{m \uparrow m_1} \left( 1 + \frac{\partial z^*(m)}{\partial m} \right) &=  \lim_{m \downarrow m_1} \left( 1 + \frac{\partial z^*(m)}{\partial m} \right) + g(-z^*(m_1)) - \frac{g(-z^*(m_1))}{g'(-z^*(m_1))} \frac{\pi'(m_1)}{\pi(m_1)- L+ c} \\
&=  \lim_{m \downarrow m_1} \left( 1 + \frac{\partial z^*(m)}{\partial m} \right) + \frac{g(-z^*(m_1))}{g'(-z^*(m_1))} \left( g'(-z^*(m_1)) -  \frac{\pi'(m_1)}{\pi(m_1)- L+ c} \right)
\end{align*}
and Claim \ref{Claim comparative results} in Appendix \ref{App Main}, the equation that determines whether $m_1$ increases or not, reduces to  
\begin{align*}
\lim_{m \uparrow m_1} \left( 1 + \frac{\partial z^*(m)}{\partial m} \right) > \lim_{m \downarrow m_1} \left( 1 + \frac{\partial z^*(m)}{\partial m} \right) \iff   \frac{d m_1}{d R} >  0 
\end{align*}
}
As the value of development decreases, the incentive to initiate the process also declines. This implies that the value at $(0,0)$ decreases and without improvement the process is aborted with a higher likelihood; i.e.$z^*(0)$ is smaller. This constitutes the second indirect effect and is depicted geometrically by an upward shift in $ z^*(m) $ during the initial stage.  At the same time, the developer postpones entering the intermediate stage is postponed due to a higher cost of development; i.e. $m_0$ increases. Consequently, the exploration stage expands, and the developer becomes more likely to terminate the project at any given state.
 
\medskip

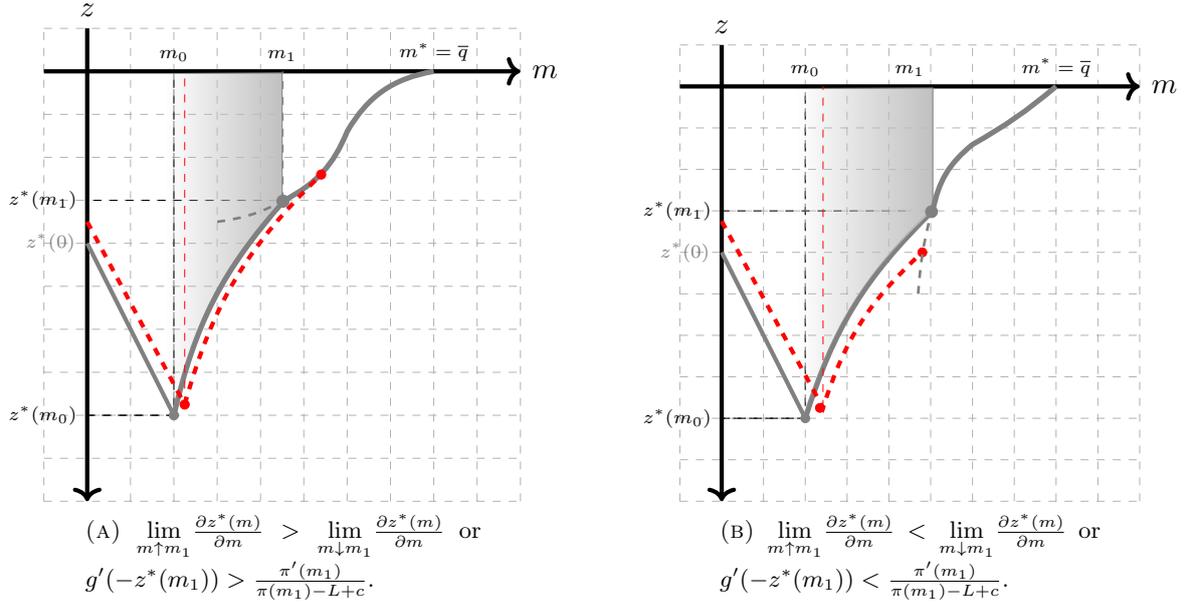
\begin{figure}[h]
\centering
\begin{subfloat}[$\underset{m \uparrow m_1}{ \lim} \frac{\partial z^*(m)}{\partial m}>\underset{m \downarrow m_1}{ \lim}  \frac{\partial z^*(m)}{\partial m}$ or $g'(-z^*(m_1)) >  \frac{\pi'(m_1)}{\pi(m_1)- L+ c}$.\label{picture: comparativea}] 
{
\begin{tikzpicture}[scale=0.57]
\draw[help lines, color=gray!60, dashed] (-1,1) grid (10,-10);
\draw[->,ultra thick] (-1,0)--(10,0) node[right]{$m$};
\draw[<-,ultra thick] (0,-10)--(0,1) node[above]{$z$};
\draw[dashed] (0,-8) -- (2,-8)--(2,0) node[above]{\tiny{$m_0$}} ;
\draw(4.5,0) node[above]{\tiny{$m_1$}};
 \draw(8,0) node[above]{\tiny{$m^*=\overline{q}$}};
\draw[dashed] (4.4,-3)--(0,-3) node[left]{\tiny{$z^*(m_1)$}};
\draw[dashed] (2,-8)--(0,-8) node[left]{\tiny{$z^*(m_0)$}};
\draw[ultra thick,color=gray] (2,-8)--(0,-4)  node[left]{\tiny{$z^*(0)$}}; 
\draw[gray, dashed]  (2,-8)--(2,0); \draw[gray, dashed]  (4.52,-3)--(4.52,0); \filldraw[gray] (2,-8) circle (3pt); 
\draw [gray,line width=2] (2,-8) to [bend left=14] (4.55,-3) ; 
\draw[gray,fill=lightgray,path fading=west] (1.8,-8)--(1.8,-0.04) -- (4.5,-0.04) -- (4.5,-3) to [bend right=15] (2,-8);
\filldraw[gray] (4.52,-3.02) circle (4pt); 
\draw [gray,line width=2] (4.55,-3) to [bend right=20] (6,-1.4) to [bend left=25] (8,0) ;  
\draw [dashed, gray,line width=1] (3,-3.5) to [bend right=10] (4.55,-3) ;  
\draw[dashed, ultra thick,color=red] (0,-3.5) -- (2.25,-7.75) to [bend left=15]  (5.4,-2.4) ; 
\filldraw[red] (5.4,-2.4) circle (3pt); \filldraw[red] (2.25,-7.75) circle (3pt);
\draw[dashed,red,line width=0.25] (2.25,-7.75) -- (2.25,0) ;
\end{tikzpicture}
}
\end{subfloat}
\hspace{10pt} 
\begin{subfloat}[$\underset{m \uparrow m_1}{ \lim} \frac{\partial z^*(m)}{\partial m}<\underset{m \downarrow m_1}{ \lim}  \frac{\partial z^*(m)}{\partial m}$ or $g'(-z^*(m_1)) <  \frac{\pi'(m_1)}{\pi(m_1)- L+ c}$.\label{picture: comparativeb}] 
{
\begin{tikzpicture}[scale=0.55]
\draw[help lines, color=gray!60, dashed] (-1,1) grid (10,-10);
\draw[->,ultra thick] (-1,0)--(10,0) node[right]{$m$};
\draw[<-,ultra thick] (0,-10)--(0,1) node[above]{$z$};
\draw[dashed] (0,-8) -- (2,-8)--(2,0) node[above]{\tiny{$m_0$}} ;
\draw(4.5,0) node[above]{\tiny{$m_1$}};
 \draw(8,0) node[above]{\tiny{$m^*=\overline{q}$}};
\draw[dashed] (4.4,-3)--(0,-3) node[left]{\tiny{$z^*(m_1)$}};
\draw[dashed] (2,-8)--(0,-8) node[left]{\tiny{$z^*(m_0)$}};
\draw[ultra thick,color=gray] (2,-8)--(0,-4)  node[left]{\tiny{$z^*(0)$}};
\draw[gray, dashed]  (2,-8)--(2,0); \draw[gray, dashed]  (5.02,-3)--(5.02,0); 
\filldraw[gray] (2,-8) circle (3pt); 
\draw [gray,line width=2] (2,-8) to [bend left=14] (5.05,-3) ; 
\draw[gray,fill=lightgray,path fading=west] (1.8,-8)--(1.8,-0.04) -- (5.05,-0.04) -- (5.05,-3) to [bend right=15] (2,-8);
\filldraw[gray] (5.02,-3.02) circle (4pt); 
\draw [gray,line width=2] (5.05,-3) to [bend left=20] (6,-1.4) to [bend right=5] (8,0) ;  
\draw[dashed, ultra thick,color=red] (0,-3.25) -- (2.425,-7.75) to [bend left=15] (4.8,-4);
; \filldraw[red] (2.35,-7.75) circle (3pt); \filldraw[red] (4.8,-4) circle (3pt);
\draw[dashed,red,line width=0.25] (2.425,-7.75) -- (2.425,0) ;
\draw[dashed,gray, line width=1] (4.7,-5) to [bend left=5] (5,-3.1);  
\end{tikzpicture}
}
\end{subfloat}
\caption{Changes in free boundary $z^*(m)$ (gray) in response to increments. \label{picture: comparative}}
\end{figure}

\medskip

We now turn to the effects induced by changes in $ L $, as illustrated in Figure \ref{picture: comparative 2}. An increase in $ L $ makes the developer less inclined to launch a product at a given achieved value. This direct effect is represented geometrically by a downward shift in $ z^*(m) $ during the final stage. As with an increase in $ R $, this reduces the overall value of development, resulting $ z^*(m) $ shifting downwards in the intermediate stage as well. However, two important differences arise. First, the effect on $ m_1 $ and $ z^*(m_1) $ is more intricate due to the interaction of both direct and indirect effects. Second, while $ m_0 $ increases, $ z^*(m_0) $ remains unchanged.

\begin{figure}[h]
\centering
\begin{subfloat}[\label{picture: comparative 2a}] {
\begin{tikzpicture}[scale=0.57]
\draw[help lines, color=gray!60, dashed] (-1,1) grid (10,-10);
\draw[->,ultra thick] (-1,0)--(10,0) node[right]{$m$};
\draw[<-,ultra thick] (0,-10)--(0,1) node[above]{$z$};
\draw[dashed] (0,-8) -- (2,-8)--(2,0) node[above]{\tiny{$m_0$}} ;
\draw(4.5,0) node[above]{\tiny{$m_1$}};
 \draw(8,0) node[above]{\tiny{$m^*=\overline{q}$}};
\draw[dashed] (4.4,-3)--(0,-3) node[left]{\tiny{$z^*(m_1)$}};
\draw[dashed] (2,-8)--(0,-8) node[left]{\tiny{$z^*(m_0)$}};
\draw[ultra thick,color=gray] (2,-8)--(0,-4)  node[left]{\tiny{$z^*(0)$}}; 
\draw[gray, dashed]  (2,-8)--(2,0); \draw[gray, dashed]  (4.52,-3)--(4.52,0); \filldraw[gray] (2,-8) circle (3pt); 
\draw [gray,line width=2] (2,-8) to [bend left=14] (4.55,-3) ; 
\draw[gray,fill=lightgray,path fading=west] (1.8,-8)--(1.8,-0.04) -- (4.5,-0.04) -- (4.5,-3) to [bend right=15] (2,-8);
\filldraw[gray] (4.52,-3.02) circle (4pt); 
\draw [gray,line width=2] (4.55,-3) to [bend right=20] (6,-1.4) to [bend left=25] (8,0) ;  
\draw [dashed, gray,line width=1] (3,-3.5) to [bend right=10] (4.55,-3) ;  
\draw [dashed, ultra thick,color=blue] (0,-3.5) -- (2.5,-8) to [bend left=15] (6.5,-2) to [bend left=25] (8,0)  ;
\filldraw[blue] (2.5,-8) circle (3pt); \filldraw[blue] (6.5,-2) circle (3pt);
\end{tikzpicture}
}
\end{subfloat}
\hspace{10pt} 
\begin{subfloat}[\label{picture: comparative 2b}] 
{
\begin{tikzpicture}[scale=0.55]
\draw[help lines, color=gray!60, dashed] (-1,1) grid (10,-10);
\draw[->,ultra thick] (-1,0)--(10,0) node[right]{$m$};
\draw[<-,ultra thick] (0,-10)--(0,1) node[above]{$z$};
\draw[dashed] (0,-8) -- (2,-8)--(2,0) node[above]{\tiny{$m_0$}} ;
\draw(4.5,0) node[above]{\tiny{$m_1$}};
 \draw(8,0) node[above]{\tiny{$m^*=\overline{q}$}};
\draw[dashed] (4.4,-3)--(0,-3) node[left]{\tiny{$z^*(m_1)$}};
\draw[dashed] (2.5,-9)--(0,-9) node[left]{\tiny{$z^*(m_0)$}};
\draw[ultra thick,color=gray] (2,-9)--(0,-5)  node[left]{\tiny{$z^*(0)$}}; 
\draw[gray, dashed]  (2,-9)--(2,0); \draw[gray, dashed]  (4.52,-3)--(4.52,0); \filldraw[gray] (2,-9) circle (3pt); 
\draw [gray,line width=2] (2,-9) to [bend left=14] (4.55,-3) ; 
\draw[gray,fill=lightgray,path fading=west] (1.8,-9)--(1.8,-0.04) -- (4.5,-0.04) -- (4.5,-3) to [bend right=15] (2,-9);
\filldraw[gray] (4.52,-3.02) circle (4pt); 
\draw [gray,line width=2] (4.55,-3) to [bend right=20] (6,-1.4) to [bend left=25] (8,0) ;  
\draw [dashed, gray,line width=1] (3,-3.5) to [bend right=10] (4.55,-3) ;  
\draw [dashed, ultra thick,color=blue] (0,-4.5) -- (2.5,-9) to [bend left=10] (4.5,-4) to [bend right=20] (6,-2.5) to [bend left=25] (8,0) ;  
\filldraw[blue] (2.5,-9) circle (3pt); \filldraw[blue] (4.5,-4) circle (3pt);
\end{tikzpicture}
}
\end{subfloat}
\caption{Changes in free boundary $z^*(m)$ (gray) in response to increments in $L$ (blue).
\label{picture: comparative 2}}
\end{figure}
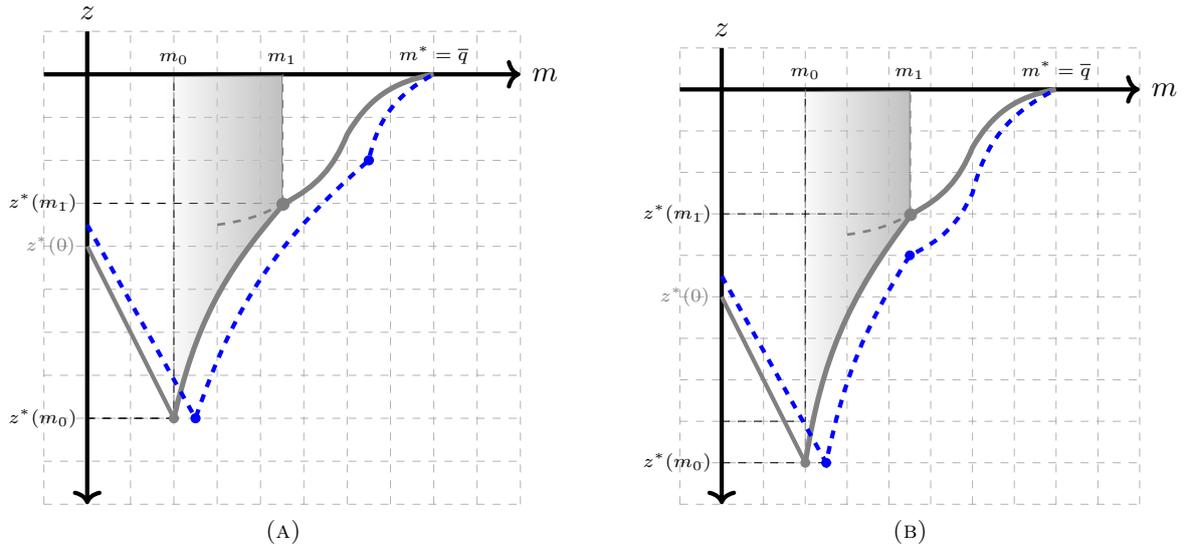

\medskip

\section{Conclusion}

We develop a model of a forward-looking developer who designs a product from idea to market and can actively intervene in its development by restarting the development process from earlier versions, if there is any. The process unfolds in at most three distinct stages. In the first stage, early successes must accumulate to surpass a threshold of viability. If setbacks pile up before this threshold is reached, the developer abandons the project and \textit{quits}. Here, setbacks are indistinguishable from failure, as the product never becomes viable. Once viability is secured, the product enters the second stage of active development. In this stage, setbacks no longer imply failure but are instead reframed as learning opportunities. The developer strategically restarts development from the most promising earlier version, seeking to improve quality. The refusal to stop or settle in the absence of steady progress reflects \textit{grit}—a determined persistence where ``failures'' are seen not as endpoints but as setbacks to be overcome. Nonetheless, luck remains a crucial factor, influencing both the pace of improvement and the total cost of reaching the market. In the third stage, once quality is high enough, the developer refrains from restarting the development process. At this point, the developer becomes \textit{greedy}, pushing for marginal gains in quality in a pursuit of excellence: if improvement stalls and setbacks persist, the developer launches product immediately.\footnote{``Be a yardstick of quality. Some people aren’t used to an environment where excellence is expected.” (Steve Jobs).}

In our model, costlier products to launch may or may not correlate with weaker quality drives; i.e. the length of the active development stage is not necessarily correlated  with the market conditions to launch the product. On the other hand, when the costs to restart development or to launch rise, fewer products are launched and it becomes harder for the developer to reach viability.

Our results align closely with the different forms of funding pursued by developers over the process of product development. Specifically, we highlight the distinction between high-risk, high-reward ventures favored by venture capitalists and lower-risk, market-ready projects typically targeted by private equity investors. In the initial stages of innovation, there is a substantial risk of complete failure, leading potentially to project abandonment—a scenario closely resembling startups funded by venture capital, which often face high failure rates. Once viability is established and market entry becomes achievable, the process transitions into the subsequent development stage. In this stage, restarting the project can be a rational but costly decision, resulting in sunk costs and rendering some pathways unprofitable if too many setbacks occur. Such circumstances typically attract private equity investors, who focus on viable products that successfully enter the market, despite potential ongoing profitability challenges.

Our framework opens multiple avenues for future research, linking the internal dynamics of product development to the broader market context. Our model can shed light on how developers facing market competition replace the current version of a product with a new better one giving a new look at the incumbent-entrant innovation gap (\cite{book:christensen2015}). Before any product launch, the net flow payoff is negative due to development costs and, once a product is on the market, releasing new versions cannibalizes existing ones. The difference between an entrant (no product in the market available) and an incumbent is the stream of income that the latter have and the fact that the marginal benefit of a new version is higher for the former. A key question then is how rapidly these improved versions should be introduced to prevent entry---and whether current losses (where profits are lower than launch costs) might serve as insurance against future setbacks.
 
More generally, our model does not yet account for the effect of different market interactions. Incorporating competition necessitates adjusting the profit function $\pi(m)$ accordingly. In a market with finitely many firms, a rival’s launch generates a discontinuous profit drop, transforming the otherwise deterministic $\pi(m)$ into a jump-down stochastic process. In a fully competitive environment, one could treat $\pi(m)$ as declining over time as new versions of competitive firms enter the market, akin to having different discount factors for costs and for profits. The pace of innovation would then be the outcome of a strategic game, contingent on the market’s particular structure (e.g., number of firms, degree of competitiveness, etc.).

\newpage

 \bibliographystyle{ecta-fullname}
 
\bibliography{DevBiblio}

\newpage

\appendix

\section{Main proofs} \label{App Main}

\subsection*{Proof of Proposition \eqref{Prop: Ito's lemma}.\\}


We first provide a technical result which will be used multiple times.

\begin{lemma} \label{lemma: value of policy}
Let $\Gamma$ an arbitrary policy and $T$ a final decision, and let $F: (-\infty, 0] \times [0, \infty) \rightarrow \mathbb{R}$ be twice continuously differentiable in the first dimension and once continuously differentiable in the second dimension. Then
\small
\begin{align*}
 F \left( z,m \right) & =   \mathbb{E}  \left[ e^{-r t} F \left( Z_{t},M_{t} \right)   \; | \; z,m \right]  - \mathbb{E}  \left[\int_0^{t} e^{-r s}\left( F'_m \left( 0,M_s \right)  - F'_z \left( 0,M_s \right) \right) dM_s  \; | \; z,m \right] \\
&  \quad \quad - \mathbb{E}  \left[ \int_0^{t} e^{-r s} \left( \mu F'_z \left( Z_s,M_s \right)   -  \frac{\sigma^2}{2} F''_{zz} \left( Z_s,M_s \right) -  r  F \left( Z_s,M_s \right)  \right) ds  \; | \; z,m \right]  \\
& \quad \quad  -  \mathbb{E}  \left[ \sum_{\tau < t, \tau \in  \mathcal{T}} e^{- r \tau }  \left( F \left( Z_{\tau},M_{\tau} \right) - F \left( Z_{\tau-},M_{\tau-} \right) \right)  \; | \; z,m \right] \nonumber 
\end{align*}
\normalsize
\end{lemma}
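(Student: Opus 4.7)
The plan is to apply a generalised It\^o--Meyer formula to the process $s \mapsto e^{-rs}F(Z_s,M_s)$ on $[0,t]$ and then take the conditional expectation given $(z,m)$. The key ingredient is the semimartingale decomposition of $(Z,M)$ inherited from the definition of a feasible development policy: the continuous part of $Z$ has martingale component $\sigma\,dW_s$, drift $\mu\,ds$, and a continuous finite-variation regulator $-dM_s$ that charges only the set $\{Z_s=0\}$, while its pure-jump part consists of the upward jumps $-Z_{\tau-}$ at the stopping times $\tau\in\mathcal T$. The companion $M_t$ is continuous, non-decreasing, of finite variation, and supported on $\{Z_s=0\}$; in particular $\Delta M_\tau=0$ at every $\tau\in\mathcal T$ and the continuous cross-variation $[Z,M]^c$ vanishes.

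First I would write It\^o's formula for $C^{2,1}$ functions of a jump-semimartingale in the form
\begin{align*}
e^{-rt}F(Z_t,M_t) &= F(z,m) + \int_0^t e^{-rs}\bigl[\mu F'_z + \tfrac{\sigma^2}{2}F''_{zz} - rF\bigr]ds + \int_0^t e^{-rs}\sigma F'_z\,dW_s \\
&\quad + \int_0^t e^{-rs}\bigl[F'_m(0,M_s)-F'_z(0,M_s)\bigr]dM_s \\
&\quad + \sum_{\tau\le t,\,\tau\in\mathcal T} e^{-r\tau}\bigl(F(Z_\tau,M_\tau)-F(Z_{\tau-},M_{\tau-})\bigr),
\end{align*}
using $[Z]^c_s=\sigma^2 s$, the substitution $dZ^c_s=\mu\,ds+\sigma\,dW_s-dM_s$, and the fact that $dM_s$ is supported on $\{Z_s=0\}$ so that $F'_m\,dM_s$ and $F'_z\,dM_s$ can be evaluated at $(0,M_s)$ and combined into a single reflection integral. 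Because $F$ is $C^{2,1}$ and $F'_z$ is continuous across $0$, no extra local-time-at-$0$ term enters.

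Next I would take $\mathbb{E}[\cdot\mid Z_0=z,M_0=m]$. Provided the stochastic integral $\int_0^t e^{-rs}\sigma F'_z\,dW_s$ is a true martingale---which follows from a standard localisation argument, stopping at the first exit of $Z_s$ from a large interval and passing to the limit using the $C^{2,1}$ regularity of $F$ together with the feasibility condition $\Pr(\sup\mathcal T<\infty)=1$ to control the accumulation of jump terms---the Brownian integral has zero expectation and drops out. Rearranging the remaining identity to isolate $F(z,m)$ on the left gives the displayed formula.

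The main obstacle is the clean separation of the two distinct sources of non-Brownian variation in $Z$: the continuous regulator $dM_s$ at the reflecting boundary $\{Z=0\}$ and the pure-jump component at the discrete times $\mathcal T$. These must be bookkept as disjoint contributions because the regulator is continuous and $M$ does not jump at any $\tau\in\mathcal T$, while conversely the positive jumps of $Z$ at $\mathcal T$ do not add to $M$. Once this decomposition is set and the support property $\mathrm{supp}(dM)\subseteq\{Z=0\}$ is used to rewrite the reflection term with arguments pinned at $(0,M_s)$, the rest is routine It\^o calculus.
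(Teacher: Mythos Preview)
Your proposal is correct and follows essentially the same route as the paper: apply the generalised It\^o formula to $e^{-rs}F(Z_s,M_s)$, use the semimartingale decomposition $dZ^c_s=\mu\,ds+\sigma\,dW_s-dM_s$ together with the support property $\mathrm{supp}(dM)\subseteq\{Z=0\}$ to obtain the reflection term at $(0,M_s)$, separate the jump contributions at $\tau\in\mathcal T$, take expectations so the Brownian integral vanishes, and rearrange. The only cosmetic difference is that the paper first applies the product rule to isolate the discounting factor before expanding $dF$, whereas you write the full formula in one step; your explicit remarks on localisation and on the absence of a local-time term are more careful than the paper's treatment.
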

\begin{proof}[Proof of Lemma \ref{lemma: value of policy}]

We apply Ito's lemma to the product $e^{-r t} F \left( Z_t,M_t \right)$ noting that $Z_t$ is a seminartingale, and $M_t$ is a continuous finite variation process (it is increasing) and, second, there are some areas of discontinuity given by the jumps to get  (see \cite{book:harrison2013}, Chapter 4)
\begin{align} \label{Aux a}
e^{-r t} F \left( Z_{t},M_{t} \right) & =  F \left( z,m \right) + \int_0^{t} e^{-r s} dF \left( Z_s,M_s \right) - r \int_0^{t} e^{-r s} F \left( Z_s,M_s \right) ds \\
& \quad \quad \quad \quad + \sum_{\tau < t, \tau \in  \mathcal{T}  } e^{- r \tau }  \left( F \left( Z_{\tau},M_{\tau} \right) - F \left( Z_{\tau-},M_{\tau-} \right) \right)   \nonumber 
\end{align}
Using a generalized version of Ito's lemma (see \cite{book:harrison2013}, Chapter 4.9) for all regions of continuity of $Z_t$ and $M_t$ we have that
\small
\begin{align*}
& dF \left( Z_s,M_s \right)  =  F'_z \left( Z_s,M_s \right) dZ_s + \frac{\sigma^2}{2} F''_{zz} \left( Z_s,M_s \right)  ds + F'_m \left( Z_s,M_s \right) dM_s   \\
& \quad =  F'_z \left( Z_s,M_s \right) dX_s+ \frac{\sigma^2}{2} F''_{zz} \left( Z_s,M_s \right)  ds + \left( F'_m \left( Z_s,M_s \right) - F'_z \left( Z_s,M_s \right) \right)dM_s   \\
& \quad = \left( \mu F'_z \left( Z_s,M_s \right)    + \frac{\sigma^2}{2} F''_{zz} \left( Z_s,M_s \right)  \right) ds + \sigma F'_z \left( Z_s,M \right) dF_s  \\
& \quad \quad \quad \quad + \left( F'_m \left( Z_s,M_s \right) - F'_z \left( Z_s,M_s \right) \right) dM_s
\end{align*}
\normalsize
Replacing in \eqref{Aux a} and taking expectations we have
\small
\begin{align*}
& \mathbb{E}  \left[ e^{-r t} F \left( Z_{t},M_{t} \right)   \; | \; z,m \right]  =  F \left( z,m \right)  + \mathbb{E}  \left[\int_0^{t} e^{-r s}\left( F'_m \left( Z_s,M_s \right)  - F'_z \left( Z_s,M_s \right) \right) dM_s  \; | \; z,m \right] \\
&  \quad \quad + \mathbb{E}  \left[ \int_0^{t} e^{-r s} \left( \mu F'_z \left( Z_s,M_s \right)    + \frac{\sigma^2}{2} F''_{zz} \left( Z_s,M_s \right) -  r  F \left( Z_s,M_s \right)  \right) ds  \; | \; z,m \right]  \\
& \quad \quad  +  \mathbb{E}  \left[ \sum_{\tau < t, \tau \in  \mathcal{T}} e^{- r \tau }  \left( F \left( Z_{\tau},M_{\tau} \right) - F \left( Z_{\tau-},M_{\tau-} \right) \right)  \; | \; z,m \right] \nonumber 
\end{align*}
\normalsize
Recalling that $dM_s \neq 0 $ implies that $Z_s=0$ we get the expression in the lemma.
\end{proof}

We now derive sufficient conditions for the value function of an optimal policy

\begin{lemma} \label{Lemma: nec for optimal}
Let $F: (-\infty, 0] \times [0, \infty) \rightarrow \mathbb{R}$ be twice continuously differentiable in the first dimension and once continuously differentiable in the second dimension. If $F$ verifies the following conditions
\small
\begin{align*}
0 & \leq    F \left( Z_{t},M_{t} \right) - \max \left\{0,\pi (M_T)-L \right\}  \: \forall t \leq T, t \neq \tau    \\
0 & \geq   \left( F'_m \left( 0,M_t \right)  - F'_z \left( 0,M_t \right) \right),\: \forall t \leq T, t \neq \tau \\
0 & \geq     \mu F'_z \left( Z_t,M_t \right)    - \frac{\sigma^2}{2} V''_{zz} \left( Z_t,M_t \right) -  r  \left( F \left( Z_t,M_t \right)   + c \right)   \: \forall t \leq T, t \neq \tau \\
0 & \geq   F \left( Z_{\tau},M_{\tau} \right) - F \left( Z_{\tau-},M_{\tau-} \right) - R , \tau \in  \mathcal{T} \nonumber 
\end{align*}
\normalsize
then
\[
F(z,m) \geq \sup \left\{  \mathbb{E} \left[U(\mathcal{T},T) \; | \;  z,m \right]  \right\},\; \forall \text{ feasible } (\Gamma,T)
\]
\end{lemma}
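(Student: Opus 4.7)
The strategy is a standard verification argument: apply Lemma~\ref{lemma: value of policy} to the candidate function $F$ evaluated up to the terminal time $T$, and then turn each of the four hypotheses into a one-sided bound on the corresponding term of the resulting decomposition. Once assembled, those bounds reproduce exactly the three payoff components appearing in \eqref{Value Final}, so the inequality $F(z,m) \geq \mathbb{E}[U(\mathcal{T},T) \mid z,m]$ falls out; taking the supremum over feasible $(\Gamma,T)$ yields the claim.

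Concretely, I would fix an arbitrary feasible policy $(\Gamma,T) = (\mathcal{T}, \{Z_s\}, \{M_s\}, T)$ with initial state $(z,m)$, and invoke Lemma~\ref{lemma: value of policy} at time $T$ to write
\[
F(z,m) \;=\; \mathbb{E}\!\left[e^{-rT} F(Z_T,M_T) \mid z,m\right] + A + B + C,
\]
where $A$ is the reflection integral against $dM_s$, $B$ is the running integral of the infinitesimal generator applied to $F$, and $C$ is the sum of jump contributions at stopping times $\tau \in \mathcal{T}$. I would then bound each piece: hypothesis (1) at $T$ gives $F(Z_T,M_T) \geq \max\{0,\pi(M_T)-L\}$, so the terminal term dominates the launch payoff in $U$; hypothesis (2), combined with $dM_s \geq 0$ and the fact that $dM_s$ is supported on $\{s : Z_s = 0\}$, forces $A \geq 0$; hypothesis (3) rearranges to a lower bound of $rc$ on the integrand of $B$ (with the appropriate sign), so $B \geq -\mathbb{E}\!\left[\int_0^T r e^{-rs} c \, ds\right]$, recovering the flow-cost term; and hypothesis (4) bounds each jump by $R$, yielding $C \geq -\mathbb{E}\!\left[\sum_{\tau \in \mathcal{T}} e^{-r\tau} R\right]$, recovering the intervention-cost term. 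Summing these inequalities reconstructs $\mathbb{E}[U(\mathcal{T},T) \mid z,m]$ on the right, and since the policy is arbitrary, the stated inequality follows.

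The main technical obstacle is justifying the use of the random time $T$ as the upper limit in Lemma~\ref{lemma: value of policy}, which is stated for a deterministic $t$. The fix is to apply the lemma along a localizing sequence $T \wedge n$, establish the one-sided bounds term by term (this is clean because $dM_s \geq 0$, the jump contributions are non-positive by (4), and $\pi$ is bounded above by $\pi(\overline{q})$), and then pass to the limit via monotone and dominated convergence, using feasibility $\Pr(\sup\mathcal{T} < \infty) = 1$ together with $\Pr(T < \infty) = 1$ to control the sum of jumps and the terminal term. A secondary point worth flagging is that hypothesis (2) is only imposed on the reflection boundary $\{Z_s = 0\}$, but this suffices precisely because $dM_s$ charges nothing off that set, so no global comparison of $F'_m$ and $F'_z$ is needed for the argument.
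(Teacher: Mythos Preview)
Your proposal is correct and follows essentially the same verification route as the paper: apply Lemma~\ref{lemma: value of policy} at the terminal time, rearrange so that $F(z,m)-\mathbb{E}[U(\mathcal{T},T)\mid z,m]$ is a sum of four terms, and use the four hypotheses to show each is nonnegative. You are in fact more careful than the paper, which silently plugs the random $T$ into Lemma~\ref{lemma: value of policy} without the localization argument you outline.
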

\begin{proof}[Proof of Lemma \ref{Lemma: nec for optimal}]

Let $(z,m)$ be a state and recall that the expected utility of a policy $\Gamma$ and the final decision made at $T$ is given by
\[
\mathbb{E} \left[U(\mathcal{T},T) \; | \;  z,m \right] = \mathbb{E}  \left[  e^{-r T} \max \left\{0,\pi (M_T)-L \right\} - \left(1- e^{-r T} \right) c  - \sum_{\tau \in \mathcal{T}, \tau < T}  e^{- r \tau } R \; | \; z,m \right]
\]
Applying Lemma \ref{lemma: value of policy} we have that
\small
\begin{align*}
F \left( z,m \right) & = \mathbb{E} \left[U(\mathcal{T},T) \; | \;  z,m \right] + \mathbb{E}  \left[ e^{-r T} F \left( Z_{T},M_{T} \right)   \; | \; z,m \right]    
 - \mathbb{E}  \left[\int_0^{T} e^{-r s}\left( F'_m \left( 0,M_s \right)  - F'_z \left( 0,M_s \right) \right) dM_s  \; | \; z,m \right] \\
&   - \mathbb{E}  \left[ \int_0^{T} e^{-r s} \left( \mu F'_z \left( Z_s,M_s \right)   
 - \frac{\sigma^2}{2} V''_{zz} \left( Z_s,M_s \right) -  r  F \left( Z_s,M_s \right)  \right) ds  \; | \; z,m \right]  \\
&  -  \mathbb{E}  \left[ \sum_{\tau < T, \tau \in  \mathcal{T}} e^{- r \tau }  \left( F \left( Z_{\tau},M_{\tau} \right) - F \left( Z_{\tau-},M_{\tau-} \right) \right)  \; | \; z,m \right] \nonumber \\
& - \mathbb{E}  \left[  e^{-r T} \max \left\{0,\pi (M_T)-L \right\} - \int_0^T r c ds  - \sum_{\tau \in \mathcal{T}, \tau < T}  e^{- r \tau } R \; | \; z,m \right] 
\end{align*}
\normalsize
for $F$ twice continuously differentiable in the first term and once in the second. Rearranging we get
\small
\begin{align*}
F \left( z,m \right) & = \mathbb{E} \left[U(\mathcal{T},T) \; | \;  z,m \right] + \mathbb{E}  \left[ e^{-r T} \left( F \left( Z_{T},M_{T} \right) - \max \left\{0,\pi (M_T)-L \right\}  \right)   \; | \; z,m \right]    \\
&  - \mathbb{E}  \left[\int_0^{T} e^{-r s}\left( F'_m \left( 0,M_s \right)  - F'_z \left( 0,M_s \right) \right) dM_s  \; | \; z,m \right] \\
&   - \mathbb{E}  \left[ \int_0^{T} e^{-r s} \left( \mu F'_z \left( Z_s,M_s \right)   
 - \frac{\sigma^2}{2} V''_{zz} \left( Z_s,M_s \right) -  r  \left( F \left( Z_s,M_s \right)  \right) + c \right) ds  \; | \; z,m \right]  \\
&  -  \mathbb{E}  \left[ \sum_{\tau < T, \tau \in  \mathcal{T}} e^{- r \tau }  \left( F \left( Z_{\tau},M_{\tau} \right) - F \left( Z_{\tau-},M_{\tau-} \right) - R \right)  \; | \; z,m \right] \nonumber 
\end{align*}
\normalsize
It is sufficient for $F \left( z,m \right)  \geq \mathbb{E} \left[U(\mathcal{T},T) \; | \;  z,m \right] $ that
\small
\begin{align*}
0 & \leq   \mathbb{E}  \left[ e^{-r T} \left( F \left( Z_{T},M_{T} \right) - \max \left\{0,\pi (M_T)-L \right\}  \right)   \; | \; z,m \right]    \\
&  - \mathbb{E}  \left[\int_0^{T} e^{-r s}\left( F'_m \left( 0,M_s \right)  - F'_z \left( 0,M_s \right) \right) dM_s  \; | \; z,m \right] \\
&   - \mathbb{E}  \left[ \int_0^{T} e^{-r s} \left( \mu F'_z \left( Z_s,M_s \right)   
 - \frac{\sigma^2}{2} V''_{zz} \left( Z_s,M_s \right) -  r  \left( F \left( Z_s,M_s \right)  \right) + c \right) ds  \; | \; z,m \right]  \\
&  -  \mathbb{E}  \left[ \sum_{\tau < T, \tau \in  \mathcal{T}} e^{- r \tau }  \left( F \left( Z_{\tau},M_{\tau} \right) - F \left( Z_{\tau-},M_{\tau-} \right) - R \right)  \; | \; z,m \right] \nonumber 
\end{align*}
\normalsize
It is sufficient then that the following conditions hold one by one
\small
\begin{align*}
0 & \leq   \mathbb{E}  \left[ e^{-r T} \left( F \left( Z_{T},M_{T} \right) - \max \left\{0,\pi (M_T)-L \right\}  \right)   \; | \; z,m \right]    \\
0 & \leq  - \mathbb{E}  \left[\int_0^{T} e^{-r s}\left( F'_m \left( 0,M_s \right)  - F'_z \left( 0,M_s \right) \right) dM_s  \; | \; z,m \right] \\
0 & \leq   - \mathbb{E}  \left[ \int_0^{T} e^{-r s} \left( \mu F'_z \left( Z_s,M_s \right)   
 - \frac{\sigma^2}{2} V''_{zz} \left( Z_s,M_s \right) -  r  \left( F \left( Z_s,M_s \right)  \right) + c \right) ds  \; | \; z,m \right]  \\
0 & \leq  -  \mathbb{E}  \left[ \sum_{\tau < T, \tau \in  \mathcal{T}} e^{- r \tau }  \left( F \left( Z_{\tau},M_{\tau} \right) - F \left( Z_{\tau-},M_{\tau-} \right) - R \right)  \; | \; z,m \right] \nonumber 
\end{align*}
\normalsize
and since this must hold for all possible $T$ we get the conditions in the lemma 

\end{proof}

We now derive the necessary conditions for an optimal policy. Let
\begin{align} \label{Value of policy}
F \left( z,m \right)  & =  \mathbb{E}  \left[  e^{-r T} \max \left\{0,\pi (M_T)-L \right\} - \left(1- e^{-r T} \right) c  - \sum_{\tau \in \mathcal{T}, \tau < T}  e^{- r \tau } R \; | \; z,m \right] 
\end{align}
Then for $t>0$ with $t < T$ we have the recursive representation
\begin{align} \label{ Aux 0}
F \left( z,m \right)  & =  \mathbb{E}  \left[  e^{-r t} F \left( Z_t,M_t \right) - (1- e^{-r t})   c  - \sum_{\tau \in \mathcal{T}, \tau < t} e^{- r \tau }    R \; | \; z,m \right]
\end{align}
Using again Lemma \ref{lemma: value of policy} and replacing now in \eqref{ Aux 0} we get
\small
\begin{align} \label{Aux 0b}
0 = &   \mathbb{E}  \left[ \int_0^{t} e^{-r s} \left(   \mu F'_z \left( Z_s,M_s,\right)    + \frac{\sigma^2}{2} F''_{zz} \left( Z_s,M_s \right) -  r  \left( F \left( Z_s,M_s \right)  + c \right) \right) ds  \; | \; z,m \right]  \\
 &\quad \quad   +  \mathbb{E}  \left[\int_0^{t} e^{-r s}\left( F'_m \left( Z_s,M_s \right)  - F'_z \left( Z_s,M_s \right) \right) dM_s  \; | \; z,m \right] \nonumber  \\
& \quad \quad  +  \mathbb{E}  \left[ \sum_{\tau < t, \tau \in \mathcal{T}} e^{- r \tau }  \left( F \left( Z_{\tau},M_{\tau} \right) - F \left( Z_{\tau-},M_{\tau-} \right) -R \right)   \; | \; z,m \right] \nonumber 
\end{align}
\normalsize
Let $t' \notin \mathcal{T}$ and the state $(z,m)$ be such that $\Pr (t'+dt \in \left\{ \mathcal{T} \right\} \: | \: z,m)=0$ for small $dt$ so we have that \eqref{Aux 0b} implies
\begin{align} \label{Aux a1}
0 = &   \mathbb{E}  \left[ - \int_0^{t} r e^{-r s} c  + \mu F'_z \left( Z_s,M_s \right)    + \frac{\sigma^2}{2} F''_{zz} \left( Z_s,M_s \right) -  r  F \left( Z_s,M_s \right)  ds \right.  \\
 &\quad \quad   +  \left. \int_0^{t} e^{-r s}\left( F'_m \left( Z_s,M_s \right)  - F'_z \left( Z_s,M_s \right) \right) dM_s  \; | \; z,m \right] \nonumber  
\end{align}
Since both integrals must hold for all $ t' \notin \mathcal{T} $, and $\tau \in \mathcal{T} $ is a random variable, it must be that 
\begin{align} \label{Aux b}
0 & =   \mu F'_z \left(z_{t'},m_{t'} \right)    + \frac{\sigma^2}{2} F''_{zz} \left( z_{t'},m_{t'} \right)  - r \left( F \left( z_{t'},m_{t'} \right)  + c \right) \\
&  \quad \quad \quad +  \mathbb{E}  \left[   \left( F'_m \left( z_{t'},m_{t'} \right) - F'_z \left( z_{t'},m_{t'} \right) \right) \frac{dm_{t'}}{ds}   \; | \; z,m \right] \nonumber
\end{align}
Recall that if $dM_s >0$ then $Z_s=0$ so we have for all $z_{t'}<0$, equation \eqref{Aux b} becomes
\[
0 =  \mu F'_z \left( z_{t'},m_{t'} \right)    + \frac{\sigma^2}{2} F''_{zz} \left( z_{t'},m_{t'} \right)  - r \left( F \left( z_{t'},m_{t'} \right)  + c \right)
 \]
 which is the BJH equation. Since $W$ is twice continuously differentiable in the region of continuity we can take limits as $z_{t'} \uparrow 0$ to obtain the same condition even for  $z_s=0$. It follows then that \eqref{Aux b} becomes for any state $\left( 0,m_s \right)$ for $s \notin \mathcal{T} $, where 
\begin{align*} 
0 & =   \left( F'_m \left( 0,m_{t'} \right)  - F'_z \left( 0,m_{t'} \right) \right) \mathbb{E}  \left[  \frac{dm_{t'}}{ds}   \; | \; z_{t'},m_{t'} \right]
\end{align*}
Proposition \ref{Prop: sup issues} in Appendix \ref{app} we show that $\lim_{ds \rightarrow 0} \mathbb{E}  \left[   \frac{d \sup\left\{Y_z, z\leq s \right\}  }{ds}   \; | \; z,m \right] \rightarrow \infty$ so we must have that  
\[
\lim_{z \uparrow 0} F'_z \left(z,m_{t'} \right) = F'_m \left( 0,m_{t'} \right) 
\]

We now derive the value matching conditions for the intervention as necessary conditions for optimization.  Consider some stopping time $\tau \in \mathcal{T}$ so $Z_{\tau-}<0=Z_{\tau} $ and $M_{\tau} = M_{\tau-}$. Let $V$ the value function of the maximal policy. It follows that for some small $t'>\tau$ it must be that waiting is not optimal

\scriptsize
\begin{align*}
V\left( 0,M_\tau \right) - R & \geq - \int_\tau^{t'} r e^{-r(s-\tau)}c ds +\mathbb{E} \left[  e^{-r(t'-\tau)} V\left( Z_{t'},M_{t'} \right) \; | \; Z_\tau,M_\tau \right] \\
& \geq - \int_\tau^{t'} r e^{-r(s-\tau)} c ds + \mathbb{E} \left[  e^{-r(t'-\tau)} V\left( Z_\tau+dZ,M_\tau \right) \; | \; Z_\tau,M_\tau \right]  \\
& \geq - \int_\tau^{t'} r e^{-r(s-\tau)} c ds +  V\left( Z_\tau,M_\tau \right) \\
& \quad \quad \quad + \mathbb{E} \left[  \int_\tau^{t'} e^{-r(s-\tau)} d V\left( Z_s,M_\tau \right)- r \int_\tau^{t'} e^{-r(s-\tau)} V \left( Z_s,M_\tau \right) ds \; | \; Z_\tau,M_\tau \right]  \\
& \geq - \int_\tau^{t'} r e^{-r(s-\tau)}c ds +  V \left( Z_\tau,M_\tau \right) \\
& \quad \quad \quad + \mathbb{E} \left[   \int_\tau^{t'} e^{-r(s-\tau)} \left( V'_z\left( Z_s,M_{\tau} \right)  dZ_s + \frac{\sigma^2}{2} V''_{zz}\left( Z_s,M_\tau \right)  ds   \right) \right. \\
& \quad \quad \quad  \left. - r \int_\tau^{t'} e^{-r(s-\tau)} V\left( Z_s,M_\tau \right) ds  \; | \; Z_\tau,M_\tau \right] \\
& \geq V\left( Z_\tau,M_\tau \right)  + \mathbb{E} \left[  - \int_\tau^{t'} e^{-r(s-\tau)}  c + \mu V'_z\left( Z_s,M_\tau \right)    + \frac{\sigma^2}{2} V''_{zz}\left( Z_s,M_\tau \right)   ds \right. \\
& \quad \quad   \left.   -  \int_\tau^{t'} r e^{-r(s-\tau)}  V\left( Z_s,M_\tau  \right)    ds +  \int_\tau^{t'} e^{-r(s-\tau)} \sigma  V'_z\left( Z_s,M_\tau \right)  dW_s \; | \; Z_\tau,M_\tau \right] 
\end{align*}
\normalsize
where we used that for $Z_\tau<0$ we must have $d M_\tau=0$. Using the BJH equation and the fact
\[
0 = \mathbb{E} \left[  \int_\tau^{t'} e^{-r(s-\tau)} \sigma  V'_z\left( Z_s,M_\tau \right)  dW_s \; | \; Z_\tau,M \right] 
\]
we get 
\[
V\left( 0,M_\tau \right) - R \geq V\left( Z_\tau,M_\tau \right) 
\]
At the same time, an intervention that occurs before the stopping time $\tau$ is not optimal so for $t'<\tau$ with $Z_{t'}<0$ we must have that

\scriptsize
\begin{align*}
V\left( 0,M_{t'} \right) - R & \leq - \int^{t'+dt}_{t'} r e^{-r(s-t')} c ds + \mathbb{E} \left[  e^{-r dt} V(\left( Z_{t'+dt},M_{t'+dt} \right) \; | \; Z_{t'},M_{t'} \right] \\
& \leq - \int^{t'+dt}_{t'} r e^{-r(s-t')} c ds + V\left( Z_{t'},M_{t'} \right)  \\
&  \quad \quad + \mathbb{E} \left[  \int^{t'+dt}_{t'} e^{-r (s-t')} d V\left( Z_{s},M_{t'} \right) - r\int^{t'+dt}_{t'} e^{-r (s-t')} V\left( Z_{s},M_{t'} \right) ds \; | \; Z_{t'},M_{t'} \right] \\
 & \leq - \int^{t'+dt}_{t'} r e^{-r(s-t')} c ds + V\left( Z_{t'},M_{t'} \right)    + \mathbb{E} \left[  \int^{t'+dt}_{t'} e^{-r (s-t')} \left( V'_z\left( Z_{s},M_{t'} \right) dZ_{s} + \frac{\sigma^2}{2} V''_{zz}\left( Z_{s},M_{t'} \right)  ds  \right)   \right. \\
&  \quad \quad  - \left.  \int^{t'+dt}_{t'} r e^{-r (s-t')} V\left(Z_{s},M_{t'}\right) ds   \; | \; Z_{t'},M_{t'} \right] \\
 & \leq - \int^{t'+dt}_{t'} r e^{-r(s-t')}c ds + V\left( Z_{t'},M_{t'} \right)  \\
& \quad \quad  + \mathbb{E} \left[  \int^{t'+dt}_{t'} e^{-r (s-t')} \left( \mu V'_z\left( Z_{s},M_{t'} \right)  + \frac{\sigma^2}{2} V''_{zz}\left( Z_{s},M_{t'} \right)  -r  V\left( Z_{s},M_{t'} \right)  \right)  ds  \right. \\
&  \quad \quad  \left. +\int^{t'+dt}_{t'}  e^{-r (s-t')} \sigma V'_z\left( Z_{s},M_{t'} \right) dV_{s}  \; | \; Z_{t'},M_{t'} \right] 
\end{align*}
\normalsize
Using again the BJH equation and the fact 
\[
0 = \mathbb{E} \left[  \int_\tau^{t'} e^{-r(s-\tau)} \sigma  V'_z\left( Z_{s},M_{t'} \right)  dW_s \; | \; Z_{t'},M_{t'}\right] 
\]
we get 
\begin{align*}
V\left( 0,M_{t'} \right) - R & \leq  V\left( Z_{t'},M_{t'} \right)  
\end{align*}
It follows then that the collection of stopping times are given by the states
\[
\tau \equiv \inf \left\{ t > 0 : W\left( 0,M_{t} \right) - R  \geq  W\left( Z_{t},M_{t} \right)  \right\}
\]
or for short 
\[
V\left( 0,M_\tau \right) - R = V \left( Z_{\tau-},M_\tau \right) 
\]


\qed

\medskip


\subsection*{Proof of Proposition \ref{proposition monotonicity}.\\}


\begin{lemma} \label{Form of value F}
The candidate value function is given by \eqref{Solution Waldc}: 
\begin{equation*}  
W(z,m) + c =  
\begin{cases}
c   g (z-z^*(m)) & \text{ if } m < m_0 \\
R \frac{g (z-z^*(m))}{g (-z^*(m))-1}  & \text{ if } m \in [m_0, m_1) \\
(\pi(m)-L+ c)  g (z-z^*(m))  & \text{ if } m \in [m_1, m^*) \\
 \pi(m)-L +c & \text{ if } m \geq m^* 
\end{cases}
\end{equation*}
\end{lemma}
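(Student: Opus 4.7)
The strategy is to combine the general closed-form expression for $W(z,m)$ derived from the Wald martingale argument, equation \eqref{Solution Waldb}, with the decision-contingent value at the free boundary identified in Proposition \ref{proposition monotonicity}. Recall that \eqref{Solution Waldb} gives, for any $z\in(z^*(m),0]$,
\[
W(z,m)+c \;=\; \bigl(W(z^*(m),m)+c\bigr)\, g\bigl(z-z^*(m)\bigr),
\]
so once $W(z^*(m),m)$ is known in each of the four regimes, the lemma follows by direct substitution. The plan is therefore to handle the four cases of Proposition \ref{proposition monotonicity} one by one.

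For $m<m_0$, Proposition \ref{proposition monotonicity} states that the developer aborts at the boundary, so the value-matching condition \eqref{Boundary Value} gives $W(z^*(m),m)=0$, which plugged into \eqref{Solution Waldb} yields the first line. For $m\in[m_1,m^*)$, the developer launches at the boundary, so $W(z^*(m),m)=\pi(m)-L$, and \eqref{Solution Waldb} yields the third line directly. For $m\ge m^*$, Proposition \ref{proposition monotonicity} gives $z^*(m)=0$, i.e.\ the developer intervenes immediately; by the value-matching condition and the fact that launching dominates the other options in this regime, $W(z,m)=\pi(m)-L$ for all admissible $z$, which is the fourth line.

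The intermediate regime $m\in[m_0,m_1)$ is the one that requires a short algebraic step. Here the developer restarts at the boundary, so $W(z^*(m),m)=W(0,m)-R$. Specializing \eqref{Solution Waldb} at $z=0$ (using $g(0)=1$ is not needed; one simply evaluates $g(-z^*(m))$) gives the self-referential equation
\[
W(0,m)+c \;=\; \bigl(W(0,m)-R+c\bigr)\, g\bigl(-z^*(m)\bigr).
\]
Solving linearly for $W(0,m)+c$ produces $W(0,m)+c=\dfrac{R\,g(-z^*(m))}{g(-z^*(m))-1}$, whence $W(z^*(m),m)+c=W(0,m)-R+c=\dfrac{R}{g(-z^*(m))-1}$. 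Substituting this back into \eqref{Solution Waldb} gives the second line of the claimed formula. The main (minor) obstacle is ensuring that $g(-z^*(m))>1$ so that the denominator does not vanish; this is immediate because $g$ is convex with minimum $g(0)=1$ and $z^*(m)<0$ on the intermediate region, so $g(-z^*(m))>1$ strictly. Assembling the four cases yields the piecewise expression in the lemma.
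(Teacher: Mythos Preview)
Your algebra is correct, but the argument is circular. In the paper, Lemma~\ref{Form of value F} is the \emph{first} step in the proof of Proposition~\ref{proposition monotonicity}; it is precisely where the regime structure (abort for $m<m_0$, restart for $m\in[m_0,m_1)$, launch for $m\ge m_1$, and $m^*=\overline{q}$) is \emph{established}. You instead invoke Proposition~\ref{proposition monotonicity} to hand you that structure, which assumes what the lemma is meant to prove.

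The paper's own proof does the work you skip. It first shows, from \eqref{Solution Waldb} and the reflection condition \eqref{reflection c}, that $\partial W(0,m)/\partial m = \partial W(z,m)/\partial z|_{z=0} >0$, so $W(0,m)$ is strictly increasing in $m$. This monotonicity is what orders the three continuation values in \eqref{Boundary Value} and produces the cutoffs $m_0$ (where $W(0,m)-R$ crosses $0$) and $m_1$ (where $W(0,m)-R$ crosses $\pi(m)-L$). The paper then argues separately, by contradiction, that $m^*$ is finite and equals $\overline{q}$: if $m^*<\overline{q}$ one can strictly improve by continuing a bit further, and if $m^*>\overline{q}$ one violates the bound $W(0,m)\le \pi(\overline{q})-L$. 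Only after these facts are in hand does the substitution into \eqref{Solution Waldb} --- exactly the computation you carry out --- deliver the four displayed lines. Your self-referential derivation in the restart region is fine and matches the paper; what is missing is the independent argument that the regions are arranged as claimed and that $m^*=\overline{q}$.
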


\begin{proof}[Proof of Lemma \ref{Form of value F}]
Using \eqref{Solution Waldb} we get
\[
\frac{\partial W(z,m)}{\partial z} =    \left( W(z^*(m),m) +c  \right)  \gamma_1 \gamma_2 \frac{e^{\gamma_2 (z-z^*(m)) } - e^{\gamma_1 (z-z^*(m)) }}{\gamma_1  -\gamma_2 }   >0
\]
which follows by noting that $\gamma_1>0> \gamma_2 $ and $z>z^*(m)$. Using the reflection condition we have that
\[
\frac{\partial W(0,m)}{\partial m} = \frac{\partial W(z,m)}{\partial z}_{\mid z = 0} >0
\]
Let $\tilde{m}_1$ be defined as $\pi(\tilde{m}_1) = L - R$ and recalling expression \eqref{Boundary Value} and the monotonicity of $W(0,m)$ we get that
\[
W(z^*(m),m)   = 
\begin{cases}
 \max \left\{ 0, W(0,m)  - R \right\}  & \text{ if } m \leq  \tilde{m}_1  \\
 \max \left\{ \pi(m)-L, W(0,m)  - R \right\}  & \text{ if } m >  \tilde{m}_1 
\end{cases}
\]

\noindent \textbf{Case 1:} Assume that there is no $m>0$ such that $W(0,m)  < R$. Recalling the expression \eqref{Boundary Value} it is immediate to see that
\[
W(z^*(m),m)  = 
\begin{cases}
 W(0,m)  - R & \text{ if } m \in [ 0, m_1) \\
 \pi(m)-L & \text{ if } m \geq m_1
\end{cases}
\]
where 
\[
\pi(m_1)-L = W(0,m_1)-R 
\]
and note that it must be that $m_1> \tilde{m}_1$.  It follows then that \eqref{Solution Waldb} and some algebra give that
\begin{align*} 
W(z,m) + c & =  
\begin{cases}
R \frac{ g (z-z^*(m))}{g (-z^*(m)) -1}      & \text{ if } m \in [0, m_1) \\
(\pi(m)-L+c)  g (z-z^*(m))  & \text{ if } m \in [m_1, m^*) \\
 \pi(m)-L+c  & \text{ if } m \geq m^* 
\end{cases}
\end{align*}
Note that $m^*$ then must verify that $g (-z^*(m^*))=1$ and hence $z^*(m^*)=0$. On the other hand, as long as $W(0,m) >  \pi(m)-L$ the developer continues developing the project. First, we are going to show that $m^*<\infty$. Assume this is not the case. Since $\pi(m) \leq \pi( \overline{q})$ it must be that for all $m >  \overline{q}$ we have that $W(z,m) \geq \pi( \overline{q})-L$  and therefore
\[
W(0,m) \geq  \pi( \overline{q}) - L
\]
Consider now another state $m'<m$ but $m'>\overline{q}$ and it must be that $W(0,m)<W(0,m')$
\[
W(0,m) \leq  \pi( \overline{q}) - L
\]
so
\[
W(0,m') >  \pi( \overline{q}) - L
\]
But this is not possible as $\pi( \overline{q}) - L$ is the maximal possible profits. 

Assume to reach a contradiction that $m^*<\overline{q}$, and recall that
\begin{align*} 
W(z,m) + c & =  
\begin{cases}
(\pi(m)-L+ c)  g (z-z^*(m))  & \text{ if } m \in [m_1, m^*) \\
 \pi(m)-L +c & \text{ if } m \geq m^* 
\end{cases}
\end{align*}
Note that we must have that
\[
W(0,m^*)  =  \pi(m^*)-L < \pi(\overline{q})-L
\]
Consider an alternative plan in which the product is not launched at $(0,m^*)$ but it continue to be developed up until some $\hat{z}>0$. Note that in this case, we have the value function
\[
\tilde{W}(z,m^*) + c = (\pi(m^*)-L+ c)  g (z-\hat{z}) > (\pi(m^*)-L+ c) \quad \rightarrow \quad \tilde{W}(z,m^*) > \pi(m^*)-L = W(0,m^*)
\]
contradicting that $W(0,m^*)$ is optimal.

\noindent \textbf{Case 2:} Assume that there is some $m_0>0$ such that for all $m <m_0$ we have that  $W(0,m)  < R$. Recalling the expression \eqref{Boundary Value} it is immediate to see that
\[
W(z^*(m),m)  = 
\begin{cases}
 0 & \text{ if } m< m_0  \\
 W(0,m)  - R & \text{ if } m \in [ m_0, m_1) \\
 \pi(m)-L & \text{ if } m \geq m_1
\end{cases}
\]
where 
\[
\pi(m_1)-L = W(0,m_1)-R 
\]
and note that it must be that $m_1> \tilde{m}_1$. It follows then that \eqref{Solution Waldb} and some algebra give the expressions in \eqref{Solution Waldc}, and $m_0$ verifies
\[
R \frac{1}{g (-z^*(m))-1}=c
\]
if $m_0>0$, while $m_1$ is given by
\[
\pi(m_1)-L =R \frac{1)}{g (-z^*(m_1))-1} -c
\]
Again, we have that $m^*=\overline{q}$,

\end{proof}
\bigskip

We are going to show that the free boundary is continuous and piecewise differentiable

\begin{lemma} \label{Continuity of z}
The free boundary $z^*(m)$  is continuous in $m$.
\end{lemma}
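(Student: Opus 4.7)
The plan is to leverage Lemma \ref{Form of value F}, which gives a closed-form expression for $W$ in each of the three regimes, and then verify continuity of $z^*(m)$ separately inside each open regime $(0,m_0)$, $(m_0,m_1)$, $(m_1,m^*)$, before matching one-sided limits at the transitions $m=m_0$, $m=m_1$, $m=m^*$.

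First, I would establish continuity (in fact $C^1$ regularity) of $z^*(m)$ within each open regime by turning the reflection condition \eqref{reflection c} into a first-order ODE for $z^*(m)$. Substituting the explicit form of $W$ from Lemma \ref{Form of value F} into $W'_m(0,m)=W'_z(0,m)$ yields exactly the three ODEs in \eqref{reflection in prop}. In each case the right-hand side is a smooth function of $(m,z^*)$: $g$ and $g'$ are analytic, the denominators $g(-z^*)-1$ and $g(-z^*)$ are bounded away from $0$ away from the degenerate point $z^*=0$, and $\pi(m)-L+c$ stays positive on $[m_1,\bar q]$ because $\pi(\bar q)\gg L$. Thus Picard--Lindel\"of produces a unique $C^1$ solution on each open regime; in particular continuity holds there.

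Next, I would match one-sided limits at the transition points using the defining cutoff relations of Proposition \ref{proposition monotonicity}. At $m=m_0$ the limit $z^*(m_0^-)$ from the abort regime and the limit $z^*(m_0^+)$ from the restart regime both satisfy
\[
g\bigl(-z^*(m_0^{\pm})\bigr) \;=\; \frac{R+c}{c},
\]
as follows by equating $W(z^*(m_0),m_0)=0$ and $W(z^*(m_0),m_0)=W(0,m_0)-R$ (these coincide at $m_0$ by definition) with the two expressions in Lemma \ref{Form of value F}. Because $\gamma_1>0>\gamma_2$, the map $x\mapsto g(-x)$ is strictly monotone on $(-\infty,0]$, so this pins down a unique preimage and therefore $z^*(m_0^-)=z^*(m_0^+)$. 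An identical argument at $m_1$, using
\[
g\bigl(-z^*(m_1^{\pm})\bigr) \;=\; 1+\frac{R}{\pi(m_1)-L+c},
\]
yields $z^*(m_1^-)=z^*(m_1^+)$. At $m=m^*$ the terminal condition $z^*(m^*)=0$ and the ODE on $[m_1,m^*)$ together give $z^*(m)\to 0$ as $m\uparrow m^*$, so continuity at $m^*$ is automatic.

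The main obstacle I anticipate is ensuring that the ODE solutions do not develop internal singularities before reaching the next transition. In the abort regime this is trivial since $\partial z^*/\partial m=-1$ integrates to $z^*(m)=z^*(0)-m$. In the intermediate regime, the Bernoulli-type equation $\partial z^*/\partial m=g(-z^*(m))-1$ is monotone and bounded below as long as $z^*(m)\le 0$, so $g(-z^*(m))\ge 1$ and $z^*$ is non-decreasing; combined with the terminal condition at $m_1$ this prevents blow-up. The delicate case is the final regime, where the equation resembles an Abel equation of the second kind; here I would integrate backward from the endpoint condition $z^*(m^*)=0$ and use the behavior of $g'/g$ near $0$ (where $g'(0)=0$ and $g''(0)>0$) to show the solution extends continuously down to $m_1$, thereby completing the proof.
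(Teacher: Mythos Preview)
Your approach at the transition points $m_0$ and $m_1$ is essentially the same as the paper's: both arguments use the cutoff definitions together with the explicit regime formulas from Lemma~\ref{Form of value F} to pin down a single equation $g(-z^*(m_j^{\pm}))=\text{const}$ for each one-sided limit, and then invoke strict monotonicity of $x\mapsto g(-x)$ on $(-\infty,0]$ to conclude equality. The paper's version equates the two regime expressions for $W(z,m)$ at \emph{all} $z$ in the continuation region (using continuity of $W$ in $m$) and then substitutes the cutoff identity to reduce to $g(z-z^*(m_0^-))=g(z-z^*(m_0))$; you instead evaluate directly at $z=0$ via $W(0,m_0)=R$ and $W(0,m_1)=\pi(m_1)-L+R$. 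Your route is slightly more economical, but note that it still implicitly relies on continuity of $W(0,\cdot)$ at the cutoffs, which is what the paper's argument makes explicit.

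Where your proposal departs from the paper is in the treatment of interior continuity and the discussion of blow-up. The paper does not invoke Picard--Lindel\"of in Lemma~\ref{Continuity of z} at all; it simply asserts that continuity within regimes is trivial from continuity of the value function, and defers the ODE derivation and the global-existence analysis (including the delicate behavior as $m\uparrow\bar q$ where $g'(-z^*)\to 0$) to the subsequent Lemma~\ref{Lemma Reflwction}. Your plan effectively merges the two lemmas, which is logically fine but somewhat front-loads the analysis: the blow-up concern you raise for the Abel-type regime is a genuine issue, and the paper resolves it there by a separate qualitative argument (ruling out the cases $\partial z^*/\partial m<0$ and $\lim_{m\uparrow\bar q}z^*(m)<0$ via $W(0,\bar q)\le\pi(\bar q)-L$, then computing the finite limit of the slope by L'H\^opital). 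So your anticipated obstacle is real, but it belongs to existence rather than to continuity per se.
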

\begin{proof}[Proof of Lemma \ref{Continuity of z}]
By continuity of the value function in the direction of $z$, we have trivially that $z^*(m)$ is continuous. Nevertheless, let's focus first on $m_0$ 
\[
z^*(\hat{m}-) \equiv  \lim_{m \uparrow \hat{m}} z^*(m) \quad \quad \text{and} \quad \quad z^*(\hat{m}+) \equiv  \lim_{m \downarrow \hat{m}} z^*(m)
\]
Note that, for $m_0$, we have that $m_0+=m_0$ and by continuity of the value function
\begin{equation} \label{Aux continuity at m0}
c   g (z-z^*(m_0-))   = \frac{g (z-z^*(m_0))}{g (-z^*(m_0))-1} R 
\end{equation}
At the same time, recall that at the boundary $z=z^*(m_0)$ it must be that
\[
0= W(0,m_0) - R = \frac{1}{g (-z^*(m_0))-1} R -c 
\]
and therefore, we must have that\footnote{This is the first equation in Corollary \ref{Cutoffs: smooth pasting}.}
\[
c =  \frac{1}{g (-z^*(m_0))-1} R 
\]
and replacing in \eqref{Aux continuity at m0} we get that
\[
  g (z-z^*(m_0-))   = g (z-z^*(m_0)) 
\]
and $z^*(m_0-)=z^*(m_0)$ as desired. Now let's focus on $m_1$ and note that continuity of the value function gives
\begin{equation} \label{Aux continuity at m1}
\frac{g (z-z^*(m_1-)) }{g (-z^*(m_1-))-1} R=(\pi(m_1)-L+c)  g (z-z^*(m_1))  
\end{equation}
At the boundary $z= z^*(m_1)$ we must have that
\[
\pi(m_1)-L = W(0,m_1)-R 
\]
or
\[
\pi(m_1)-L = (\pi(m)-L+c)  g (-z^*(m_1)) -c-R 
\]
which gives\footnote{This is the second equation in Corollary \ref{Cutoffs: smooth pasting}.}
\[
\frac{R}{\pi(m_1)-L+c} = g (-z^*(m_1))-1  
\]
and replacing in \eqref{Aux continuity at m1} we get
\begin{align*}
\frac{g (z-z^*(m_1-)) }{g (-z^*(m_1-))-1} =\frac{ g (z-z^*(m_1))}{g (-z^*(m_1))-1}  
\end{align*}
and since $\frac{g (z-x) }{g (-x)-1} $ is monotonic in $x$ and we get the desired result $z^*(m_1-)=z^*(m_1)$. Finally, if $m^*<\infty$ we must have that
\[
(\pi(m^*)-L+c) ( g (z-z^*(m^*-)) -1)=0
 \]
and since $\pi(m^*)-L+c>0$ it must be that $g (z-z^*(m^*-)) =1$ and therefore $z=z^*(m^*-)$ for any feasible $z$. This is only possible if $z^*(m^*-)=0$ and there is no feasible $z>z^*(m^*-)$. 
\end{proof}

The next lemma shows that $z^*(m)$ is piecewise differentiable by applying the reflection condition
\[
\frac{\partial W(0,m)}{\partial m} = \lim_{z \uparrow 0} \frac{\partial W(z,m)}{\partial z}
\]

\begin{lemma} \label{Lemma Reflwction}
The boundary $z^*(m)$ verifies the differentiable conditions in \eqref{reflection in prop}.
\end{lemma}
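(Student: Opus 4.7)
The plan is to apply the reflection condition \eqref{reflection c}, $W'_m(0,m)=W'_z(0,m)$, region by region to the piecewise closed form of the candidate value function supplied by Lemma \ref{Form of value F}. In each of the first three regions $W(z,m)+c$ factors as a prefactor (independent of $z$) times $g(z-z^*(m))$; consequently $W'_z(0,m)$ is simply (prefactor)$\cdot g'(-z^*(m))$, while $W'_m(0,m)$ is obtained by differentiating the prefactor in $m$ and, through the chain rule, through the argument $-z^*(m)$ of $g$. Equating the two sides and cancelling the common factor $g'(-z^*(m))$ yields in each region a single linear equation for $\partial z^*(m)/\partial m$ whose solution is the stated piecewise expression.

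Concretely, for $m<m_0$ the prefactor is the constant $c$, so cancellation immediately produces $\partial z^*/\partial m=-1$. For $m\in[m_0,m_1)$ the prefactor $R/(g(-z^*(m))-1)$ depends on $m$ only through $z^*(m)$, but the quotient-rule cross terms of the form $g\cdot g'$ collapse cleanly, leaving a single expression proportional to $g'(-z^*(m))/(g(-z^*(m))-1)^2 \cdot \partial z^*/\partial m$ on the $m$-side; matching with $W'_z(0,m) = R\,g'(-z^*(m))/(g(-z^*(m))-1)$ and cancelling gives $\partial z^*/\partial m=g(-z^*(m))-1$. For $m\in[m_1,m^*)$ the prefactor $\pi(m)-L+c$ contributes a direct $m$-dependence, adding a term $\pi'(m)g(-z^*(m))$ to $W'_m(0,m)$; the reflection condition then becomes
\[
(\pi(m)-L+c)\,g'(-z^*(m))\bigl(1+\partial z^*/\partial m\bigr)=\pi'(m)\,g(-z^*(m)),
\]
from which the expression claimed in \eqref{reflection in prop} follows after rearranging. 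Finally, for $m\geq m^*$ the product is launched at the first visit to the $z=0$ axis, so there is no reflecting continuation region and $z^*(m)\equiv 0$ gives $\partial z^*/\partial m=0$ trivially.

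The main substantive obstacle is not the mechanics of the differentiation but the prior justification, inherited from the proof of Proposition \ref{Prop: Ito's lemma} together with Proposition \ref{Prop: sup issues}, that the correct condition to impose at $z=0$ is precisely the reflection equation $W'_m(0,m)=W'_z(0,m)$: this rests on the fact that $\lim_{ds\downarrow 0}\mathbb{E}[dM_s/ds\mid Z_s=0]=\infty$, which forces the bracketed factor multiplying $dM_s$ in the Itô--Watanabe expansion to vanish along the reflection boundary. Given that, continuity of $z^*(m)$ across the transition points $m_0$ and $m_1$ has already been secured in Lemma \ref{Continuity of z}, and the region-by-region calculations above deliver piecewise differentiability with the explicit slopes in \eqref{reflection in prop}, completing the proof.
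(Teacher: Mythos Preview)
Your proposal is correct and follows the same approach as the paper: apply the reflection condition $W'_m(0,m)=W'_z(0,m)$ region by region to the piecewise formula from Lemma~\ref{Form of value F}, differentiate the prefactor and the argument $-z^*(m)$ of $g$, and solve the resulting linear equation for $\partial z^*(m)/\partial m$ in each regime. The paper's proof additionally invokes Picard--Lindel\"of for local existence of the solution and records the (non-explicit) Abel and generalized Bernoulli forms of the final- and intermediate-stage ODEs, but the core derivation is identical to yours.
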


\begin{proof}[Proof of Lemma \ref{Lemma Reflwction}]

Consider first \underline{the states $ m \in [m_1, m^*) $} which are the states that verify,
\[
W(0,m)-R < \pi(m)-L \leq W(0,m)
\]
In these states the free boundary determines that the product is launched and no further developments occur, so we have that 
\[
W^*(z^*(m),m) = \pi(m)-L
\]
and \eqref{Solution Waldb} reduces to
\[
W(z,m)  =    \left( \pi(m)-L+c  \right)  g (z-z^*(m))  -c
\]
and verifies $W(z,m)  \geq \pi(m)-L $ using that  $g (z-z^*(m)) \geq 1$.  The reflection condition \eqref{reflection c} reduces to
\begin{align} \label{Reflection 2}
  1+\frac{\partial z^*(m)}{\partial m}  & =   \frac{g(-  z(m) )} {g'(-  z(m) )}  \frac{\pi'(m)}{\pi(m)-L+c} >0
 \end{align}

Existence of $z^*(m)$ follows as an application of 
\begin{theorem}[Picard-Lindelof] \label{Theorem: Picard-Lindelof}
Consider an initial value problem 
\begin{equation} \label{Differential equation}
y'(t) = f(x,y), \quad y(x_0)=y_0
\end{equation}
If $f(x,y)$ and $\frac{\partial f(x,y)}{\partial y}$ are continuous in some open interval $R \equiv \left\{ (a,b) \times (c,d)\right\}$ with $(x_0,y_0) \in R$, then the ordinary differential equation \eqref{Differential equation} has a unique solution in some close interval $[x_0-h,x_0+h]$ where $h>0$
\end{theorem}
Unfortunately, the explicit solution to the differential equation \eqref{Reflection 2} is not known as this is another representation of an Abel equation. Let $y(m)= e^{-(\gamma_1-\gamma_2) z^*(m) }$  so we have that \eqref{Reflection 2} is equivalent to
\begin{align*}
   y'(m) \frac{y(m)-1}{y(m) } & =  \frac{\gamma_1-\gamma_2}{\gamma_1}  \left[    \frac{\gamma_1-\gamma_2}{\gamma_2} \frac{\pi'(m)}{\pi(m)-L+c} 
  +  (y(m)-1) \left( \gamma_1  - \frac{\pi'(m)}{\pi(m)-L+c} \right) \right]
 \end{align*}
 and rearranging we get
\begin{align} \label{equation:Abel 2}
 y'(m) (y(m)-1) & =  f_1(m)  y(m) + f_2(m) (y(m) )^2  
 \end{align}
  for 
\begin{align*}
f_1(m) & \equiv \frac{\gamma_1-\gamma_2}{\gamma_2} \left( \frac{\pi'(m)}{\pi(m)-L+c}-  \gamma_2\right) \\
f_2(m) & \equiv  \frac{\gamma_1-\gamma_2}{\gamma_1} \left( \gamma_1-\frac{\pi'(m)}{\pi(m)-L+c}\right)
\end{align*}
which is an Abel equation of the second kind (see \cite{book:handbookODE} page 33). Using the substitution
\[
w(x) = (y(x)-1)e^{\int_{x}^{\overline{m}} f_2(z) dz}
\]
in \eqref{equation:Abel 2} we get the canonical representation of an Abel equation 
\begin{align} \label{equation:Abel 2 can}
w'(x)   w(x)  & = F_1(x)  w(x)  +F_0(x)
 \end{align}
 for
 \begin{align*}
F_0(x) & \equiv ( f_1(x)+f_2(x) ) e^{ 2\int_{x}^{\overline{m}} f_2(z) dz} \\
F_1(x)  & \equiv  ( f_1(x)+2 f_2(x) )  e^{  \int_{x}^{\overline{m}} f_2(z) dz} 
 \end{align*}
Alternatively by setting 
\[
u(x) = (y(x)-1)^{-1} 
\]
in \eqref{equation:Abel 2} we get and Abel equation of the first kind
\begin{align} \label{equation: Abel 1}
\frac{u'(x)}{u(x)  (1+u(x))}   =    f_1(x)  u(x)   +  f_2(x)  (1+u(x)) 
\end{align}

\medskip

Now let's focus on the states $[m_0,m_1)$ which are the states that verify,
\[
0 < \pi(m)-L \leq W(0,m)-R
\]
In these states the free boundary determines that the development process is restarted from the highest available quality, 
\[
W^*(z^*(m),m) = W^*(0,m) - R
\]
and \eqref{Solution Waldb} reduces to
\[
W(z,m)  =   R \frac{g (z-z^*(m))}{g (-z^*(m))-1} - c
\]
and therefore \eqref{reflection c} reduces to
\begin{align} \label{Reflection 2b}
 1+\frac{\partial z^*(m)}{\partial m}   =g (-z^*(m)) 
 \end{align}
 which we know it has a solution by the Picard-Lindelof Theorem. Using the expression for $g$, letting $h(m) \equiv m + z^*(m)$, and rearranging
\[
 \frac{\partial h(m)}{\partial m}   =\frac{\gamma_1}{\gamma_1-\gamma_2} e^{\gamma_2 m } e^{-\gamma_2 h(m)} -\frac{\gamma_2}{\gamma_1-\gamma_2} e^{\gamma_1 m  } e^{-\gamma_1 h(m) }
\]
Let 
 \[
 F(m) = e^{\gamma_1 h(m)} \quad  \Rightarrow \quad  F'(m) = \gamma_1 h'(m) F(m) 
 \]
 so replacing we have that the differential equation \eqref{Reflection 2b} is equal to
 \begin{align} \label{Bernoulli}
\alpha F'(m)  + \gamma_ 1 e^{\gamma_1 m}      &=   \gamma_ 2 e^{\gamma_2 m}  \left(F(m)\right)^{\alpha} 
 \end{align}
 for $\alpha \equiv \frac{\gamma_1-\gamma_2}{\gamma_1 }$. This resembles a Bernoulli differential equation,
\[
\frac{dy}{dx}  + P(x) y(x)=  f(x) (y(x))^{\alpha}  
\]
with a major difference as the linear term is a constant. In fact, it is a particular form of a generalized Bernoulli equation\footnote{The original Bernoulli equation, according to \cite{azevedo2017}, is given by
which differs from \eqref{Bernoulli} in just the term on the right hand side.
} 
 \[
 y'(x) + P(x) h(y(x)) = f(x) g(y(x)) 
 \]
for 
\begin{gather*}
P(\sigma)  =  \frac{ \gamma_ 1 e^{\gamma_1 \sigma}   }{\alpha} \quad  \text{and}  \quad f(\sigma)  =  \frac{\gamma_ 2 e^{\gamma_2 \sigma}}{\alpha}  \\
h(\sigma)  = 1 \quad  \text{and}  \quad g(\sigma)  = \sigma^\alpha 
\end{gather*}
which is not among the equations solvable with \cite{azevedo2017}.

Finally, consider, $m <m_0$ and recall that 
\[
W^*(z(m),m) =0 
\]
which implies that 
\[
W^*(z,m) = c   (g (z-z^*(m))-1)
\]
and therefore \eqref{reflection c} reduces to
\begin{align} \label{Reflection 2c}
 \frac{\partial z^*(m)}{\partial m} =-1
 \end{align}
which gives the explicit result $z^*(m) +m = z^*(m_0) + m_0 $ and therefore $z^*(0) = z^*(m_0) + m_0 $.

\medskip

It remains to show that the optimal program is feasible which requires that
\[
W(z^*(m),m) \geq \max\{0,\pi(m)-L\}
\]
and given the value function, this is equivalent to $W(z^*(m),m) >0$ or using \eqref{Solution Waldc} the value function reduces to
\begin{align*}
W(z^*(m),m) & =  
\begin{cases}
\frac{R}{g (-z^*(m))-1}  - c  & \text{ if } m \in [m_0, m_1) \\
\pi(m)-L  & \text{ if } m \geq m_1 
\end{cases}
\end{align*}
and since $W(z^*(m),m)$ is increasing in $m$ it is necessary and sufficient if
\begin{align*}
c & \leq \frac{R}{g (-z^*(m_0))-1}  \\
L & \leq \pi(m_1)   
\end{align*}
and since the first one holds by definition, while the definition of $m_1$ implies that the second one is equivalent to
\begin{align*}
g(-z^*(m_0)) = \frac{R+c}{c } \geq g(-z^*(m_1)) 
\end{align*}
which is true since $z^*(m_1)>z^*(m_0)$. To see the order of the cutoffs, note that \eqref{Reflection 2} gives
\[
z^*(m_0)+m_0=z^*(m)+m, \; \forall m \in [0,m_0)
\]
and \eqref{Reflection 2b} gives
\[
z^*(m_1) = z^*(m_0)+ \int_{m_0}^{m_1} \left(g (-z^*(x))-1 \right) dx, \; \forall m \in [m_0,m_1)
 \]

\end{proof}

\medskip

It remains to check that $W$ is continuously differentiable in $z$ twice and in $m$ once. It is immediate to see that is differentiable in $z$ even at the boundary since we are using $W(z,m) = W(\underline{z}^*(m),m)$ at all $z<\underline{z}^*(m)$ so we focus on the differentiability of $W(z,m)$ in $m$. This is not trivial as $z^*(m)$ is not differentiable at the cutoffs. Nevertheless, we have
\begin{itemize}
\item Consider the cutoff $m_0$ and note that from the left, we have that
\begin{align*}
\frac{\partial W(0,m)}{\partial m} & = \frac{ \partial \left( c (g(- z^*(m))-1) \right) }{\partial m} =- c g'(- z^*(m))  \frac{\partial z^*(m)}{\partial m} = c g'(- z^*(m))
\end{align*}
and from the right
\begin{align*}
\frac{\partial W(0,m)}{\partial m}  & = \frac{ \partial \left( R \frac{g(- z^*(m))}{g(- z^*(m))-1} -c \right) }{\partial m} \\
& =     \frac{R}{g(- z^*(m))-1} \frac{g'(- z^*(m))}{(g(- z^*(m))-1)} \frac{ \partial z^*(m)}{\partial m} \\
& =     \frac{R}{g(- z^*(m))-1} g'(- z^*(m)) 
\end{align*}
Note that taking limits, we have
\begin{align*}
\lim_{m \uparrow m_0} \frac{\partial W(0,m)}{\partial m} & =c g'(- z^*(m_0)) \\
\lim_{m \downarrow m_0} \frac{\partial W(0,m)}{\partial m} & =   \frac{R}{g(- z^*(m_0))-1} g'(- z^*(m_0))=   c g'(- z^*(m_0))
\end{align*}
\item Consider now $m_1$ and note that from the left, we have that
\begin{align*}
\frac{\partial W(0,m)}{\partial m} & =  \frac{R}{g(- z^*(m))-1} \frac{g'(- z^*(m))}{(g(- z^*(m))-1)} \frac{ \partial z^*(m)}{\partial m} \\
& =  \frac{R}{g(- z^*(m))-1} g'(- z^*(m))
\end{align*}
and from the right we have
\begin{align*}
\frac{\partial W(0,m)}{\partial m} & =  \frac{\partial \left( \left( \pi(m)-L+c\right) g(-z^*(m))-c\right)}{\partial m} \\
& = \pi'(m) g(-z^*(m)) - \left( \pi(m)-L+c\right) g'(-z^*(m)) \frac{\partial z^*(m)}{\partial m} \\
& =  \left( \pi(m)-L+c\right) g'(-z^*(m)) 
\end{align*}
and taking limits from the right and from the right we have
\begin{align*}
\lim_{m \uparrow m_1} \frac{\partial W(0,m)}{\partial m} &  =  \frac{R}{g(- z^*(m_1))-1} g'(- z^*(m_1)) = \lim_{m \downarrow m_1}  \frac{\partial W(0,m)}{\partial m} 
\end{align*}
when we used the definition of the cutoff $m_1$.
\end{itemize}

\medskip

We finish the proof by providing some further characterization results

\textbf{The case of $m \in [m_1,m^*)$:} Having no explicit solution for $z^*(m)$ when $m \in [m_1,m^*)$, we focus on understanding the behavior of the free boundary as an implicit solution. First, note that the total differential of \eqref{Reflection 2}  is given by
\begin{align*}
\frac{\partial^2 z^*(m)}{\partial m^2}   &
+\left(     \frac{ g'(-  z(m) )}{g(-  z(m) )  } - \frac{g''(-  z(m) )} {g'(-  z(m) ) } \right)   \frac{\partial z^*(m)}{\partial m} \left( 1+\frac{\partial z^*(m)}{\partial m} \right) \\
& =  \left( \frac{\pi''(m)}{\pi'(m)} -   \frac{\pi'(m)}{\pi(m)-L+c}  \right) \left( 1+\frac{\partial z^*(m)}{\partial m} \right) <0
\end{align*}
Recalling the expression for $g$ we have
\[
g'(x) =  \gamma_2 \left(  g(x) - e^{\gamma_1 x} \right) \quad \quad g''(x) =  \gamma_2 \left(  g'(x) - \gamma_1 e^{\gamma_1 x} \right)
\]
and therefore we 
\[
\frac{g'(x)}{g(x)} -\frac{g''(x)}{g'(x)}   =      \frac{\gamma_2  e^{\gamma_1 x}}{g(x)}   \frac{\gamma_1 e^{\gamma_2 x}}{g'(x)} <0
\]
which gives
\begin{align*}
\frac{\frac{\partial^2 z^*(m)}{\partial m^2} }{1+\frac{\partial z^*(m)}{\partial m} }  &
+  \frac{\gamma_2  e^{-\gamma_1 z^*(m)}}{g(-z^*(m))}   \frac{\gamma_1 e^{-\gamma_2 z^*(m)}}{g'(-z^*(m))}   \frac{\partial z^*(m)}{\partial m}  =   \frac{\pi''(m)}{\pi'(m)} -   \frac{\pi'(m)}{\pi(m)-L+c}  <0
\end{align*}
It follows that, if $ -1< \frac{\partial z^*(m)}{\partial m} \leq 0$, then $\frac{\partial^2 z^*(m)}{\partial m^2}<0$ and $z^*(m)$ is decreasing.

We have then three possible cases

\begin{enumerate}
\item If $\frac{\partial z^*(m)}{\partial m}_{m=m_1} < 0$, then $z^*(m_1)$ decreases continuously until $m$ reaches $\overline{q}$ and it reaches it at some $z^*(\overline{q})<0$. Note that
\begin{align*}
 \lim_{m \uparrow \overline{q}} \frac{\partial z^*(m)}{\partial m}  & =   \lim_{m \uparrow \overline{q}}\left[  \frac{g(-  z(m) )} {g'(-  z(m) )}  \frac{\pi'(m)}{\pi(m)-L+c} -1 \right] = -1
  \end{align*}
  Moreover,  we must have some $\delta>0$ such that $z^*(m)<-\delta$ and therefore
\[
W(z,m) \geq (\pi(m)-L+c) g(z-\delta) - c
\]
which implies that
\[
W(0,\overline{q}) \geq (\pi(\overline{q})-L+c) g(-\delta) - c > \pi(\overline{q})-L
\]
which is not possible.

\item If $\frac{\partial z^*(m)}{\partial m} \geq 0$ for all $m>m_1$. There are therefore two situations: $\lim_{m \uparrow \overline{q}} z^*(m)<0$ or $\lim_{m \uparrow \overline{q}} z^*(m) = 0$

\begin{itemize}
\item If $\lim_{m \uparrow \overline{q}} z^*(m)<0$,  it is trivial to see that
\[
\lim_{m \uparrow \overline{q}}  \frac{\partial z^*(m)}{\partial m} = 0
\]
because $\pi'(\overline{q})=0$.  Again, we must have some $\delta>0$ such that $z^*(m)<-\delta$ and therefore
\[
W(z,m) \geq (\pi(m)-L+c) g(z-\delta) - c
\]
which implies that
\[
W(0,\overline{q}) \geq (\pi(\overline{q})-L+c) g(-\delta) - c > \pi(\overline{q})-L
\]
which is not possible. 
\item It follows then we must have that $\lim_{m \uparrow \overline{q}} z^*(m) = 0$, where, we need to apply L'Hopital's rule to get
\begin{align*}
\lim_{m \uparrow \overline{q}} \left( \frac{\partial z^*(m)}{\partial m}  \right) & = \lim_{m \rightarrow \overline{q}}  \frac{\pi'(m)}{\pi(m)-L+c} \frac{g(-z^*(m))}{g'(-z^*(m))} -1\\
& = \frac{1}{\pi(\overline{q})-L+c} \lim_{m \rightarrow \overline{q}}   \frac{\pi''(m)}{-g''(-z^*(m))\frac{\partial z^*(m)}{\partial m}} -1 \\
& = \frac{1}{\pi(\overline{q})-L+c}   \frac{\pi''(\overline{q})}{\gamma_1 \gamma_ 2  \lim_{m \uparrow \overline{q}} \left( \frac{\partial z^*(m)}{\partial m}  \right)} - 1
\end{align*}
and therefore
\begin{align*}
\lim_{m \uparrow \overline{q}} \left( \frac{\partial z^*(m)}{\partial m}  \right)+ \left( \lim_{m \uparrow \overline{q}} \left( \frac{\partial z^*(m)}{\partial m}  \right) \right)^2 & =  \frac{1}{\pi(\overline{q})-L+c}   \frac{\pi''(\overline{q})}{\gamma_1 \gamma_ 2  } \\
 \lim_{m \uparrow \overline{q}} \left( \frac{\partial z^*(m)}{\partial m}  \right)  & =  \sqrt{\frac{1}{\pi(\overline{q})-L+c}   \frac{\pi''(\overline{q})}{\gamma_1 \gamma_ 2  } +\frac{1}{4}} -\frac{1}{2} >0
\end{align*}
\end{itemize}

\item If $\frac{\partial z^*(m)}{\partial m}_{m=m_1} > 0$  and there is some $\overline{m}$ such that $ \frac{\partial z^*(m)}{\partial m}_{m=\overline{m}} = 0$. In this case we have that 
\[
 \frac{\pi'(\overline{m})}{\pi(\overline{m})-L+c} =  \frac{g(-  z(\overline{m}) )} {g'(-  z(\overline{m}) )} 
 \]
Recalling that if $ -1< \frac{\partial z^*(m)}{\partial m} \leq 0$, then $\frac{\partial^2 z^*(m)}{\partial m^2}<0$, it follows that $\frac{\partial z^*(m)}{\partial m} < 0$ if $ m > \overline{m}$ and we are back in the first case

\end{enumerate}

It follows that $ \frac{\partial z^*(m)}{\partial m}>0$ for all $m \in [m_1,m^*)$ and
\begin{align*}
 \lim_{m \uparrow \overline{q}} \left( \frac{\partial z^*(m)}{\partial m}  \right)  & =  \sqrt{\frac{1}{\pi(\overline{q})-L+c}   \frac{\pi''(\overline{q})}{\gamma_1 \gamma_ 2  } +\frac{1}{4}} -\frac{1}{2} >0
\end{align*}

\medskip

\textbf{The case of $m \in [m_0,m_1)$:} Again, having no explicit solution for $z^*(m)$ when $m \in [m_0,m_1)$, we focus on understanding the behavior of the free boundary as an implicit solution. 
Differntiating \eqref{Reflection 2b} we get
\begin{align*} 
\frac{\partial^2 z^*(m)}{\partial m^2}  & =-g'(-  z^*(m) ) \frac{\partial z^*(m)}{\partial m} 
 \end{align*}
 which implies that
\begin{align*} 
sign \left\{ \frac{\partial^2 z^*(m)}{\partial m^2}  \right\} & =- sign \left\{ \frac{\partial z^*(m)}{\partial m} \right\}
 \end{align*}
Using
\[
g (-z^*(m_0))  = \frac{R  + c}{c} >1
\]
from \eqref{Cutoffs: smooth pasting} in the reflection condition \eqref{Reflection 2b} we get that 
\[
\lim_{m \downarrow m_0} \frac{\partial z^*(m)}{\partial m}   =\frac{R  }{c} >0
\]
On the other hand, using from \eqref{Cutoffs: smooth pasting} gives that
\[
g (-z^*(m_1)) = 1+ \frac{R}{\pi(m_1)-L+c} 
\]
in the \eqref{Reflection 2b} we get that 
\[
\lim_{m \uparrow m_1} \frac{\partial z^*(m)}{\partial m}   =\frac{R}{\pi(m_1)-L+c} <\frac{R}{c} 
\]

\newpage

To determine whether $m_0>0$ or not we note that the system that determines the equilibrium is given by
\begin{align} \label{system to solve}
0 & = z^*(\overline{q})  && \\
\overline{q} & = z^*(m)+ m+ \int_m^{\overline{q}}  \frac{\pi'(x)}{\pi(x)-L+c} \frac{g(-z^*(x))}{g'(-z^*(x))} dx  & & m \in (m_1,\overline{q}] \nonumber\\
c & =  \frac{R}{g(-z^*(m_1))-1} - (\pi(m_1)-L) & &  \nonumber\\
z^*(m_1)+ m_1  & =z^*(m)+ m+ \int_{m}^{m_1} g(-z^*(x)) dx  & & m \in (m_0,m_1) \nonumber\\
c & =  \frac{R}{g(-z^*(m_0))-1}  & &  \nonumber  \\
z^*(m_0) + m_0 & = z^*(m) + m    & & m \in [0,m_0) \nonumber 
\end{align}
Note that for $m_0 >0$ we must have that 
\[
z^*(m_1)+ m_1  < z^*(m_0)+ \int_{0}^{m_1} g(-z^*(x)) dx 
\]
so it must be that 
\begin{align*} 
0 & <z^*(m_0)+ \int_{0}^{m_1} g(-z^*(x)) dx+ \int_{m_1}^{\overline{q}}  \frac{\pi'(x)}{\pi(x)-L+c} \frac{g(-z^*(x))}{g'(-z^*(x))} dx  -\overline{q}
\end{align*}
Let
\[
G_{\overline{q}}(y) = z^*(y)+y+ \int_{y}^{m_1} g(-z^*(x)) dx+ \int_{m_1}^{\overline{q}}  \frac{\pi'(x)}{\pi(x)-L+c} \frac{g(-z^*(x))}{g'(-z^*(x))} dx  -\overline{q}
\]
and note that if $G_{\overline{q}}(0) <(>)0$, then $m_0>(=)0$. Note that 
\begin{align*}
G_{\overline{q}}(0) & = z^*(0)+ \int_{0}^{m_1} g(-z^*(x)) dx+ \int_{m_1}^{\overline{q}}  \frac{\pi'(x)}{\pi(x)-L+c} \frac{g(-z^*(x))}{g'(-z^*(x))} dx  -\overline{q} \\
& =z^*(m_1)+ m_1+ \int_{m_1}^{\overline{q}}  \frac{\pi'(x)}{\pi(x)-L+c} \frac{g(-z^*(x))}{g'(-z^*(x))} dx  -\overline{q}
\end{align*}
we want to show that for $\overline{q}$ low $G_{\overline{q}}(0)<0$ and for $\overline{q}$ high $G_{\overline{q}}(0)>0$. The next claim shows how $\overline{q}$ affects the different equilibrium cutoffs.

\qed

\newpage


\subsection*{Proof of Lemma \ref{Lemma comparative results c}}


Let's focus first on $m >m_1$ where we have that the first line of \eqref{system to solve} gives
\begin{align} \label{auxiliar for q}
1 & = \frac{ \partial z^*(m)}{\partial \overline{q}} +  \frac{\pi'(\overline{q})}{\pi(\overline{q})-L+c} \frac{g(-z^*(\overline{q}))}{g'(-z^*(\overline{q}))} - \mathcal{H}(m)  
\end{align}
where
\[
\mathcal{H}(m) \equiv \int_m^{\overline{q}}  h(x) \frac{ \partial z^*(x)}{\partial \overline{q}} dx
\]
and 
\[
h(m) \equiv \frac{\pi'(m)}{\pi(m)-L+c}  \frac{g(-z^*(m))}{g'(-z^*(m))} \left( \frac{g'(-z^*(m))}{g(-z^*(m))}-\frac{g''(-z^*(m))}{g'(-z^*(m))} \right) <0
\]
Using the function $\mathcal{H}(m)$ and the fact that
\begin{align*}
 \frac{\pi'(\overline{q})}{\pi(\overline{q})-L+c} \frac{g(-z^*(\overline{q}))}{g'(-z^*(\overline{q}))} =&  \lim_{m \uparrow q} \left( \frac{\partial z^*(m)}{\partial m}  \right) +1 \\
  & =  \sqrt{\frac{1}{\pi(\overline{q})-L+c}   \frac{\pi''(\overline{q})}{\gamma_1 \gamma_ 2  } +\frac{1}{4}} +\frac{1}{2} >0
\end{align*}
we get
\begin{align*}
 \mathcal{H}'(m)+  h(m) \mathcal{H}(m)  & =  h(m)  \lim_{m \uparrow \overline{q}} \left( \frac{\partial z^*(m)}{\partial m}  \right)  \\
\mathcal{H}'(m)+  h(m) \mathcal{H}(m)  e ^{\int_m^{\overline{q}} h(x)dx} & = h(m) e ^{\int_m^{\overline{q}} h(x)dx}  \lim_{m \uparrow \overline{q}} \left( \frac{\partial z^*(m)}{\partial m}  \right) \\
\frac{d \left(\mathcal{H}(m) e^{-\int_m^{\overline{q}} h(x) dx } \right)}{d m} & =  \frac{d \left( e^{-\int_m^{\overline{q}} h(x) dx } \right) }{dm}   \lim_{m \uparrow \overline{q}} \left( \frac{\partial z^*(m)}{\partial m}  \right) \\
\mathcal{H}(m)   & =  \left( 1-e^{\int_m^{\overline{q}} h(x) dx } \right)   \lim_{m \uparrow \overline{q}} \left( \frac{\partial z^*(m)}{\partial m} \right) >0
 \end{align*}
where we used that $\mathcal{H}(\overline{q})=0$. Using the definition of $ \mathcal{H}(m)$ we get
\begin{align*}
 - h(m) \frac{ \partial z^*(m)}{\partial \overline{q}} & = \mathcal{H}'(m)   \\
\frac{ \partial z^*(m)}{\partial \overline{q}} & =  -  e^{\int_m^{\overline{q}} h(x) dx } \lim_{m \uparrow \overline{q}} \left( \frac{\partial z^*(m)}{\partial m}  \right) < 0
\end{align*}
for all $m >m_1$. In particular
\begin{align}\label{auxiliar for q b}
\frac{ \partial z^*(\overline{q})}{\partial \overline{q}} & = -  \lim_{m \uparrow \overline{q}} \left( \frac{\partial z^*(m)}{\partial m}  \right) 
\end{align}

Now consider the system to determine $(z^*(m_1),m_1)$ which is given by
\begin{align*}
\overline{q} & = z^*(m_1)+ m_1+ \int_{m_1}^{\overline{q}}  \frac{\pi'(x)}{\pi(x)-L+c} \frac{g(-z^*(x))}{g'(-z^*(x))} dx \nonumber\\
\pi(m_1)-L+ c & =  \frac{R}{g(-z^*(m_1))-1}  
\end{align*}
and differentiating we get
\begin{align*}
1 - \frac{ d z^*(m_1)}{d \overline{q}} + \mathcal{H}(m_1) & =    \frac{\pi'(\overline{q})}{\pi(\overline{q})-L+c} \frac{g(-z^*(\overline{q}))}{g'(-z^*(\overline{q}))} -  \left( \frac{\pi'(m_1)}{\pi(m_1)-L+c} \frac{g(-z^*(m_1))}{g'(-z^*(m_1))} -1 \right) \frac{ d m_1}{d \overline{q}} \\
\frac{\pi'(m_1)}{\pi(m_1)-L+ c} \frac{ d m_1}{d \overline{q}} & =  \frac{g'(-z^*(m_1))}{g(-z^*(m_1))-1}   \frac{ d z^*(m_1)}{d \overline{q}}
\end{align*}
Using \eqref{auxiliar for q} and \eqref{auxiliar for q b} we get
\begin{align*}
 e^{\int_{m_1}^{\overline{q}} h(x) dx }   \lim_{m \uparrow \overline{q}} \left( \frac{\partial z^*(m)}{\partial m} \right) & =    \left( \frac{\pi'(m_1)}{\pi(m_1)-L+c} \frac{g(-z^*(m_1))}{g'(-z^*(m_1))} -1 \right) \frac{ d m_1}{d \overline{q}} -  \frac{ d z^*(m_1)}{d \overline{q}}\\
\frac{\pi'(m_1)}{\pi(m_1)-L+ c} \frac{ d m_1}{d \overline{q}} & =  \frac{g'(-z^*(m_1))}{g(-z^*(m_1))-1}   \frac{ d z^*(m_1)}{d \overline{q}}
\end{align*}
and therefore we get \eqref{effects of q}. Unfortunately, the effects on $m_1$ and $z^*(m_1)$ are not determined, but 
\begin{align*}
 \frac{ d m_1}{d \overline{q}}         \frac{\pi'(m_1)}{\pi(m_1)-L+c}  & =  \frac{ d z^*(m_1)}{d \overline{q}} \frac{g'(-z^*(m_1))}{g(-z^*(m_1))-1} 
 \end{align*}
so $\frac{ d m_1}{d \overline{q}}  \frac{ d z^*(m_1)}{d \overline{q}} >0$ and both move in the same direction.

Now consider $m \in (m_0,m_1)$ and \eqref{system to solve} gives 
\begin{equation*} 
\frac{d z^*(m_1)}{d \overline{q}}  -(g(-z^*(m_1))-1)  \frac{d m_1}{d \overline{q}}  =\frac{\partial z^*(m)}{\partial \overline{q}} - \int_{m}^{m_1} g'(-z^*(x)) \frac{\partial z^*(x)}{\partial \overline{q}} dx  
\end{equation*}
or using the previous result for $ \frac{d m_1}{d \overline{q}} $ we get
\begin{equation*}
 (g(-z^*(m_1))-1)    e^{\int_{m_1}^{\overline{q}} h(x) dx }  \left( \lim_{m \uparrow \overline{q}} \left( \frac{\partial z^*(m)}{\partial m}  \right)\right)=\frac{\partial z^*(m)}{\partial \overline{q}} - \int_{m}^{m_1} g'(-z^*(x)) \frac{\partial z^*(x)}{\partial \overline{q}} dx  
\end{equation*}
Let $\mathcal{M}(m) \equiv \int_{m}^{m_1} g'(-z^*(x)) \frac{\partial z^*(x)}{\partial \overline{q}} dx$ to get
\begin{equation*} 
-g'(-z^*(m)) e^{\int_{m_1}^{\overline{q}} h(x) dx }  (g(-z^*(m_1))-1)  \left(  \lim_{m \uparrow \overline{q}} \left( \frac{\partial z^*(m)}{\partial m}  \right)  \right)  =\mathcal{M}'(m) + g'(-z^*(m)) \mathcal{M}(m)  
\end{equation*}
with solution given by
\begin{equation*} 
  \mathcal{M}(m)  =  \left( e^{\int^{m_1}_m g'(-z^*(x)) dx } -1 \right)e^{\int_{m_1}^{\overline{q}} h(x) dx }  (g(-z^*(m_1))-1)  \left(  \lim_{m \uparrow \overline{q}} \left( \frac{\partial z^*(m)}{\partial m}  \right)  \right)  >0
\end{equation*}
and therefore for all $m \in (m_0,m_1)$
\[
\frac{\partial z^*(m)}{\partial \overline{q}} = e^{\int^{m_1}_m g'(-z^*(x)) dx }  e^{\int_{m_1}^{\overline{q}} h(x) dx }  (g(-z^*(m_1))-1)  \left(  \lim_{m \uparrow \overline{q}} \left( \frac{\partial z^*(m)}{\partial m}  \right)  \right)  >0
\]
Now note that for all $m \in (m_0,m_1)$ 
\begin{align*}
\int_m^{m_1} g'(-z^*(x)) dx & = \int_m^{m_1} \frac{g'(-z^*(x))}{g(-z^*(x))-1} (g(-z^*(x))-1) dx \\
& = -\int_m^{m_1} \frac{\partial \ln \left( g(-z^*(x))-1 \right)}{\partial z^*(x)} \frac{\partial z^*(x)}{\partial x} dx \\
& = - \left(  \ln \left( g(-z^*(x))-1 \right) \right)_m^{m_1} \\
& =  \ln \left( \frac{ g(-z^*(m))-1}{g(-z^*(m_1))-1} \right) 
\end{align*}
and therefore
\[
e^{\int_m^{m_1} g'(-z^*(x)) dx} = e^{\ln \left( \frac{ g(-z^*(m))-1}{g(-z^*(m_1))-1} \right)} = \frac{ g(-z^*(m))-1}{g(-z^*(m_1))-1}
\]
which implies that
\begin{align*}
\frac{\partial z^*(m)}{\partial \overline{q}} & = e^{\int_{m_1}^{\overline{q}} h(x) dx }  (g(-z^*(m))-1)  \left(  \lim_{m \uparrow \overline{q}} \left( \frac{\partial z^*(m)}{\partial m}  \right)  \right) > 0  \\
 \mathcal{M}(m)  & =  \left(  g(-z^*(m))-g(-z^*(m_1)) \right)e^{\int_{m_1}^{\overline{q}} h(x) dx }   \left(  \lim_{m \uparrow \overline{q}} \left( \frac{\partial z^*(m)}{\partial m}  \right)  \right) >0 
\end{align*}

Consider now the state $\left( z^*(m_0), m_0 \right)$ and noting that $\frac{d z^*(m_0)}{d \overline{q}} =0$ we have
\begin{align*} 
(g(-z^*(m_0))-1)\frac{d m_0}{d \overline{q}}   & =- \int_{m_0}^{m_1} g'(-z^*(x)) \frac{\partial z^*(x)}{\partial \overline{q}} dx - \left( \frac{d z^*(m_1)}{d \overline{q}}  -(g(-z^*(m_1))-1)  \frac{d m_1}{d \overline{q}} \right) \\
\frac{R}{c} \frac{d m_0}{d \overline{q}} & =- \mathcal{M}(m_0)  - \left( \frac{d z^*(m_1)}{d \overline{q}}  -(g(-z^*(m_1))-1)  \frac{d m_1}{d \overline{q}} \right) \\
 \frac{d m_0}{d \overline{q}} & =- e^{\int_{m_1}^{\overline{q}} h(x) dx }   \left(  \lim_{m \uparrow \overline{q}} \left( \frac{\partial z^*(m)}{\partial m}  \right)  \right) <0
\end{align*}
 
It is easy to see that for $m<m_0$ we have that
\[
\frac{dm_0}{d \overline{q}}  = \frac{dz^*(m)}{d \overline{q}}
\]


\qed

Note that
\begin{align*}
\frac{d G_{\overline{q}}(0)}{d \overline{q}} & =\frac{d z^*(m_1)}{d \overline{q}}  +  \frac{\pi'(\overline{q})}{\pi(\overline{q})-L+c} \frac{g(-z^*(\overline{q}))}{g'(-z^*(\overline{q}))}   - \left( \frac{\pi'(m_1)}{\pi(m_1)-L+c} \frac{g(-z^*(m_1))}{g'(-z^*(m_1))} -1 \right)  \frac{d m_1}{d \overline{q}}  -1
\end{align*}
and using \eqref{effects of q} we get
\begin{align*}
\frac{d G_{\overline{q}}(0)}{d \overline{q}} & = \left( 1 - e^{\int_{m_1}^{\overline{q}} h(x) dx } \right) \lim_{m \uparrow \overline{q}} \left( \frac{\partial z^*(m)}{\partial m}  \right)   >0 
\end{align*}

\newpage

\subsection*{Proof of Lemma \ref{Prop: lemma viable}.}


It follows by defining the stopping times
\begin{align*}
\overline{\mathbb{T}}  \equiv \inf \left\{t>0: X(t) >m_0 \right\} 
\quad \quad \quad  
\underline{\mathbb{T}}  \equiv \inf \left\{t>0: X(t) <z^*(0) \right\}
\end{align*}
\normalsize
and noting that
\small
\begin{align*}
\Pr \left(M(t) > m_0, \min_{s \leq t} \{ Z(s) - Z^*(M(s)) \} >0 \right) & = \Pr (\overline{\mathbb{T}} > \underline{\mathbb{T}}) \\
& = \Pr (\mathbb{T} = \overline{\mathbb{T}} ) \\
& = \frac{1- e^{ -(\gamma_1+\gamma_2)  z^*(0)}}{e^{- (\gamma_1+\gamma_2)  m_0}- e^{ -(\gamma_1+\gamma_2)  z^*(0)}} \\
& = \frac{e^{ (\gamma_1+\gamma_2)  z^*(0)}- 1}{e^{- (\gamma_1+\gamma_2)  (m_0- z^*(0))}- 1} \\
& = \frac{e^{ (\gamma_1+\gamma_2)  z^*(0)}- 1}{e^{ (\gamma_1+\gamma_2)  z^*(m_0)}- 1} 
\end{align*}
\normalsize
where $\mathbb{T} \equiv \inf \left\{ \underline{\mathbb{T}},\overline{\mathbb{T}}\right\}$. The expression follows by noting $
z^*(m_0) + m_0 = z^*(0)$ and $\gamma_1+\gamma_2 = -\frac{2 \mu}{\sigma^2}$. It follows then that
\small
\begin{align*}
& \frac{ d \Pr \left(M(t) > m_0, \min_{s \leq t} \{ Z(s) - Z^*(M(s)) \} >0 \right) }{dR(L)} \\
&  = (\gamma_1+\gamma_2)  e^{ (\gamma_1+\gamma_2)  (z^*(0)+ z^*(m_0))} \frac{ \frac{ d z^*(0)}{d R}  \left( 1-e^{ -(\gamma_1+\gamma_2)  z^*(m_0)} \right)-\frac{ d z^*(m_0)}{d R} \left(1-e^{- (\gamma_1+\gamma_2)  z^*(0)}\right)   }{\left( e^{ (\gamma_1+\gamma_2)  z^*(m_0)}- 1 \right)^2}
 \nonumber
\end{align*}
\normalsize
Using that $\gamma_1+\gamma_2<0$ we have that
\begin{align} \label{Compa for analysis}
 & \frac{ d \Pr \left(M(t) > m_0, \min_{s \leq t} \{ Z(s) - Z^*(M(s)) \} >0 \right) }{dR(L)} >(<)0 \\
 & \quad \quad \quad \iff \quad 
 \frac{ d z^*(0)}{d R(K)}  \left( 1-e^{ -(\gamma_1+\gamma_2)  z^*(m_0)} \right)<(>)\frac{ d z^*(m_0)}{d R(L)} \left(1-e^{- (\gamma_1+\gamma_2)  z^*(0)}\right)   \nonumber
 \end{align}
 
 The next two claims provide comparative statics results that are needed to find changes in $z^*(0)$ and $z^*(m_0)$.

\begin{claim} \label{Claim comparative results}
The effects of $R$ are given by
\begin{align*}
\frac{d m}{d R} =
\begin{cases}
 \frac{1}{ (\pi(m_1)-L + c )g'(-z^*(m_1)) -\pi'(m_1)} & \text{ if } m=m_1\\
\frac{g(-z^*(m_1))}{R g'(-z^*(m_1))} e^{\int_{m_0}^{m_1} g'(-z^*(x)) dx} -\frac{1}{R g'(-z^*(m_1))} >0 & \text{ if } m=m_0
 \end{cases}
\end{align*}
and\footnote{Specifically and for future reference,
\[
\frac{d z^*(m_0)}{d R} =- \frac{1}{c g'(-z^*(m_0))} 
\]
}
\[
\frac{ d z^*(m)}{d R}  = 0 \text{ if } m >m_1 \quad \text{ and } \quad \frac{ d z^*(m)}{d R} < 0 \text{ if } m \in [m_0, m_1] \quad \text{ and } \quad \frac{ d z^*(m)}{d R} > 0 \text{ if } m < m_0
 \]
 while
 \begin{align*}
\frac{d z^*(m_1)}{d R}  =  \left(  \frac{\pi'(m_1)}{\pi(m_1)-L+c} \frac{g(-z^*(m_1))}{g'(-z^*(m_1))} -1 \right) \frac{d m_1}{d R}
\end{align*}

\end{claim}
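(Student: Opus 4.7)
The plan is to mimic the backward-propagation argument used in the proof of Lemma~\ref{Lemma comparative results c}, but now exploiting the crucial simplification that, unlike changes in $\overline{q}$, changes in $R$ leave the final-stage ODE \eqref{Reflection 2} and its terminal condition $z^*(\overline{q})=0$ entirely untouched. By Picard--Lindel\"of uniqueness (Theorem~\ref{Theorem: Picard-Lindelof}), the final-stage boundary is therefore $R$-invariant, so $\partial z^*(m)/\partial R=0$ for every $m\in(m_1,\overline{q}]$, and the change in $z^*(m_1)$ is inherited entirely from the sliding cutoff: $\tfrac{dz^*(m_1)}{dR}=z^{*\prime}(m_1^+)\tfrac{dm_1}{dR}$, with the right-hand slope read off from the reflection equation.

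Next I would differentiate the matching identity $R=(\pi(m_1)-L+c)(g(-z^*(m_1))-1)$ in $R$, substitute the preceding expression for $\tfrac{dz^*(m_1)}{dR}$, and expand $z^{*\prime}(m_1^+)=\tfrac{g(-z^*(m_1))}{g'(-z^*(m_1))}\tfrac{\pi'(m_1)}{\pi(m_1)-L+c}-1$. The cross-product terms collapse and one recovers $\tfrac{dm_1}{dR}=[(\pi(m_1)-L+c)g'(-z^*(m_1))-\pi'(m_1)]^{-1}$; the displayed formula for $\tfrac{dz^*(m_1)}{dR}$ then follows by substitution into the identity of the previous paragraph.

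On the intermediate stage $(m_0,m_1)$, the ODE $z^{*\prime}=g(-z^*)-1$ is autonomous in $R$, so $\psi(m):=\partial z^*(m;R)/\partial R$ satisfies the linear variational equation $\psi'(m)+g'(-z^*(m))\psi(m)=0$. I would pin down its terminal value by equating the two representations of the total derivative at $m_1$, namely $z^{*\prime}(m_1^+)\tfrac{dm_1}{dR}=\psi(m_1^-)+(g(-z^*(m_1))-1)\tfrac{dm_1}{dR}$, and then integrate in closed form using the primitive $\int_m^{m_1} g'(-z^*(x))\,dx=\ln\tfrac{g(-z^*(m))-1}{g(-z^*(m_1))-1}$ already exploited in the $\overline{q}$ proof, obtaining a strictly negative $\psi(m)$ throughout $(m_0,m_1)$.

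Finally, at $m_0$ the matching $c(g(-z^*(m_0))-1)=R$ depends on $R$ and $z^*(m_0)$ alone, so implicit differentiation yields the total derivative $\tfrac{dz^*(m_0)}{dR}=-[c\,g'(-z^*(m_0))]^{-1}$ immediately. Splitting this total derivative as $\psi(m_0^+)+z^{*\prime}(m_0^+)\tfrac{dm_0}{dR}$, with $z^{*\prime}(m_0^+)=R/c$ and $\psi(m_0^+)$ from the previous step, then solves for $\tfrac{dm_0}{dR}$. For $m<m_0$ the first-stage linear shift $z^*(m)+m=z^*(m_0)+m_0$ gives $\tfrac{dz^*(m)}{dR}=\tfrac{dz^*(m_0)}{dR}+\tfrac{dm_0}{dR}$. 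The main obstacle is neither the ODE manipulation nor the algebra but the careful bookkeeping needed to separate \emph{total} derivatives at moving cutoffs from the \emph{partial} derivative $\psi$ at fixed $m$; the sign claims $\tfrac{dm_0}{dR}>0$ and $\tfrac{dz^*(m)}{dR}>0$ on the first stage then reduce to the monotonicity of $g/g'$ on $(0,\infty)$ (inherited from the explicit two-exponential form of $g$) combined with the ordering $-z^*(m_0)>-z^*(m_1)>0$.
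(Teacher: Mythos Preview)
Your proposal is correct and follows essentially the same backward-propagation scheme as the paper: establish $R$-invariance of the final-stage boundary, differentiate the matching identity at $m_1$, linearize and integrate through the intermediate stage using the same primitive $\int_m^{m_1} g'(-z^*)\,dx=\ln\!\frac{g(-z^*(m))-1}{g(-z^*(m_1))-1}$, then differentiate the $m_0$ matching and carry the result to the first stage via $z^*(m)+m=z^*(m_0)+m_0$. The only cosmetic difference is that you invoke Picard--Lindel\"of uniqueness and the variational equation $\psi'+g'(-z^*)\psi=0$ directly, whereas the paper works with the integrated system \eqref{system to solve} and the auxiliary integrals $\mathcal{H}_R$, $\mathcal{L}$; these are equivalent formulations of the same argument.
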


\begin{proof}[Proof of Claim \ref{Claim comparative results}]

From the first two equations in \eqref{system to solve} we have for all $m \in (m_1,\overline{q}]$,
\begin{align*}
0 & =\frac{\partial z^*(m)}{\partial R} - \mathcal{H}_R(m) 
\end{align*}
where
\[
\mathcal{H}_R(m) \equiv \int_m^{\overline{q}}  h_R(x) \frac{ \partial z^*(x)}{\partial R} dx
\]
and 
\[
h_R(m) \equiv \frac{\pi'(m)}{\pi(m)-L+c}  \frac{g(-z^*(m))}{g'(-z^*(m))} \left( \frac{g'(-z^*(m))}{g(-z^*(m))}-\frac{g''(-z^*(m))}{g'(-z^*(m))} \right) <0
\]
Note that one solution is $\frac{\partial z^*(m)}{\partial R}=0$ for all $m >m_1$. Alternatively, assume that $\frac{\partial z^*(m)}{\partial R} \neq 0$
Note that $\mathcal{H}'_R(m) = -h_R(m) \frac{ \partial z^*(m)}{\partial R}$ so we get
\begin{align*}
0 & = \mathcal{H}'_R(m)+ h_R(m) \mathcal{H}_R(m) 
\end{align*}
with solution
\begin{align*}
 \mathcal{H}_R(m) & = A e^{ \int_m^{\overline{q}} h_R(x) dx}
\end{align*}
where we used the boundary condition $\mathcal{H}_R(\overline{q})=0$ and therefore
\[
 \frac{ \partial z^*(m)}{\partial R} = A e^{ \int_m^{\overline{q}} h_R(x) dx} 
 \]
Recall that $ \frac{ \partial z^*(m)}{\partial R}_{m=\overline{q}}=0$ so we must have that $A=0$, and therefore, the only solution is $\frac{\partial z^*(m)}{\partial R}=0$ for all $m>m_1$.

It follows then that the system is separable and the changes in $(z^*(m_1),m_1)$ are given by
\begin{align*}
\frac{d z^*(m_1)}{d R} &  =  \left(  \frac{\pi'(m_1)}{\pi(m_1)-L+c} \frac{g(-z^*(m_1))}{g'(-z^*(m_1))} -1 \right) \frac{d m_1}{d R}   \\
 \frac{d m_1}{d R}  & = \frac{1}{ (\pi(m_1)-L + c )g'(-z^*(m_1)) -\pi'(m_1)} 
\end{align*}
Now for $m \in (m_0,m_1)$ we have that
\begin{align*}
\frac{d z^*(m_1)}{d R}  + \frac{d m_1}{d R} & =\frac{\partial z^*(m)}{\partial R}  + g(-z^*(m_1))  \frac{d m_1}{d R}  - \int_{m}^{m_1} g'(-z^*(x)) \frac{\partial z^*(x)}{\partial R}  dx  & & m \in (m_0,m_1) \\
\frac{1}{(\pi(m_1)-L+c)} \frac{ g(-z^*(m_1))}{g'(-z^*(m_1))}  & = \int_{m}^{m_1} g'(-z^*(x)) \frac{\partial z^*(x)}{\partial R}  dx  - \frac{\partial z^*(m)}{\partial R}  & & m \in (m_0,m_1)
\end{align*}
Let $\mathcal{L}(m) \equiv \int_{m}^{m_1} g'(-z^*(x)) \frac{d z^*(x)}{d R}  dx $ so we have the differential equation
\begin{align*}
g'(-z^*(m)) \frac{1}{(\pi(m_1)-L+c)} \frac{ g(-z^*(m_1))}{g'(-z^*(m_1))}  & =g'(-z^*(m)) \mathcal{L}(m) + \mathcal{L}'(m)  & & m \in (m_0,m_1)
\end{align*}
and has the solution
\begin{align*}
\mathcal{L}(m) & =  \frac{1}{(\pi(m_1)-L+c)} \frac{ g(-z^*(m_1))}{g'(-z^*(m_1))} \left(   1- e^{\int_m^{m_1} g'(-z^*(x)) dx}  \right)   
\end{align*}
where we used the boundary condition $\mathcal{L}(m_1)=0$. Using the definition of $\mathcal{L}(m)$ we get
\[
\frac{1}{(\pi(m_1)-L+c)} \frac{ g(-z^*(m_1))}{g'(-z^*(m_1))} \left(   1- e^{\int_m^{m_1} g'(-z^*(x)) dx}  \right)  = \int_{m}^{m_1} g'(-z^*(x)) \frac{d z^*(x)}{d R}  dx
\]
and for all $m \in (m_0,m_1)$,
\[
\frac{\partial z^*(m)}{\partial R} = - \frac{1}{(\pi(m_1)-L+c)} \frac{ g(-z^*(m_1))}{g'(-z^*(m_1))}  e^{\int_m^{m_1} g'(-z^*(x)) dx}    <0 
\]

For the state $(z^*(m_0),m_0)$ we have
\begin{align*}
\frac{d z^*(m_1)}{d R}  +  \frac{d m_1}{d R} & =\frac{d z^*(m_0)}{d R}  -(g(-z^*(m_0))-1)   \frac{d m_0}{d R} +  g(-z^*(m_1)) \frac{d m_1}{d R}  - \int_{m_0}^{m_1} g'(-z^*(x)) \frac{ d z^*(x)}{d R} dx  \\
 \frac{d z^*(m_0)}{d R}  & =  - \frac{1}{c g'(-z^*(m_0))}  <0
\end{align*}
or using $\mathcal{M}(m)$, the previous results, and some algebra
\begin{align*}
R g'(-z^*(m_1))   \frac{d m_0}{d R}   & = g(-z^*(m_1)) e^{\int_{m_0}^{m_1} g'(-z^*(x)) dx} -1 \\
& > g(-z^*(m_1))  -1 >0  
\end{align*}

Now consider the states $m <m_0$ for which we have
\[
z^*(m_0)+m_0 = z^*(m)+m
\]
which implies
\begin{align*}
\frac{ d z^*(m)}{dR} & = \frac{ d z^*(m_0)}{dR}+\frac{ d m_0}{dR} \\
R \frac{ d z^*(m)}{dR} & =\frac{g(-z^*(m_1)) e^{\int_{m_0}^{m_1} g'(-z^*(x)) dx}-1}{ g'(-z^*(m_1)) }  - \frac{ g(-z^*(m_0))-1}{ g'(-z^*(m_0))} 
\end{align*}
Note that
\[
 \frac{ d \left( \frac{g(-z^*(m_1)) }{ g'(-z^*(m_1)) } \right)}{d z^*(m_1)}>0 
 \]
which implies that
\begin{align*}
R \frac{ d z^*(m)}{dR} & > \frac{g(-z^*(m_0)) e^{\int_{m_0}^{m_1} g'(-z^*(x)) dx}-1}{ g'(-z^*(m_0)) }  - \frac{ g(-z^*(m_0))-1}{ g'(-z^*(m_0))} \\
& > \frac{g(-z^*(m_0)) -1}{ g'(-z^*(m_0)) }  - \frac{ g(-z^*(m_0))-1}{ g'(-z^*(m_0))} =0
\end{align*}

\end{proof}

\medskip

\begin{claim} \label{Claim comparative results b}
The effects of $L$ are given by
\[
\frac{d m_0}{d L}   =- \frac{ 1}{g(-z^*(m_1))-1}  \left( \frac{d z^*(m_1)}{d L}  -(g(-z^*(m_1))-1) \frac{d m_1}{d L} \right) >0
\]
and
 \[
 \frac{ d z^*(m)}{d L}  < 0  \quad  \text{ if } m > m_1  \quad \text{ and } \quad  \frac{ d z^*(m)}{d L}  > 0  \quad  \text{ if } m < m_1
 \]
 and
 \[
 \frac{ d z^*(m_0)}{d L}  = 0
 \]
while
 $\frac{ d z^*(m_1)}{d L} $ and $ \frac{d m_1}{d L} $ are not determined.

\end{claim}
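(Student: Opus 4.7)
The plan is to follow the same walk-backward-along-the-free-boundary blueprint used for Claim \ref{Claim comparative results}, keeping in mind the crucial structural difference: $L$ enters \eqref{system to solve} \emph{directly} in two places (through $\pi(x)-L+c$ in the final-stage integrand and through $\pi(m_1)-L$ in the $m_1$-matching equation), while it does \emph{not} appear in the intermediate-stage identity nor in the cutoff equation $c = R/(g(-z^*(m_0))-1)$ that pins down $z^*(m_0)$. This asymmetry is exactly what yields $\frac{dz^*(m_0)}{dL}=0$ for free and forces the interesting sign analysis to live in the final stage.

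For $m \in (m_1, \overline{q})$, differentiating the first equation in \eqref{system to solve} with respect to $L$ yields a first-order linear ODE of the form $\phi'(m) + h(m)\phi(m) = J(m)$, where $\phi(m) \equiv \partial z^*(m)/\partial L$, $h(m)$ is the coefficient already identified in the $R$-proof, and $J(m) > 0$ is the direct-effect term proportional to $\pi'(m)/(\pi(m)-L+c)^2$. Since $z^*(\overline{q})=0$ is independent of $L$, the boundary condition $\phi(\overline{q})=0$ combined with the integrating-factor method yields $\phi(m) < 0$ on $(m_1,\overline{q})$. At the cutoff, differentiating the two equations that pin down $(z^*(m_1),m_1)$ produces a $2 \times 2$ linear system whose determinant is sign-indefinite, which is precisely why $\frac{dz^*(m_1)}{dL}$ and $\frac{dm_1}{dL}$ remain individually indeterminate. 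On $(m_0,m_1)$, since $L$ only enters through the endpoint at $m_1$, the corresponding identity produces a \emph{homogeneous} ODE for $\phi$, already solved in the $R$-proof to give $\phi(m) \propto g(-z^*(m))-1$. Finally, $c = R/(g(-z^*(m_0))-1)$ is free of $L$, so $\frac{dz^*(m_0)}{dL}=0$ at once, and substituting into the matching identity at $m_0$ together with $g(-z^*(m_0))-1 = R/c$ delivers the stated formula for $\frac{dm_0}{dL}$. For $m<m_0$, the linear relation $z^*(m)+m=z^*(m_0)+m_0$ propagates this as $\frac{dz^*(m)}{dL}=\frac{dm_0}{dL}$.

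The hard part is establishing $\frac{dm_0}{dL} > 0$, equivalently showing that the combination $\frac{dz^*(m_1)}{dL}-(g(-z^*(m_1))-1)\frac{dm_1}{dL}$ is negative despite its two constituents being indeterminate. I plan to do this by exploiting the continuity of $z^*$ at $m_1$ (Lemma \ref{Continuity of z}) and the chain-rule identity $\frac{dz^*(m_1)}{dL} = \phi(m_1^{\pm}) + \frac{\partial z^*(m_1^{\pm})}{\partial m}\frac{dm_1}{dL}$, matching the two one-sided limits across the non-smooth cutoff and using $\phi(m_1^{+}) < 0$ from the final-stage analysis to sign the relevant combination. The resulting positivity $\frac{dm_0}{dL} > 0$ then propagates into $\frac{dz^*(m)}{dL}>0$ on the intermediate and initial stages via the homogeneous ODE and the linear propagation identity, respectively, completing the full sign pattern asserted in the claim.
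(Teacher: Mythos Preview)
Your proposal takes essentially the same route as the paper's proof: solve the inhomogeneous linear ODE for $\phi=\partial_L z^*$ on $(m_1,\overline{q})$ with terminal condition $\phi(\overline{q})=0$ to obtain $\phi<0$ there; differentiate the two $m_1$-equations to get an indeterminate $2\times2$ system; propagate through $(m_0,m_1)$ via the homogeneous equation (since $L$ enters only through the endpoint data at $m_1$); read off $\tfrac{dz^*(m_0)}{dL}=0$ from the $L$-free cutoff condition; and finish on $[0,m_0)$ via $z^*(m)+m=z^*(m_0)+m_0$. The paper implements precisely this, packaging the two ODEs with auxiliary integrals $\mathcal{H}(m)$ and $\mathcal{M}(m)$ and using the same limit-at-$m_1$ identification that your chain-rule matching encodes.

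One correction to your final sentence: the causality on the intermediate stage runs from $m_1$ down to $m_0$, not the other way. The homogeneous ODE on $(m_0,m_1)$ is anchored by $\phi(m_1^-)=\tfrac{dz^*(m_1)}{dL}-(g(-z^*(m_1))-1)\tfrac{dm_1}{dL}<0$, and the paper's solution reads $\tfrac{\partial z^*(m)}{\partial L}=\tfrac{g(-z^*(m))-1}{g(-z^*(m_1))-1}\,\phi(m_1^-)<0$ there; it is this negativity, together with $\tfrac{dz^*(m_0)}{dL}=0$ and $g(-z^*(m_0))-1=R/c$, that \emph{produces} $\tfrac{dm_0}{dL}>0$ at the lower endpoint. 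The upward shift $\tfrac{dz^*(m)}{dL}>0$ holds only on $[0,m_0)$, via $\tfrac{dz^*(m)}{dL}=\tfrac{dm_0}{dL}$. (The claim as stated says ``$>0$ if $m<m_1$'', but the paper's own proof and the main-text discussion of Figure~\ref{picture: comparative 2} give a \emph{downward} shift on $(m_0,m_1)$; your closing sign should follow the proof, not the statement.)
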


\begin{proof}[Proof of Claim \ref{Claim comparative results b}]

For the effects of $L$, let's focus first on $m >m_1$ where we have
\begin{align*}
0 & = \frac{ \partial z^*(m)}{\partial L} + \int_m^{\overline{q}}  \frac{\pi'(x)}{(\pi(x)-L+c)^2} \frac{g(-z^*(x))}{g'(-z^*(x))} dx - \mathcal{H}(m)  
\end{align*}
where
\[
\mathcal{H}(m) \equiv \int_m^{\overline{q}}  h(x) \frac{ \partial z^*(x)}{\partial L} dx
\]
and 
\[
h(m) \equiv \frac{\pi'(m)}{\pi(m)-L+c}  \frac{g(-z^*(m))}{g'(-z^*(m))} \left( \frac{g'(-z^*(m))}{g(-z^*(m))}-\frac{g''(-z^*(m))}{g'(-z^*(m))} \right) <0
\]
Using the function $\mathcal{H}(m)$ we get
\begin{align*}
\mathcal{H}'(m) +h(m) \mathcal{H}(m) & =  h(m) \int_m^{\overline{q}}  \frac{\pi'(x)}{(\pi(x)-L+c)^2} \frac{g(-z^*(x))}{g'(-z^*(x))} dx  \\
\frac{d \left(\mathcal{H}(m) e^{-\int_m^{\overline{q}} h(x) dx } \right)}{d m} & =  \frac{d \left( e^{-\int_m^{\overline{q}} h(x) dx } \right) }{dm} \int_m^{\overline{q}}  \frac{\pi'(x)}{(\pi(x)-L+c)^2} \frac{g(-z^*(x))}{g'(-z^*(x))} dx  \\
\mathcal{H}(m) 
 & =      \int_m^{\overline{q}}  \left( 1 - e^{\int_m^y h(x) dx } \right) \frac{\pi'(y)}{(\pi(y)-L+c)^2} \frac{g(-z^*(y))}{g'(-z^*(y))} dy  
 \end{align*}
where we used that $\mathcal{H}(\overline{q})=0$. Using the definition of $ \mathcal{H}(m)$ we get
\begin{align*}
 - h(m) \frac{ \partial z^*(m)}{\partial L} & = \mathcal{H}'(m)   \\
\frac{ \partial z^*(m)}{\partial L} & = - \int_m^{\overline{q}}  e^{\int_m^y h(x) dx }  \frac{\pi'(y)}{(\pi(y)-L+c)^2} \frac{g(-z^*(y))}{g'(-z^*(y))} dy  <0
\end{align*}
for all $m >m_1$ and note that this implies that
\begin{align} \label{Intermediate for compa}
0 & < \mathcal{H}(m) < \int_m^{\overline{q}}  \frac{\pi'(x)}{(\pi(x)-L+c)^2} \frac{g(-z^*(x))}{g'(-z^*(x))} dx 
 \end{align}
for all $m >m_1$. In particular, we have that 
\begin{equation} \label{Important inequality}
\frac{ \partial z^*(m_1)}{\partial L}   <0
\end{equation}

Now consider the system to determine $(z^*(m_1),m_1)$ which is given by
\begin{align*}
\overline{q} & = z^*(m_1)+ m_1+ \int_{m_1}^{\overline{q}}  \frac{\pi'(x)}{\pi(x)-L+c} \frac{g(-z^*(x))}{g'(-z^*(x))} dx \nonumber\\
\pi(m_1)-L+ c & =  \frac{R}{g(-z^*(m_1))-1}  
\end{align*}
We first use the first line and \eqref{Intermediate for compa} to prove an intermediate result
\small
\begin{align*}
0 & =  \frac{ d z^*(m_1)}{d L} - \left(  \frac{\pi'(m_1)}{\pi(m_1)-L+c} \frac{g(-z^*(m_1))}{g'(-z^*(m_1))}  -1 \right)  \frac{ d m_1}{d L} + \int_{m_1}^{\overline{q}}  \frac{\pi'(x)}{(\pi(x)-L+c)^2} \frac{g(-z^*(x))}{g'(-z^*(x))} dx - \mathcal{H}(m_1) \\
0 & >  \frac{ d z^*(m_1)}{d L} - \left(  \frac{\pi'(m_1)}{\pi(m_1)-L+c} \frac{g(-z^*(m_1))}{g'(-z^*(m_1))}  -1 \right)  \frac{ d m_1}{d L} \\
0 & >\frac{g'(-z^*(m_1))}{g(-z^*(m_1) } \left(  \frac{ d z^*(m_1)}{d L} -(g(-z^*(m_1)-1) \frac{ d m_1}{d L} \right) - \left(  \frac{\pi'(m_1)}{\pi(m_1)-L+c}   -g'(-z^*(m_1)) \right)  \frac{ d m_1}{d L} 
\end{align*}
\normalsize
To get the point result we differentiate the above system to get
\small
\begin{align} \label{Useful system a}
 \frac{ d z^*(m_1)}{d L} & = \left(  \frac{\pi'(m_1)}{\pi(m_1)-L+c} \frac{g(-z^*(m_1))}{g'(-z^*(m_1))}-1 \right) \frac{ d m_1}{d L}  -  \int_{m_1}^{\overline{q}}   e^{\int_{m_1}^y h(x) dx } \frac{\pi'(y)}{(\pi(y)-L+c)^2} \frac{g(-z^*(y))}{g'(-z^*(y))} dy \\
  \frac{ d z^*(m_1)}{d L}  & =  \frac{\pi'(m_1) }{\pi(m_1)-L+c}  \frac{g(-z^*(m_1))-1}{g'(-z^*(m_1))} \frac{ d m_1}{d L}  - \frac{1}{g'(-z^*(m_1))} \frac{g(-z^*(m_1))-1}{\pi(m_1)-L+c} \nonumber
\end{align}
\normalsize
which imply that
\small
\begin{align} \label{for ineq a}
\frac{ \frac{\pi'(m_1)}{\pi(m_1)-L+c} - g'(-z^*(m_1)) }{g'(-z^*(m_1))}  \frac{ d m_1}{d L}  & =  \int_{m_1}^{\overline{q}}   e^{\int_{m_1}^y h(x) dx } \frac{\pi'(y)}{(\pi(y)-L+c)^2} \frac{g(-z^*(y))}{g'(-z^*(y))} dy  \\
& \quad \quad \quad      - \frac{g(-z^*(m_1))-1}{\pi(m_1)-L+c} \frac{}{g'(-z^*(m_1))}  \nonumber
\end{align}
\normalsize
and
\small
\begin{align} \label{for ineq b} 
\frac{ \frac{ d z^*(m_1)}{d L}-(g(-z^*(m_1)-1) \frac{ d m_1}{d L}}{g(-z^*(m_1))-1}  & = \int_{m_1}^{\overline{q}}   e^{\int_{m_1}^y h(x) dx } \frac{\pi'(y)}{(\pi(y)-L+c)^2} \frac{g(-z^*(y))}{g'(-z^*(y))} dy   \\
& \quad \quad   - \frac{g(-z^*(m_1))-1}{g'(-z^*(m_1))} \frac{1}{\pi(m_1)-L+c}   \frac{g(-z^*(m_1))}{g(-z^*(m_1))-1}  \nonumber
\end{align} 
\normalsize
or after some algebra
\small
\begin{align*}
 \left( \frac{ d z^*(m_1)}{d L} -(g(-z^*(m_1)-1) \frac{ d m_1}{d L} \right) & \frac{g'(-z^*(m_1))}{g(-z^*(m_1))} = \left(  \frac{\pi'(m_1)}{\pi(m_1)-L+c} -g'(-z^*(m_1)) \right)   \frac{ d m_1}{d L}  \\
&   - \frac{g'(-z^*(m_1))}{g(-z^*(m_1))}   \int_{m_1}^{\overline{q}}   e^{\int_{m_1}^y h(x) dx } \frac{\pi'(y)}{(\pi(y)-L+c)^2} \frac{g(-z^*(y))}{g'(-z^*(y))} dy \\
 \left( \frac{ d z^*(m_1)}{d L}  -(g(-z^*(m_1)-1) \frac{ d m_1}{d L} \right) & \frac{g'(-z^*(m_1))}{g(-z^*(m_1)-1)}  =  \left( \frac{\pi'(m_1)}{\pi(m_1)-L+c}-g'(-z^*(m_1)  \right)  \frac{ d m_1}{d L} -\frac{1}{\pi(m_1)-L+c}      
\end{align*}
\normalsize

Unfortunately, the effects on $m_1$ and $z^*(m_1)$ are not determined. 

Now consider $m \in (m_0,m_1)$ and \eqref{system to solve} gives 
\begin{equation} \label{Aux for compa}
\frac{d z^*(m_1)}{d L}  -(g(-z^*(m_1))-1)  \frac{d m_1}{d L}  =\frac{\partial z^*(m)}{\partial L} - \int_{m}^{m_1} g'(-z^*(x)) \frac{\partial z^*(x)}{\partial L} dx  
\end{equation}
Using \eqref{Important inequality} and taking limits as $m \uparrow m_1$ we get that
\[
\frac{d z^*(m_1)}{d L}  -(g(-z^*(m_1))-1)  \frac{d m_1}{d L} <0
\]
Let $\mathcal{M}(m) \equiv \int_{m}^{m_1} g'(-z^*(x)) \frac{\partial z^*(x)}{\partial L} dx$ to get
\begin{equation*} 
- g'(-z^*(m)) \left( \frac{d z^*(m_1)}{d L}  -(g(-z^*(m_1))-1)  \frac{d m_1}{d L}  \right) =\mathcal{M}'(m)  + g'(-z^*(m)) \mathcal{M}(m)  
\end{equation*}
with solution given by
\[
\mathcal{M}(m) = \left( e^{\int_m^{m_1} g'(-z^*(x)) dx} -1 \right)  \left( \frac{d z^*(m_1)}{d L}  -(g(-z^*(m_1))-1) \frac{d m_1}{d L} \right) <0
\]
and therefore for all $m \in (m_0,m_1)$
\[
\frac{ \partial z^*(m)}{\partial L}   = e^{\int_m^{m_1} g'(-z^*(x)) dx}   \left( \frac{d z^*(m_1)}{d L}  -(g(-z^*(m_1))-1) \frac{d m_1}{d L} \right) <
\]
which depends on what happens with the state $\left( z^*(m_1), m_1 \right)$. Now note that for all $m \in (m_0,m_1)$ 
\begin{align*}
\int_m^{m_1} g'(-z^*(x)) dx & = \int_m^{m_1} \frac{g'(-z^*(x))}{g(-z^*(x))-1} (g(-z^*(x))-1) dx \\
& = -\int_m^{m_1} \frac{\partial \ln \left( g(-z^*(x))-1 \right)}{\partial z^*(x)} \frac{\partial z^*(x)}{\partial x} dx \\
& = - \left(  \ln \left( g(-z^*(x))-1 \right) \right)_m^{m_1} \\
& =  \ln \left( \frac{ g(-z^*(m))-1}{g(-z^*(m_1))-1} \right) 
\end{align*}
and therefore
\[
e^{\int_m^{m_1} g'(-z^*(x)) dx} = e^{\ln \left( \frac{ g(-z^*(m))-1}{g(-z^*(m_1))-1} \right)} = \frac{ g(-z^*(m))-1}{g(-z^*(m_1))-1}
\]
which implies that
\begin{align*}
\frac{ \partial z^*(m)}{\partial L}   & = \frac{ g(-z^*(m))-1}{g(-z^*(m_1))-1}   \left( \frac{d z^*(m_1)}{d L}  -(g(-z^*(m_1))-1) \frac{d m_1}{d L} \right) <0 \\
\mathcal{M}(m) & = \frac{ g(-z^*(m))-g(-z^*(m_1))}{g(-z^*(m_1))-1}  \left( \frac{d z^*(m_1)}{d L}  -(g(-z^*(m_1))-1) \frac{d m_1}{d L} \right)
\end{align*}

Consider now the state $\left( z^*(m_0), m_0 \right)$ and noting that $\frac{d z^*(m_0)}{d L} =0$ we have
\begin{align*} 
(g(-z^*(m_0))-1)\frac{d m_0}{d L}   & =- \int_{m_0}^{m_1} g'(-z^*(x)) \frac{\partial z^*(x)}{\partial L} dx - \left( \frac{d z^*(m_1)}{d L}  -(g(-z^*(m_1))-1)  \frac{d m_1}{d L} \right) \\
\frac{R}{c} \frac{d m_0}{d L} & =- \mathcal{M}(m_0)  - \left( \frac{d z^*(m_1)}{d L}  -(g(-z^*(m_1))-1)  \frac{d m_1}{d L} \right) \\
\frac{R}{c} \frac{d m_0}{d L}  & =- \frac{ g(-z^*(m_0))-1}{g(-z^*(m_1))-1}  \left( \frac{d z^*(m_1)}{d L}  -(g(-z^*(m_1))-1) \frac{d m_1}{d L} \right) > 0
\end{align*}
 
It is easy to see that for $m<m_0$ we have that
\[
\frac{dm_0}{d L}  = \frac{dz^*(m)}{d L}
\]

\end{proof}

From Claim \ref{Claim comparative results}, we have that 
\begin{align*}
\frac{d m_0}{d R} & = \frac{g(-z^*(m_1))}{R g'(-z^*(m_1))} e^{\int_{m_0}^{m_1} g'(-z^*(x)) dx} -\frac{1}{R g'(-z^*(m_1))} >0\\
\frac{d z^*(m_0)}{d R} &=- \frac{1}{c g'(-z^*(m_0))} <0
\end{align*}
and from Claim \ref{Claim comparative results b}, we have
\begin{align*}
\frac{d m_0}{d R} & =-e^{\int_{m_0}^{m_1} g'(-z^*(x)) dx} \frac{c}{R} \left( \frac{d z^*(m_1)}{d L}  -(g(-z^*(m_1))-1)  \frac{d m_1}{d L}  \right )    \\
\frac{d z^*(m_0)}{d R} &=0
\end{align*}
and using $\frac{d m_0}{d R} +\frac{d z^*(m_0)}{d R}=\frac{d z^*(0)}{d R}$ we get the result by replacing in \eqref{Compa for analysis}.

Finally, from Lemma \ref{Lemma comparative results c} we have that $\frac{d z^*(m_0)}{d \overline{q}}=0< \frac{d z^*(0)}{d \overline{q}}$ which gives the result.

\qed


\newpage

\section{The running Maximum} \label{app}

This section reproduces some of the results in \cite{book:KS1991}  and \cite{book:privault2022}, in particular Chapter 10 from \cite{book:privault2022}

\subsection{Definition} As usual we define the Wiener process 
\[
B_t \sim \mathbb{N}(0,t)
\]
and an undrifted scaled brownian motion
\[
W_t \sim \mathbb{N}(0,\sigma^2 t)
\]
We let the running maximum be given by
\[
B^*_t \equiv \max_{s \leq t} \left\{ \: B(s) \: \right\}
\]
Analogously
\[
W^*_t \equiv \max_{s \leq t} \left\{ \: W(s) \: \right\}
\]
Note that $W^*_t = \sigma B^*_t$

We define analogously a brownian motion with drift $\mu>0$ and variance $\sigma$ as
\[
X_t = \mu t + W^*_t 
\]
and we let its running maximum be
\[
X^*_t \equiv \max_{s \leq t} \left\{ \: X(s) \: \right\}
\]

\subsection{Distribution(s)} We focus first on a standard brownian motion, by looking at the undrifted scaled brownian motion $W^*_t $ that distributes like $\mathbb{N} (0, \sigma t)$. For any $a > 0$, we want to calculate the probability function $\Pr(W^*_t \leq a)$ and the density function $\varphi{W}^*_t(x)$ of $W^*_t$ so that
\[
1-\Pr(W^*_t \leq a) = \int_a^{\infty} \varphi_{W^*_t}(x) dx
\]
We are going to apply the Reflection principle. Let
\[
\tau_a \equiv \inf \left\{ t \geq 0, W_t \geq a \right\} =\inf \left\{ t \geq 0, B_t \geq \frac{a}{\sigma} \right\} 
\]
be the first time at which $W_t =a$ and let's define the reflected process
\[
W^r_t \equiv 
\begin{cases}
    W_t & \text{ if } t \leq \tau_a \\
    2a- W_t & \text{ if } t > \tau_a
\end{cases}
\]
Note that 
\begin{align*}
\Pr(W^*_t \geq a) & = \Pr(W^*_t \geq a, W_t < a) + \Pr(W^*_t \geq a, W_t \geq a) \\
& = \Pr(W^*_t \geq a, 2a- W^r_t < a) + \Pr(W^*_t \geq a \: | \: W_t \geq a) \times \Pr( W_t \geq a) \\
& = \Pr(W^*_t \geq a, W^r_t > a) +  \Pr( W_t \geq a) \\
& = \Pr( W^r_t > a) +  \Pr( W_t \geq a)  = 2 \Pr( W_t \geq a) 
\end{align*}
and therefore
\begin{align*}
 \int_a^{\infty} \varphi_{W^*_t}(x) dx & = 2 \Pr \left( W_t \geq a \right) \\
   & = 2 \mathbb{N}_{(0, t \sigma^2 )} (a)  = 2 \int_{\frac{a}{\sqrt{t \sigma^2}}} ^ \infty \phi(x) dx 
\end{align*}
and therefore
\begin{equation} \label{pdfMaxWiener}
\varphi_{W^*_t}(a) =  \frac{2}{\sqrt{2 \pi \sigma t}} e^{-\frac{a^2}{2 \sigma^2 t}} dx
\end{equation}
Analogously
\begin{align*}
    \Pr \left( W^*_t \geq a , W_t< b \right) & = \Pr \left( W^*_t \geq a , 2a- W^*_t< b \right) \\
    & = \Pr \left( W^*_t \geq a , W^*_t > 2a- b \right) \\
    & = \Pr \left( W^*_t \geq a \: | \: W^*_t > 2a- b \right) \\
    & = \Pr \left( W^*_t > 2a- b\right) = \Pr \left( W_t > 2a- b\right) 
\end{align*}
where the third line follows by the fact that $a>b$ and the last line follows by symmetry around the new centered $a$. It follows then that
\begin{align*}
\int_a^\infty \left( \int_{-\infty}^b \varphi_{W^*_t,W_t}(x,y) dy \right) dx  & = \Pr \left( W_t > 2a- b\right) \\
& = \mathbb{N}_{(0, t \sigma^2 )} (2a-b)  = \int_{-\infty}^{\frac{2a-b}{\sqrt{t \sigma^2}}} \phi(x) dx
\end{align*}
and therefore
\[
\varphi_{W^*_t,W_t}(a,b) =  \frac{2a-b}{t \sigma^2} \frac{2}{\sqrt{2 \pi t \sigma^2} } e^{\frac{(2a-b)^2}{2 t \sigma^2}} =  \frac{2a-b}{t \sigma^2} \varphi_{W^*_t}(2a-b)
\]

\underline{We focus now on a a brownian motion with drift}
\[
X_t = \mu t + W_t \quad \quad W_t \sim \mathbbm{N} (0, \sigma^2 t)
\]
and its running maximum
\[
X^*_t \equiv \max_{s \leq t} \left\{ X_t \right\}
\]
This case is more involved since symmetry occurs around $a-\mu \tau_a$ and the application of the reflection principle is not immediate. Nevertheless, it follows by a change of measure  ( see Girsanov's Theorem, \cite{book:KS1991}, page 302). A simpler statement is directly applicable to our problem:

\begin{proposition}[Girsanov] \label{prop:Girsanov}
Let $(\mathbbm{P}, \Omega, \mathcal{F})$ be a probability space, $\mathbb{F} = \left\{ \mathcal{F})_t \right\}_{t \geq 0}$ a complete filtration, and  $W_t \sim \mathbbm{N}(0,\sigma^2 t)$ an undrfited brownian motion adapted to $\mathbb{F}$. Let $X_t =  \mu t + W_t$ be a brownian motion with drift $\mu \in \mathbb{R}$ adapted to $\mathbb{F}$. Under the probability measure $\mathbbm{Q}$ defined by the Radon-Nykodin density
\[
\frac{d \mathbbm{Q}}{d\mathbbm{P}} \equiv e^{-\frac{\mu}{\sigma^2} W_t - \left( \frac{\mu}{\sigma}\right)^2 \frac{t}{2}}
\]
the random variable $X_t$ has centered gaussian distribution $\mathbbm{N}(0,\sigma t)$.  
\end{proposition}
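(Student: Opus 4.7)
The plan is to verify the statement by a direct density computation. Since the claim concerns only the one-dimensional marginal law of $X_t$ at the fixed time $t$, rather than the joint law of the process, a full martingale-based Girsanov argument is overkill; one change of variable in a Gaussian integral suffices.

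First I would fix a Borel set $A \subseteq \mathbb{R}$ and use the Radon--Nikodym density to write
\[
\mathbbm{Q}(X_t \in A) = \mathbb{E}^{\mathbbm{P}}\!\left[\mathbf{1}_A(\mu t + W_t)\, e^{-\frac{\mu}{\sigma^2}W_t - \frac{\mu^2 t}{2\sigma^2}}\right].
\]
Inserting the density of $W_t \sim \mathbbm{N}(0,\sigma^2 t)$ under $\mathbbm{P}$ turns this into a one-dimensional Gaussian integral over $w$. Performing the substitution $x = \mu t + w$ sends the indicator to $\mathbf{1}_A(x)$ and collects three exponential contributions: the Radon--Nikodym factor, the centered Gaussian kernel for $W_t$, and the deterministic shift by $\mu t$. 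The algebraic observation is that the linear-in-$x$ piece $-\frac{\mu}{\sigma^2}(x-\mu t)$ cancels against the cross term $+\frac{\mu x}{\sigma^2}$ produced by expanding $(x-\mu t)^2/(2\sigma^2 t)$, and the constants of order $\mu^2 t$ cancel likewise, leaving in the exponent precisely $-x^2/(2\sigma^2 t)$. This yields
\[
\mathbbm{Q}(X_t \in A) = \int_A \frac{1}{\sqrt{2\pi \sigma^2 t}}\, e^{-x^2/(2\sigma^2 t)}\, dx,
\]
which identifies the law of $X_t$ under $\mathbbm{Q}$ as centered Gaussian with variance $\sigma^2 t$.

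The only preliminary to dispatch is that $\mathbbm{Q}$ is a bona fide probability measure, i.e.\ that the Radon--Nikodym density has $\mathbbm{P}$-expectation one. This follows at once from the moment generating identity $\mathbb{E}^{\mathbbm{P}}[e^{\alpha W_t}] = e^{\alpha^2 \sigma^2 t/2}$ evaluated at $\alpha = -\mu/\sigma^2$, which cancels the deterministic factor $e^{-\mu^2 t/(2\sigma^2)}$. No genuine obstacle arises; the statement is essentially the observation that the product of two Gaussian exponentials completes to a centered square. An alternative route would apply the same MGF identity at the complex argument $\alpha = i\xi - \mu/\sigma^2$ to recover the characteristic function $e^{-\xi^2 \sigma^2 t/2}$ of the target distribution directly, but the density-based approach delivers the full conclusion just as cleanly.
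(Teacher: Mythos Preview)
Your proposal is correct. The paper's own proof takes the moment generating function route you mention as an alternative: it computes $\mathbb{E}_{\mathbbm{Q}}[e^{zX_t}]$ by inserting the Radon--Nikodym density, recognizing the resulting $\mathbbm{P}$-expectation as $M_{W_t}(z-\mu/\sigma^2)$, and simplifying the exponent to $(z\sigma)^2 t/2$. Your main argument instead pushes the Radon--Nikodym factor through the Gaussian density of $W_t$ and completes the square in the exponent to land directly on the centered density of $X_t$. The two are equally elementary and rest on the same algebraic cancellation; the density computation has the small advantage of delivering the law without appealing to the fact that the MGF determines the distribution, while the MGF route is marginally cleaner to write out since it avoids carrying the normalizing constant and the indicator through the calculation.
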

\begin{proof}[Proof of Proposition \ref{prop:Girsanov}]
    Recall the definition of moment generating function of a Normal random variable $X \sim \mathbbm{N}(\hat{\mu},\hat{\sigma})$
    \[
    M(z) = \mathbbm{E} [e^{z X}] = e^{z \hat{\mu} + z^2 \frac{\hat{\sigma}^2}{2}}
    \]
    In particular, in our case we have
    \[
    M_{W_t}(z) = \mathbbm{E} [e^{z W}] = e^{z^2 \sigma ^2 \frac{ t}{2}}
    \]
    Note that
        \begin{align*}
        \mathbbm{E}_{\mathbbm{Q}} (e^{z X_t})   & =         \mathbbm{E}_{\mathbbm{P}} \left(\frac{d \mathbbm{Q}}{d\mathbbm{P}} e^{z X_t} \right) \\ 
        & =         \mathbbm{E}_{\mathbbm{P}} \left(e^{-\frac{\mu}{\sigma^2} W_t - \left( \frac{\mu}{\sigma}\right)^2 \frac{t}{2}} e^{z \mu t + z W_t} \right) \\ 
		& =   \mathbbm{E}_{\mathbbm{P}} \left(e^{\left( z  -\frac{\mu}{\sigma^2} \right) W_t } \right) e^{\left( 2z \mu  - \left( \frac{\mu}{\sigma}\right)^2 \right)\frac{t}{2} }  \\
		& =  e^{\left( z  -\frac{\mu}{\sigma^2} \right)^2 \frac{\sigma^2 t}{2}  }  e^{\left( 2z \mu  - \left( \frac{\mu}{\sigma}\right)^2 \right)\frac{t}{2} }  =  e^{\left( \left( z \sigma  -\frac{\mu}{\sigma} \right)^2+ 2z \mu  - \left( \frac{\mu}{\sigma}\right)^2 \right)\frac{t}{2} } =    e^{ \left( z \sigma \right)^2 \frac{t}{2} }   
    \end{align*}
    so $X_t$ is a brownian motion without drift measurable in the space  $(\mathbbm{Q}, \Omega, \mathcal{F})$
\end{proof}
It remains to calculate the cumulative distribution function of $X_t$ in the proper space $(\mathbbm{P}, \Omega, \mathcal{F})$. The following is an adaptation of Proposition 10.3 in \cite{book:privault2022} to our particular case.

\begin{proposition}\label{Prop:Dist of RM}
Let $X_t$ be a brownian motion with drift $\mu$ and variance $\sigma$ and let $X^*_t$ be its running maximum. The joint density function for any pair $a > \max\{b,0\}$ is given by
\begin{align*}
\varphi_{X^*_t,X_t}(a,b) & = e^{\frac{\mu}{\sigma^2} b -  \left( \frac{\mu}{\sigma} \right)^2 \frac{t}{2}}  \frac{2a-b}{\sigma^2 t} \frac{2}{\sqrt{2 \pi \sigma^2 t}} e^{-\frac{(2a-b)^2}{2\sigma^2t}} \\
& =     e^{\frac{\mu}{\sigma^2} \left(  \frac{2b-\mu t}{2} \right) }  \varphi_{W^*_t,W_t}(a,b) 
\end{align*}
\end{proposition}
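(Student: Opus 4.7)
The plan is a straightforward application of the change-of-measure result of Proposition \ref{prop:Girsanov}. Under $\mathbb{P}$ the process $X_t = \mu t + W_t$ carries a drift, so the reflection principle does not apply directly; but under the measure $\mathbb{Q}$ constructed in Proposition \ref{prop:Girsanov}, the same process $X_t$ becomes a centered Gaussian $\mathbb{N}(0,\sigma^2 t)$. Hence on $(\Omega,\mathcal{F},\mathbb{Q})$ the pair $(X^*_t,X_t)$ has the same joint law as the pair $(W^*_t,W_t)$ does under $\mathbb{P}$, and the joint density displayed just before the proposition applies with $X$ in place of $W$:
\[
\varphi^{\mathbb{Q}}_{X^*_t,X_t}(a,b) \; = \; \frac{2a-b}{\sigma^2 t}\,\frac{2}{\sqrt{2\pi\sigma^2 t}}\,e^{-\frac{(2a-b)^2}{2\sigma^2 t}}, \qquad a>\max\{b,0\}.
\]

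To return to $\mathbb{P}$, invert the Radon--Nikodym derivative of Proposition \ref{prop:Girsanov}, getting
\[
\frac{d\mathbb{P}}{d\mathbb{Q}} \; = \; e^{\frac{\mu}{\sigma^2}W_t + \frac{1}{2}\left(\frac{\mu}{\sigma}\right)^2 t}.
\]
Since $W_t = X_t - \mu t$, on the event $\{X_t=b\}$ this weight depends only on $b$ and $t$:
\[
\frac{d\mathbb{P}}{d\mathbb{Q}}\bigg|_{X_t=b} \; = \; e^{\frac{\mu}{\sigma^2}(b-\mu t) + \frac{1}{2}\left(\frac{\mu}{\sigma}\right)^2 t} \; = \; e^{\frac{\mu}{\sigma^2} b - \frac{1}{2}\left(\frac{\mu}{\sigma}\right)^2 t}.
\]

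Combining these ingredients, for any bounded Borel $f:\mathbb{R}^2\to\mathbb{R}$,
\[
\mathbb{E}_{\mathbb{P}}\bigl[f(X^*_t,X_t)\bigr] \; = \; \mathbb{E}_{\mathbb{Q}}\!\left[\frac{d\mathbb{P}}{d\mathbb{Q}}\,f(X^*_t,X_t)\right] \; = \; \iint f(a,b)\,e^{\frac{\mu}{\sigma^2} b - \frac{1}{2}\left(\frac{\mu}{\sigma}\right)^2 t}\,\varphi^{\mathbb{Q}}_{X^*_t,X_t}(a,b)\,da\,db,
\]
so the integrand is precisely $\varphi_{X^*_t,X_t}^{\mathbb{P}}(a,b)$, which gives the claimed formula and in particular the compact identity $\varphi_{X^*_t,X_t}(a,b) = e^{\frac{\mu}{\sigma^2}(b-\mu t/2)}\varphi_{W^*_t,W_t}(a,b)$ after collecting exponents.

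The only genuinely delicate step is verifying that the running-maximum functional $\omega\mapsto \sup_{s\leq t} X_s(\omega)$ is invariant under the change of measure, so that the $\mathbb{Q}$-law of the pair $(X^*_t,X_t)$ equals the $\mathbb{P}$-law of $(W^*_t,W_t)$; this however is automatic because $\mathbb{Q}$ and $\mathbb{P}$ are equivalent on $\mathcal{F}_t$ (so they agree on which paths achieve which suprema) and $X^*_t$ is a pathwise functional of $X$. Everything else is a calculation.
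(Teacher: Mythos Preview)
Your proposal is correct and follows essentially the same route as the paper: apply Proposition~\ref{prop:Girsanov} so that under $\mathbb{Q}$ the process $X$ is driftless and the known joint density for $(W^*_t,W_t)$ applies, then invert the Radon--Nikodym derivative and substitute $W_t=X_t-\mu t$ to recover the $\mathbb{P}$-density. The only cosmetic difference is that the paper works with the joint CDF $\Pr(X^*_t\le a,\,X_t\le b)$ and differentiates at the end, whereas you identify the density directly via a generic test function $f$; the computations and the substance are the same.
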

\begin{proof}[Proof of Proposition \ref{Prop:Dist of RM}]
    Let $W_t$ be a $(0,\sigma^2t)$ brownian motion and note that we are interested in $X_t= \mu t +W_t$. For any $a \geq \max \{b,0\}$, and using the same notation as in Proposition \ref{prop:Girsanov} we have
\begin{align*}
\Pr(X^*_t \leq a, X_t \leq b ) & = \mathbbm{E}_{\mathbbm{P}} \left[ \mathbbm{1} (X^*_t \leq a, X_t \leq b) \right] \\
& = \mathbbm{E}_{\mathbbm{Q}} \left[  \frac{d\mathbbm{P}}{d \mathbbm{Q}} \mathbbm{1} (X^*_t \leq a, X_t \leq b) \right] \\
& = \mathbbm{E}_{\mathbbm{Q}} \left[  e^{\frac{\mu}{\sigma^2} W_t + \left( \frac{\mu}{\sigma}\right)^2 \frac{t}{2}} \mathbbm{1} (X^*_t \leq a, X_t \leq b) \right] \\
& = \mathbbm{E}_{\mathbbm{Q}} \left[  e^{\frac{\mu}{\sigma^2} (X_t - \mu t) + \left( \frac{\mu}{\sigma}\right)^2 \frac{t}{2}} \mathbbm{1} (X^*_t \leq a, X_t \leq b) \right] \\
& = \mathbbm{E}_{\mathbbm{Q}} \left[  e^{\frac{\mu}{\sigma^2} X_t - \left( \frac{\mu}{\sigma}\right)^2 \frac{t}{2}} \mathbbm{1} (X^*_t \leq a, X_t \leq b) \right] \\
& = \int_0^a \left[ \int_{-\infty}^b  \mathbbm{1}[y < x]e^{\frac{\mu}{\sigma^2} y -  \left( \frac{\mu}{\sigma} \right)^2 \frac{t}{2}}  \frac{2x-y}{\sigma^2 t} \frac{2}{\sqrt{2 \pi \sigma^2 t}} e^{-\frac{(2x-y)^2}{2\sigma^2t}} dy \right]  dx 
\end{align*}
Taking the derivative with respect to both $a$ and $b$ we get the result.
\end{proof}

It is immediate to see that for any $a > \max\{b,0\}$
\begin{align} \label{PDFMaxWiener*1}
    \varphi_{X^*_t\: | \: X_t=b}(a) & = 
    2   \frac{2a-b}{\sigma^2 t} e^{-\frac{2a(a-b)}{\sigma^2t}}  
\end{align}
and integrating over $b$ we get
\begin{align}  \label{PDFMaxWiener*2}
\varphi_{X^*_t}(a) & = \int_{-\infty}^ a \varphi_{X^*_t,X_t}(a,b) db \\
& = \int_{-\infty}^ a  \frac{2a-b}{\sigma^2 t} \frac{2}{\sqrt{2 \pi \sigma^2 t}} e^{-\frac{(2a-b)^2-2 b \mu t+(\mu t) ^2 }{2\sigma^2t}} db \nonumber \\
& = e^{\frac{2a \mu }{\sigma^2}} \int_{-\infty}^ a  \frac{2a-b}{\sigma^2 t} \frac{2}{\sqrt{2 \pi \sigma^2 t}} e^{-\frac{(2a+ \mu t-b)^2}{2\sigma^2t}} db \nonumber \\
& = - 2 e^{\frac{2a \mu }{\sigma^2}} \int_{\infty}^{\frac{a+ \mu t}{\sqrt{\sigma^2t}}}  \frac{x \sqrt{\sigma^2t} - \mu t}{\sigma^2 t} \phi(x) dx \nonumber \\
& =  \frac{2}{\sqrt{\sigma^2t}} e^{\frac{2a \mu }{\sigma^2}} \int_{\infty}^{\frac{a+ \mu t}{\sqrt{\sigma^2t}}}  \phi'(x) dx 
+  2 e^{\frac{2a \mu }{\sigma^2}} \frac{ \mu}{\sigma^2} \int_{\infty}^{\frac{a+ \mu t}{\sqrt{\sigma^2t}}}   \phi(x) dx \nonumber \\
& =  \frac{2}{\sqrt{\sigma^2t}}  \phi \left(\frac{a - \mu t}{\sqrt{\sigma^2t}} \right)  
+   \frac{ 2 \mu}{\sigma^2}e^{\frac{2a \mu }{\sigma^2}}  \Phi \left( \frac{a+ \mu t}{\sqrt{\sigma^2t}} \right) \nonumber
\end{align}
Finally, we get that
\begin{align} \label{PDFMaxWiener*3}
\Pr(X^*_t \leq m) & = \int_0^m  \varphi_{X^*_t}(a) da \\
& =   \frac{2}{\sqrt{\sigma^2 t}} \int_0^m \phi \left( {\frac{a- \mu t}{\sqrt{\sigma^2 t}}} \right)  da -2 \frac{\mu  } {\sigma^2} \int_0^m   e^{2 \frac{\mu a}{\sigma^2} }  \Phi\left(-\frac{a+\mu t}{\sqrt{\sigma^2 t}} \right) da \nonumber \\
& =  2 \left( \Phi \left(  \frac{m- \mu t}{\sqrt{\sigma^2 t}} \right) - \Phi \left(  - \frac{ \mu t}{\sqrt{\sigma^2 t}} \right) \right)  -  \left[ \int_0^m   \frac{2 \mu  } {\sigma^2} e^{2 \frac{\mu a}{\sigma^2} }  \Phi\left(-\frac{a+\mu t}{\sqrt{\sigma^2 t}} \right) da \right] \nonumber \\
& =  2  \Phi \left(  \frac{m- \mu t}{\sqrt{\sigma^2 t}} \right) - \Phi \left(  - \frac{ \mu t}{\sqrt{\sigma^2 t}} \right)   - 
\left[ 
   e^{2 \frac{\mu m}{\sigma^2} }  \Phi\left(-\frac{m+\mu t}{\sqrt{\sigma^2 t}} \right)  
 +  \frac{1}{\sqrt{\sigma^2 t}} \int_0^m   \phi\left(\frac{a-\mu t}{\sqrt{\sigma^2 t}} \right) da \right] \nonumber \\
& =   \Phi \left(  \frac{m- \mu t}{\sqrt{\sigma^2 t}} \right)   -    e^{2 \frac{\mu m}{\sigma^2} }  \Phi\left(-\frac{m+\mu t}{\sqrt{\sigma^2 t}} \right)  \nonumber
\end{align}

The following proposition summarizes the issues arising from the stochastic process $\sup_{s \leq t} X(s) = X^*_t$ when $X(s)$ is a $(\mu,\sigma)$ brownian motion.

\begin{proposition} \label{Prop: sup issues}
Let $X(s)$ be a $(\mu,\sigma)$ brownian motion adapted to the filtered space $(\mathbb{P}, \Omega, \mathcal{F}, \mathbb{F})$, then the supremum of it has nil quadratic variation and
\[
\lim_{dt \rightarrow 0} \frac{dX^*_t}{dt} \rightarrow \infty
\]
\end{proposition}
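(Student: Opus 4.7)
The statement bundles two claims about the running maximum $X^*_t = \sup_{s \le t} X_s$ of the $(\mu,\sigma)$ Brownian motion $X_t$: (i) its quadratic variation is identically zero, and (ii) the conditional instantaneous rate of increase at points where $X_t = X^*_t$ is infinite, which is the content used to derive the reflection condition in Proposition~\ref{Prop: Ito's lemma}.

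\textbf{Step 1 (Quadratic variation).} First I would observe that $t \mapsto X^*_t$ is almost surely continuous, since it is the running supremum of the continuous function $s \mapsto X_s$ on $[0,t]$, and that by construction $X^*_t$ is non-decreasing in $t$. Any non-decreasing continuous function on a compact interval has finite total variation equal to the increment $X^*_t - X^*_0$. A classical result in stochastic calculus (see \cite{book:KS1991}, Chapter 1) states that any continuous adapted process of finite variation has zero quadratic variation. Applying this to $X^*$ gives $\langle X^* \rangle_t = 0$ almost surely, which is the first claim.

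\textbf{Step 2 (Infinite local rate at a maximum).} The precise interpretation of $\lim_{dt \to 0} \mathbb{E}[dX^*_t]/dt = \infty$ needed in the main proof is conditional on being at a current maximum, i.e.\ on the event $\{X_t = X^*_t\}$, equivalently $\{Z_t = 0\}$. By the strong Markov property applied at the time $t$, it suffices to prove the corresponding statement at $t=0$ with $X_0 = X^*_0 = 0$: namely, that $\mathbb{E}[X^*_h]/h \to \infty$ as $h \downarrow 0$. For this I would use the closed-form density from equation \eqref{PDFMaxWiener*2} or, more directly, the reflection-principle calculation: the drift contributes only a $\mu h$ term whose expectation is negligible for small $h$, while the undrifted part $W^*_h$ satisfies $W^*_h \stackrel{d}{=} |W_h|$, so
\begin{equation*}
\mathbb{E}[W^*_h] \;=\; \sigma \sqrt{\tfrac{2h}{\pi}}.
\end{equation*}
Combining with the drift bound $|\mathbb{E}[X^*_h - W^*_h]| \le |\mu| h$ yields $\mathbb{E}[X^*_h] = \sigma \sqrt{2h/\pi} + O(h)$, and therefore
\begin{equation*}
\frac{\mathbb{E}[X^*_h]}{h} \;=\; \frac{\sigma}{\sqrt{h}}\sqrt{\tfrac{2}{\pi}} + O(1) \;\longrightarrow\; \infty \quad \text{as } h \downarrow 0.
\end{equation*}
This delivers the second claim. (Alternatively, one can integrate the explicit density \eqref{PDFMaxWiener*2} directly: only the $(2/\sqrt{\sigma^2 h}) \phi((a - \mu h)/\sqrt{\sigma^2 h})$ term contributes the leading $\sqrt{h}$ behaviour, while the exponential tilt term contributes only $O(h)$.)

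\textbf{Main obstacle.} The subtle point is interpretational rather than technical: $X^*_t$ grows only on the zero-Lebesgue-measure set $\{s : X_s = X^*_s\}$ (whose Hausdorff dimension is $1/2$), so the ``derivative'' $dX^*_t/dt$ is singular everywhere. The correct reading of the second claim is the one used in the application, namely that the conditional expected increment $\mathbb{E}[X^*_{t+h} - X^*_t \mid X_t = X^*_t]$ is of order $\sqrt{h}$, not $h$, and hence the relevant limit diverges. The computation itself is short; making sure the interpretation matches the way the result is invoked after equation \eqref{Aux b} in the proof of Proposition~\ref{Prop: Ito's lemma} is the key care needed.
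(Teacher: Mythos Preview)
Your proof is correct. For the quadratic-variation part you invoke the general fact that continuous finite-variation processes have zero quadratic variation, whereas the paper writes out the partition argument explicitly; these are the same idea at different levels of abstraction. For the second part you take a slightly different route: the paper uses the closed-form CDF \eqref{PDFMaxWiener*3} to show $\Pr(X^*_{dt}>m\,dt)\to 1$ (indeed $\Pr(X^*_{dt}>m\sqrt{dt})\to 2\Phi(-m/\sigma)$) and then applies a Markov-type lower bound $\mathbb{E}[X^*_{dt}/dt]\ge m\Pr(X^*_{dt}/dt>m)$, while you compute the expectation directly via $W^*_h\stackrel{d}{=}|W_h|$ together with the pointwise bound $|X^*_h-W^*_h|\le|\mu|h$. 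Your argument is a bit more direct and avoids the explicit density; the paper's tail-probability route has the minor advantage of also giving the $\sqrt{dt}$ scaling of the distribution, not just of the mean. Both are short and valid, and your interpretation of the statement (conditional on $X_t=X^*_t$, via the strong Markov property) matches exactly how the result is used after equation \eqref{Aux b}.
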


\begin{proof}[Proof of Proposition \ref{Prop: sup issues}]
For any for any partition $\mathcal{P} = \left\{\underline{t}= t_0<t_1<....t_N=\underline{t} \right\}$ and $|\mathcal{P} | = \max_k (t_{k+1}-t_k)$
\begin{align*}
 \langle X^*_t \rangle_{\underline{t}}^{\underline{t} } & = \lim_{|\mathcal{P} |  \rightarrow 0} \sum_0^{N-1} (X^*(t_{j+1})-X^*(t_{j}))^2  \\
  & \leq \lim_{|\mathcal{P} |  \rightarrow 0} \max_k \{ (X^*(t_{k+1})-X^*(t_{k}) \}\sum_0^{N-1} (X^*(t_{j+1})-X^*(t_{j}))  \\
    & \leq \lim_{|\mathcal{P} |  \rightarrow 0} \max_k \{ (X^*(t_{k+1})-X^*(t_{k}) \}  (X^*(\underline{t} )-X^*(\underline{t} ))  
\end{align*}
and since $\lim_{\eta \rightarrow \infty } \Pr (X^*(\underline{t} )-X^*(\underline{t} ) > \eta) \rightarrow 0 $ and 
$\max_k \{ (X^*(t_{k+1})-X^*(t_{k}) \} \rightarrow 0$ when $\lim_{|\mathcal{P} |  \rightarrow 0}$ we get the first part of the result.

Note now that
\[
\Pr(X^*(dt) > m dt) = 1- \Pr(X^*(dt) \leq m dt)
\]
and using \eqref{PDFMaxWiener*3} we get
\begin{align*}
\Pr(X^*(dt) > m dt) &= 1 - \Phi \left(  \frac{m dt- \mu dt}{\sqrt{\sigma^2 dt}} \right)   +    e^{2 \frac{\mu m dt}{\sigma^2} }  \Phi\left(-\frac{m dt+\mu dt}{\sqrt{\sigma^2 dt}} \right) \\
& = 1 - \Phi \left(  \frac{m- \mu}{\sqrt{\sigma^2}} \sqrt{ dt} \right)   +    e^{2 \frac{\mu m dt}{\sigma^2} }  \Phi\left(-\frac{m+\mu }{\sqrt{\sigma^2}}  \sqrt{dt} \right)  
\end{align*}
and taking limits as $dt \rightarrow 0$ we get $\Pr(X^*(dt) > m dt) = 1$. Moreover, it is easy to see that 
\begin{align*}
\Pr(X^*(dt) > m \sqrt{dt}) &= 1 - \Phi \left(  \frac{m \sqrt{dt}- \mu dt}{\sqrt{\sigma^2 dt}} \right)   +    e^{2 \frac{\mu m \sqrt{dt}}{\sigma^2} }  \Phi\left(-\frac{m \sqrt{dt}+\mu dt}{\sqrt{\sigma^2 dt}} \right) \\
& \rightarrow 1 - \Phi \left(  \frac{m}{\sqrt{\sigma^2}} \right)   +     \Phi\left(-\frac{m }{\sqrt{\sigma^2}} \right) = 2  \Phi\left(-\frac{m }{\sqrt{\sigma^2}} \right)  
\end{align*}
and since
\[
\mathbb{E} \left[ \frac{X^*(dt)}{dt} | X^*(0)=0 \right] \geq m \Pr \left[ \frac{X^*(dt)}{dt} >m | X^*(0)=0 \right] 
\]
we get that $\frac{dX^*_t}{dt}$ explodes when $X_t=X^*_t$ if $dt \rightarrow 0$.
\end{proof}

\end{document}